\newtheorem{myprop}{Proposition}[section]
\newtheorem{mylemma}{Lemma}[section]
\newtheorem{mytheorem}{Theorem}[section]
\newtheorem{myremark}{Remark}[section]
\def\XXint#1#2#3{{\setbox0=\hbox{$#1{#2#3}{\int}$}
    \vcenter{\hbox{$#2#3$}}\kern-.5\wd0}}
\def\dd{''}
\def\Op {{\rm Op}}
\def\bb {{\bf b}}
\def\bc {{\bf c}}
\def\bR {{\bf R}}
\def\bF {{\bf F}}
\def\bG {{\bf G}}
\def\bE {{\bf E}}
\def\bI {{\bf I}}
\def\bS {{\bf S}}
\def\bD {{\bf D}}
\def\bp {{\bf p}}
\def\bP {{\bf P}}
\def\bB {{\bf B}}
\def\bA {{\bf A}}
\def\bC {{\bf C}}
\def\bH {{\bf H}}
\def\cl {{\rm curl}}
\def\dv {{\rm div}}
\def\Cl {{\rm Curl}}
\def\Dv {{\rm Div}}
\def\bi{{\bf i}}
\begin{document}
\title{Mathematical theory for electromagnetic scattering resonances and field enhancement in a subwavelength annular gap}

\author{Junshan Lin$^1$, Wangtao Lu$^2$ and Hai
  Zhang$^3$} 
  \footnotetext[1]{Department of Mathematics, Auburn University,
  Auburn, AL, 36849, USA. Email: jzl0097@auburn.edu. Junshan Lin is partially supported by the NSF grant DMS-2011148.} 
  \footnotetext[2]{School of
  Mathematical Sciences, Zhejiang University, Hangzhou 310027, China. Email:
  wangtaolu@zju.edu.cn. Wangtao Lu is partially supported by NSF of Zhejiang
  Province for Distinguished Young Scholars Grant LR21A010001, NSFC Grant
  12174310, and a Key Project of Joint Funds For Regional Innovation and
  Development (U21A20425).}
\footnotetext[3]{Department of Mathematics, HKUST, Clear Water Bay, Kowloon,
  Hong Kong. Email: haizhang@ust.hk. Hai Zhang is partially supported by Hong Kong RGC grant GRF 16305419 and GRF 16304621.}
\maketitle
\begin{abstract}
This work presents a mathematical theory for electromagnetic scattering resonances in a subwavelength annular hole embedded in a metallic slab, with the annulus width $h\ll1$. The model is representative among many 3D subwavelength hole structures, which are able to induce resonant scattering of electromagnetic wave and the so-called extraordinary optical transmission. We develop a multiscale framework for the underlying scattering problem  based upon a combination of the integral equation in the exterior domain and the waveguide mode expansion inside the tiny hole. The matching of the electromagnetic field over the hole aperture leads to a sequence of decoupled infinite systems, which are used to set up the resonance conditions for the scattering problem. By performing rigorous analysis for the infinite systems and the resonance conditions, we characterize all the resonances in a bounded domain over the complex plane. It is shown that the resonances are associated with the TE and TEM waveguide modes in the annular hole, and they are close to the real axis with the imaginary parts of order ${\cal O}(h)$.  We also investigate the resonant scattering when an incident wave is present. It is proved that the electromagnetic field is amplified with order ${\cal O}(1/h)$ at the resonant frequencies that are associated with the TE modes in the annular hole. On the other hand, one particular resonance associated with the TEM mode can not be excited by a plane wave but can be excited with a near-field electric dipole source, leading to field enhancement of order ${\cal O}(1/h)$.
\end{abstract}

\section{Introduction}
Resonances play a significant role in wave interactions with subwavelength structures, due to their ability to generate unusual physical phenomena that open up  a broad possibilities in modern science and technology. One representative type of resonant subwavelength
structure is nano-holes perforated in noble metals, such as gold or silver. The device of this sort was first introduced in the seminal work \cite{ebblezwol98}, and tremendous research has been sparked since then in pursuit of more efficient resonant nano-hole devices (cf \cite{garmarebbkui10, rodrigo16} and references therein).  The most remarkable phenomenon occurs in these subwavelength devices when an electromagnetic wave is illuminated at the resonant frequencies. The corresponding transmission through the tiny holes exhibit extraordinary large values that can not be explained by the classical diffraction theory developed by Bethe
and is called extraordinary optical transmission (EOT) \cite{ebblezwol98}.
In addition, the EOT is accompanied by strong localized electromagnetic field enhancement inside the subwavelength holes and in the vicinity of hole apertures \cite{garmarebbkui10}. The capability to trigger EOT and to confine light in deep subwavelength apertures leads
to many important applications in biological and chemical sensing, optical lenses, and the design of novel optical devices, etc \cite{blanchard17, cetin2015, huang08, li17plasmonic, rodrigo16}.

The main mechanisms for the EOT and field amplification in the subwavelength hole devices are due to resonances. These include scattering resonances induced by the tiny holes and surface plasmonic resonances generated from the metallic materials \cite{garmarebbkui10}.
Significant progress has been made on the mathematical studies of resonances for two-dimensional subwavelength slit structures in the past few years. In a series of studies, we have established rigorous mathematical theories for a variety of resonances and the induced EOT  via the layer potential technique and asymptotic analysis \cite{linshizha20, linzha17, linzha18a, linzha18b, linzhang21}. The layer potential approach with the operator-based 
Gohberg–Sigal theory was previously used to investigate the resonances in a closely related subwavelength cavity problem \cite{babbontri10, bontri10}. More recently, other mathematical methods were developed to derive the resonances for the two-dimensional slit structures. These include the matched asymptotic method and the Fourier mode matching method \cite{holsch19, zhlu21}. The matched asymptotic expansion techniques have also been applied to construct the solution of the slit scattering problem in \cite{joltor06a, joltor06b, joltor08}. The generalization of the above techniques to the studies of the acoustic wave resonances in three-dimensional subwavelength holes can be found in \cite{fatima21, liazou20, luwanzho21}. We would also like to refer readers to \cite{ammari18, ammari17, ammari16, ammari15} and references therein for the mathematical studies of other type of subwavelength resonances, such as Helmholtz resonators and nanoparticles, etc.

In the previous studies of 2D subwavelength hole resonances or 3D acoustic wave resonances, the governing equations are scalar wave equations. The mathematical studies of electromagnetic resonances for the 3D subwavelength holes remains completely open.
In this paper, we aim to advance the work in this direction by investigating the electromagnetic scattering resonances for the full vector Maxwell's equations. More specifically, we consider the electromagnetic wave scattering by an annular gap, wherein the gap width is much smaller than the incident wavelength. Figure \ref{fig:problem_geometry} depicts the top view and side view of the structure, in which a coaxial waveguide is perforated through a metal slab of thickness $l$, forming an annular gap on the $x_1x_2$ plane.
The annular hole occupies the domain $G^h=R^h\times(-l/2,l/2)$, where
\begin{equation}
  \label{eq:annulus}
  R^h:=\{(x_1,x_2)\in\mathbb{R}^2:x_1=r\cos\theta,x_2=r\sin\theta, r\in(a,a(1+h)),\theta\in[0,2\pi]\}
\end{equation}
denotes the cross-sectional annulus on the $x_1x_2$ plane. In the above, $a$ and $a(1+h)$ are the inner and outer radius of the annulus. In the subsequent analysis, for clarity of presentation we shall scale the geometry of the problem such that $a=1$. The resonances when $a\neq1$ are scaled accordingly by replacing the wavenumber $k$ by $ka$ and the thickness $l$ by $l/a$.  It is assumed that  the gap width is small with $h \ll 1$. The metal region is denoted by 
$$\Omega_{\rm M}:= \big\{(x_1, x_2, x_3) \in\mathbb{R}^3: (x_1,x_2) \in\mathbb{R}^2\backslash \overline{R^h}, x_3\in(-l/2,l/2) \big\}.$$ 
In this work, we focus on the resonances induced by the tiny hole and consider the configuration when the metal is a perfect electric conductor. The studies of plasmonic resonances for real metals and their interactions with the hole resonances are avenues of future research.

\begin{figure}
    \centering
    \includegraphics[width=6.5cm]{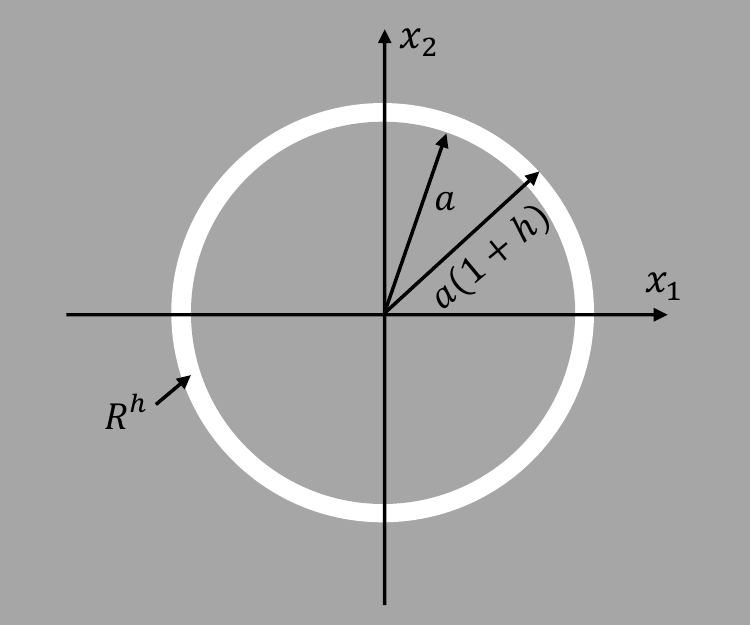} \quad
    \includegraphics[width=6.5cm]{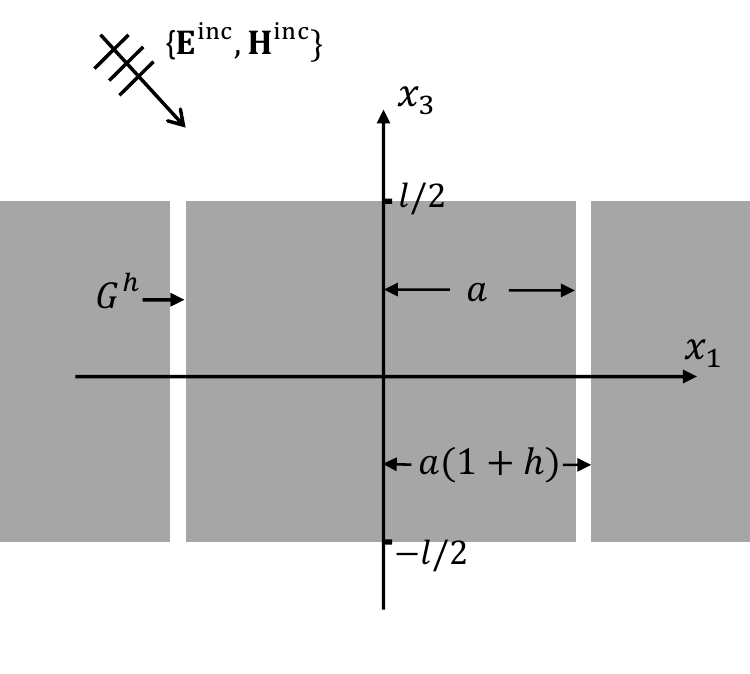}
    \caption{Top view (left) and side view (right) of the subwavelength structure. The cylindrical hole $G^h$ is perforated through the metallic slab with a thickness of $l$ and it forms an annular aperture $R^h$
    on the $x_1x_2$ plane with the inner and outer radius  $a$ and $a(1+h)$ respectively.  }
    \label{fig:problem_geometry}
\end{figure}

Let $\{ {\bf E}^{\rm inc}, {\bf H}^{\rm inc} \}$ be the incoming time-harmonic electromagnetic plane wave given by
\begin{equation}
\label{eq:pla:inc}
{\bf E}^{\rm inc} = {\bf E}^0e^{\bi k x\cdot d},\quad {\bf H}^{\rm inc} = {\bf
H}^0e^{\bi k x\cdot d},
\end{equation}
wherein $k$ is the wavenumber, $d\in\mathbb{R}^3$ is the propagation direction, and the electric and magnetic polarization vectors satisfy ${\bf E}^0\perp d$ and ${\bf H}^0=d\times {\bf E}^0$.
The total electromagnetic field after the scattering
is governed by the following Maxwell's equations:
\begin{align}
  \label{eq:E}
  \cl\ {\bf E} =& \bi k {\bf H}\quad {\rm in}\; \mathbb{R}^3\backslash\overline{\Omega_{\rm M}},\\
  \label{eq:H}
  \cl\ {\bf H} =& -\bi k {\bf E}\quad {\rm in}\; \mathbb{R}^3\backslash\overline{\Omega_{\rm M}},\\
  \label{eq:bc}
  \nu\times{\bf E} =& 0 \quad{\rm on}\; \partial\Omega_{\rm M},
\end{align}
 where $\nu$ denotes the outward unit normal pointing to the exterior of $\Omega_{\rm M}$.
Let $\{{\bf E}^{\rm ref}, {\bf H}^{\rm ref}\}$ be the reflected field above the metal in the absence of the annular hole and $h=0$. The perturbed field generated by the hole $G_h$ when $h>0$ is called scattered field, denoted by $ {\bf E}^{\rm sc} := {\bf E} - {\bf E}^{\rm inc} - {\bf E}^{\rm ref} $ and  $ {\bf H}^{\rm sc} := {\bf H} - {\bf H}^{\rm inc} - {\bf H}^{\rm ref} $. They satisfy the Silver-M\"uller radiation condition (SMC) at infinity above and below the metal (cf \cite{kirhet15}):
\begin{equation}
  \label{eq:smc}
  \lim_{\substack{|x_3|>l/2 \\ |x|\to\infty}}({\bf H}^{\rm sc}\times x - |x| {\bf E}^{\rm sc}) = 0. 
\end{equation}

The resonant phenomena for the scattering problem \eqref{eq:E} - \eqref{eq:smc} was reported and studied experimentally and numerically in \cite{baida02, baida04, hulinluoh18, park15, yoo16}. In this paper, we aim to establish the rigorous mathematical theory for the underlying resonant scattering. The goal is to quantitatively characterize the resonances and study the field enhancement at various resonant frequencies. The mathematical theory presented for this representative structure also seeks to lay the foundational framework in establishing electromagnetic resonant scattering theory for many other 3D subwavelength hole devices to be explored in the future. To this end, we first study the scattering resonances, which lie on the lower complex plane and are the poles the resolvent associated with the scattering problem. The real and imaginary parts of the scattering resonances represent the resonant frequencies and the reciprocal of the resonant magnitude respectively.
The corresponding nontrivial solutions are called quasi-normal modes \cite{dyatlov19}. Equivalently, we consider the homogeneous problem \eqref{eq:E} - \eqref{eq:smc} when the incident field ${\bf E}^{\rm inc} = {\bf H}^{\rm inc} = { \bf 0 }$. The quasi-normal modes satisfy the radiation condition \eqref{eq:smc} but they grow at infinity.  We then study the resonant scattering when the incoming wave attains the resonant frequencies.
The main contribution of this paper is as follows:
\begin{itemize}
    \item[(i)] We prove the existence of electromagnetic scattering resonances for the problem \eqref{eq:E} - \eqref{eq:smc} and present quantitative analysis for the resonances. The structure of resonances is much richer than the resonances for a 2D hole analyzed in \cite{holsch19, linzha17, zhlu21}. In more details, it is shown that the resonances are a sequence of complex numbers that are associated with the TE and TEM waveguide modes in the annular hole. We derive the asymptotic expansion of these resonances. Furthermore, it is demonstrated that the imaginary parts of the resonances attain the order ${\cal O}(h)$. The quantitative analysis of resonances is summarized in Theorems \ref{thm:even:res} and \ref{thm:odd:res}.
    
 \item[(ii)] We also analyze the electromagnetic field governed by \eqref{eq:E}-\eqref{eq:smc} when an incident plane wave is present. We prove that 
electromagnetic field is amplified by order ${\cal O}(1/h)$ at the resonant frequencies that are associated with the TE modes in the annular hole.  A particular resonance associated with the TEM mode can not be excited by the a plane wave. We prove that a near-field electric monopole can be used to excite this resonance to achieve field enhancement of order ${\cal O}(1/h)$. 
The analysis is provided in Section 5 and it explains the observed resonant phenomena through the tiny annular hole reported in \cite{baida02, baida04, hulinluoh18, yoo16}.
\end{itemize}

There are several main challenges in analysis of resonances, due to multiscale nature of the problem and the vector form of the mathematical model. In addition, as elaborated in Section 3, the solution inside the tiny hole consists of several types of waveguide modes (TE, TM and TEM modes), which are responsible for the richness of resonances for the scattering problem. Our multiscale
analysis is based upon a combination of the integral equation formulation with the mode matching method. More precisely, the electromagnetic field outside the annular hole (large-scale domain) is represented by the vector layer potentials and the wave field in the hole (small-scale domain) is expressed as a sum of coaxial waveguide modes, which form a complete basis for the solution space. The matching of the two wave fields for each mode over the annular aperture leads to an infinite system for the expansion coefficients. The main advantage of the mode matching method lies in the natural decoupling of the original system into subsystems with distinct angular momentum in the annulus. Moreover, each individual subsystem can be further reduced into a single nonlinear characteristic equation (resonance condition) by projecting the solution in an infinite-dimensional space onto the dominant resonant modes, and the resonances are the roots of the characteristic equation that can be analyzed by the complex analysis tools. This is achieved by the estimation of the contribution from the modes that are orthogonal to the resonant modes in each subsystem and is accomplished by the asymptotic analysis with respect to the small parameter $h$. The main technical parts are presented in Section 4.

The rest of the paper is organized as follows. In Section 2, we introduce necessary functions spaces and notations to be used throughout the analysis and decompose the whole scattering problem \eqref{eq:E}-\eqref{eq:smc} into two subproblems. The boundary value problems outside and inside the tiny hole are studied in details in Section 3. In particular, we express their solutions via integral equations and the mode expansion respectively. These serve as the starting point for the mode matching framework. Section 4 is devoted to the analysis of scattering resonances. The details of the mode matching formulation, the reduction to the resonance condition in the form of nonlinear characteristic equations, and the analysis of their roots for the complex-valued resonances will be given. Finally, we study the electromagnetic field enhancement at the resonant frequencies in Section 5, and conclude the paper with some discussions in Section 6.

\section{Preliminaries}
\subsection{Function spaces and notations}
We introduce several Sobolev spaces for scalar and vector valued functions that will be used throughout the paper and refer the readers to \cite{mcl00,colkre13,kirhet15} for more details.
Let $\Omega\subset \mathbb{R}^3$ be a bounded Lipschitz domain with the boundary $\Gamma:=\partial\Omega$, and $\nu(x)$ is the unit outward normal on $\Gamma$.
$H^0(\Omega):=L^2(\Omega)$ denotes the set of all
square-integrable functions on $\Omega$. Let $H^1(\Omega)=\{f\in L^2(\Omega):
\nabla f\in [L^2(\Omega)]^3\}$ and $H^{-1}(\Omega)$ be its dual space.
$H^s(\Omega)$ denotes the fractional Sobolev space for $-1<s<1$,. 
Given $\Gamma_1\subset \Gamma$, we define $H^{s}(\Gamma_1)$ by $H^{s}(\Gamma_1)=\{f|_{\Gamma_1}:f\in H^s(\Gamma)\}$
and its dual space by
$$ [H^{s}(\Gamma_1)]'=\widetilde{H^{-s}}(\Gamma_1):=\{f\in
H^{-s}(\Gamma): {\rm supp} f\subset \overline{\Gamma_1}\}.$$
Here $H^s(\Gamma)$ is the Sobolev space over the boundary $\Gamma$.

For a vector valued function ${\bf F}(x)=[F_1(x),F_2(x),F_3(x)]^{T}$ with components $F_j\in\mathbb{C}, j=1,2,3$, $\cl\ \bF=\nabla \times \bF$ and
$\dv\ \bF=\nabla\cdot \bF$ denote the curl and the divergence of ${\bf F}$,
respectively. Let 
$$H(\cl,\Omega):=\{{\bf F}\in [L^2(\Omega)]^3: \cl\ {\bf F}\in [L^2(\Omega)]^3\}.$$
We also define
$$ H_t^{s}(\Gamma)=\{\bF\in[H^{s}(\Gamma)]^3: \nu\cdot \bF=0\} \quad \mbox{for} -1/2\leq s\leq 1/2, $$  and $L_t^2(\Gamma)=H_t^{0}(\Gamma)$. Let $\Cl\ F$ and $\Dv\ F$ be
the surface divergence and the surface curl of $F$ on $\Gamma$, respectively
(c.f. Eqs.~(6.37) and (6.41) in \cite{colkre13}). 
For the planar surfaces $R^h \times \{x_3=\pm l/2\}$ considered in this paper, we have
$$
 \Dv = \nabla_2\cdot=[\partial_{x_1},\partial_{x_2}]^{T}\cdot \; , \quad \Cl =\cl_2=
[\partial_{x_2},-\partial_{x_1}]^{T}\cdot \;.
$$

Define $$H^{-1/2}(\Dv,
\Gamma)=\{\bF\in H_t^{-1/2}(\Gamma): \Dv\ \bF\in H^{-1/2}(\Gamma) \}$$ and
$$ H^{-1/2}(\Cl,\Gamma)=\{\bF\in H_t^{-1/2}(\Gamma):\Cl\ \bF\in H^{-1/2}(\Gamma)\}. $$
By \cite[Thm. 5.26]{kirhet15}, $H^{-1/2}(\Cl,\Gamma) = [H^{-1/2}(\Dv,
\Gamma)]'$ where the duality is defined by
\begin{equation}
  \label{eq:bl}
  \bF(\bG) = \int_{\Gamma} \bF\cdot\bG \, ds(\Gamma)
\end{equation}
for any $\bF\in H^{-1/2}(\Cl,\Gamma)$ and $\bG\in H^{-1/2}(\Dv,\Gamma)$.
From the trace theorem \cite[Thm. 5.24]{kirhet15}, the trace operators 
$$\gamma_t:
H(\cl,\Omega)\to H^{-1/2}(\Dv,\Gamma), \bF\mapsto \nu\times \bF|_{\Gamma}$$
and
$$\gamma_T: H(\cl,\Omega)\to H^{-1/2}(\Cl,\Gamma), \bF\mapsto
(\nu\times\bF|_{\Gamma})\times \nu$$ are bounded. 
Given an open domain $\Gamma_1\subset \Gamma$, we define  
$$ H^{-1/2}(\Cl,\Gamma_1)=\{\bF|_{\Gamma_1}: \bF\in
H^{-1/2}(\Cl,\Gamma)\}$$ 
and its dual space
$$\tilde{H}^{-1/2}(\Dv,\Gamma_1)=\{\bF\in H^{-1/2}(\Dv,\Gamma): {\rm
  supp}\,\bF\subset\overline{\Gamma_1}\}. 
$$
Finally, for an unbounded Lipschitz
domain $\Omega$, we let
$$H_{\rm loc}(\cl,\Omega):=\{\bF: \bF|_{\Omega\cap
  B(0,r)}\in H(\cl,\Omega\cap B(0,r))\textrm{\ for any\ }r>0\}$$ 
wherein $B(0,r):=\{x:|x|<r\}$.

We also introduce the following notations for the problem geometry to be used in the rest of the paper:
\begin{itemize}
    \item[(1)]$\Omega_\pm=\{x\in\mathbb{R}^3\backslash\overline{\Omega_M}:\pm x_3>0\}$: the upper and lower half domain exterior to the metal;
    \item[(2)] $\mathbb{R}^{3}_+=\{x \in \mathbb{R}^3 : x_3>l/2 \}$: the half space above the metal;
    \item[(3)] $G^{h}_+ = \{ x \in G_h: x_3>0 \}$: the upper half of the annular hole $G_h$;
    \item[(4)] $ A^h = \{x: (x_1,x_2)\in R^h, x_3=l/2\}$: the upper annular aperture of $G^{h}_+$;
    \item[(5)] $\Gamma_b=\{ x: (x_1,x_2)\in R^h, x_3=0\}$: the annulus on the $x_1x_2$ plane or the base of $G^{h}_+$;
    \item[(6)] $\Gamma^{h}_{+}$: the side boundary of $G^{h}_+$.
\end{itemize}
In addition, the following sets will be used:
\begin{itemize}
\item[(1)]
 ${\cal B} := \{z\in \mathbb{C}:
|z|<C_0\}$, where $C_0$ is a fixed positive constant;
\item[(2)] $\mathbb{N}^* := \{1, 2, 3, \cdots.\}$;
\item[(3)] $(\mathbb{Z}\times\mathbb{N})^*:=(\mathbb{Z}\times\mathbb{N})\backslash\{(0,0)\}$.
\end{itemize}
Finally, $ A \eqsim B$ implies $c_1 B \leq A \leq c_2B$ for some positive constants $c_1, c_2 $ that are independent of $A$ and $B$.

\subsection{Decomposition of the scattering problem}
Due to the symmetry of the structure with respect to the $x_1x_2$ plane, the  scattering problem (\ref{eq:E})-(\ref{eq:smc}) can be decomposed as the two subproblems as
follows:
\begin{itemize}
\item[(E).] Given the incident field $[\bE^{\rm inc},\bH^{\rm inc}]/2$, solve for
    $[\bE^{\rm e},\bH^{\rm e}]$ that satisfies  
    \begin{align}
      \cl\ \bE^{\rm e} =& \, \bi k \bH^{\rm e}\quad{\rm in}\; \Omega_+, \label{eq:problemE_1} \\
      \cl\ \bH^{\rm e} =& \, -\bi k \bE^{\rm e}\quad{\rm in}\; \Omega_+, \label{eq:problemE_2}\\
      \nu\times {\bE}^{\rm e} =& \, 0\quad{\rm on}\; \partial\Omega_+\backslash\Gamma_b, \label{eq:problemE_3} \\
      \nu\times {\bH}^{\rm e} =& \, 0 \quad{\rm on}\; \Gamma_b, \label{eq:problemE_4}
    \end{align}
    and the radiation condition (\ref{eq:smc}) for $x_3\ge l/2$ with $[\bE^{\rm
      sc},\bH^{\rm sc}]=[\bE^{\rm e},\bH^{\rm e}] - [\bE^{\rm inc},\bH^{\rm inc}]/2 - [\bE^{\rm ref},\bH^{\rm
      ref}]/2$.
  \item[(O).] Given the incident field $[\bE^{\rm inc},\bH^{\rm inc}]/2$,  solve for $[\bE^{\rm o},\bH^{\rm o}]$ that satisfies  
    \begin{align}
      \cl\ \bE^{\rm o} =& \, \bi k \bH^{\rm o}\quad{\rm in}\; \Omega_+, \label{eq:problemO_1} \\
      \cl\ \bH^{\rm o} =& \, -\bi k \bE^{\rm o}\quad{\rm in}\; \Omega_+,  \label{eq:problemO_2} \\
      \nu\times {\bE}^{\rm o} =& \, 0\quad{\rm on}\; \partial\Omega_+, \label{eq:problemO_3}
    \end{align}
    and the radiation condition (\ref{eq:smc}) for $x_3\ge l/2$ with $[\bE^{\rm sc},\bH^{\rm
      sc}]=[\bE^{\rm o},\bH^{\rm o}] - [\bE^{\rm inc},\bH^{\rm inc}]/2 - [\bE^{\rm ref},\bH^{\rm ref}]/2$.
\end{itemize}
It is clear that the solution of the scattering problem \eqref{eq:E} - \eqref{eq:smc} can be written as
\begin{align}
  \bE(x) = \left\{
  \begin{array}{lc}
    \bE^{\rm e}(x) + \bE^{\rm o}(x), & x_3\geq 0,\\
     \bE^{\rm e,*}(x^*) - \bE^{\rm o,*}(x^*), & x_3<0,\\
  \end{array}
  \right.
  \quad \bH(x) = (\bi k)^{-1}\cl\ \bE.
\end{align}
In the above, $*$ denotes the reflection vector with respect to the $x_1x_2$ plane. On the other hand, there holds
\begin{align}
{\bE}^{\rm e}(x) =& \frac{\bE(x)+\bE^*(x^*)}{2},\quad   \bE^{\rm o}(x) = \frac{\bE(x)-\bE^*(x^*)}{2},\quad x_3>0, \\
  \bH^{\rm j}(x) =& (\bi k)^{-1}\cl\ \bE^{\rm j},\quad {\rm j}\in\{{\rm e},{\rm o} \}.
\end{align}

In the rest of the paper, for clarity we shall present the detailed analysis for the resonances for Problem (E) only. Problem (O) can be analyzed similarly, thus we will point out the main difference in the analysis and present the main results directly. To simplify the notations, we shall overload $\bE$ and $\bH$ for $\bE^{\rm e}$ and $\bH^{\rm e}$, respectively.


\section{Two auxiliary boundary value problems}
In this section, we study the exterior boundary value problem above the metal
and the interior boundary value problem in the annular hole.
They will serve as the foundation for the mode matching framework and for establishing the resonance condition for the scattering problem (E). The notations introduced in Section 2.1 for the problem geometry are used.


\subsection{Scattering problem above the metal}
For a given vector valued function $\bF$ on $A^h$, let
\begin{align}
  \tilde{\cal L}_k[\bF](x) =& \cl\ \cl\int_{A^h}\Phi_k(x;y)\bF(y)ds(y),\\
  \tilde{\cal M}_k[\bF](x) =& \cl \int_{A^h}\Phi_k(x;y)\bF(y)ds(y),
\end{align}
be the vector layer potentials for $x\in\mathbb{R}_{+}^3$,  where
$\Phi_k(x;y)=\frac{e^{\bi k|x-y|}}{4\pi|x-y|}$ for $x\neq y$.
Consider the following half-space problem above the metal
\begin{align*}
  ({\rm HSP}):\quad\quad \left\{
  \begin{array}{ll}
  \cl\ \bE = \bi k \bH,\quad&{\rm in}\quad \mathbb{R}_{+}^3,\\
  \cl\ \bH = -\bi k \bE,\quad&{\rm in}\quad \mathbb{R}_{+}^3,\\
  \nu\times\bE=0,\quad &{\rm on}\quad \{x\in\mathbb{R}^3:x_3=l/2\}\backslash \overline{A^{h}},\\
  \nu\times\bE=\bF,\quad &{\rm on}\quad A^{h},\\
  \end{array}
  \right.
\end{align*}
with the radiation condition (\ref{eq:smc}) in $x_3>l/2$. The following theorem
states the well-posedness of the problem.
\begin{mytheorem}
  \label{thm:wp:hsp}
  For any $k>0$ and any $\bF\in \tilde{H}^{-1/2}(\Dv, A^{h})$, the
  following two functions
    \begin{align}
      \label{eq:EH:hsp}
      \bE = -2 \tilde{M}_k [\bF],\quad \bH = -\frac{2}{\bi k}\tilde{L}_k[\bF],
    \end{align}
    in $H_{\rm loc}(\cl,\mathbb{R}_{+}^3)$ constitute the unique solution to
    problem (HSP).
  \begin{proof}
    For $\bF\equiv 0$, one follows Lemma 5.30 in \cite{kirhet15} to extend
    $[\bE,\bH]$ to be a function in $[H_{\rm loc}(\cl,\mathbb{R}^3)]^2$, 
    which satisfies the radiation condition (\ref{eq:smc}) in all directions $x/|x|$. Thus, $\bE\equiv
    \bH\equiv 0$ so that (HSP) has at most one solution for $\bF\neq 0$. One
    directly verifies that $[\bE,\bH]$ in (\ref{eq:EH:hsp}) is the unique
    solution of (HSP) in $[H_{\rm loc}(\cl,\mathbb{R}_{+}^3)]^2$.
  \end{proof}
\end{mytheorem}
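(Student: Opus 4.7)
My plan is to treat uniqueness and existence separately, reducing the mixed half-space problem to a full-space problem via reflection for the former, and verifying the layer-potential formula directly for the latter.

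For uniqueness I would set $\bF\equiv 0$ and extend $(\bE,\bH)$ across the plane $\{x_3=l/2\}$ by the standard PEC image rule: the tangential components of $\bE$ and the normal component of $\bH$ are extended oddly, while the normal component of $\bE$ and the tangential components of $\bH$ are extended evenly. Since $\nu\times\bE$ vanishes on the \emph{entire} plane when $\bF\equiv 0$, the extended pair lies in $[H_{\rm loc}(\cl,\mathbb{R}^3)]^2$ and satisfies Maxwell's equations on all of $\mathbb{R}^3$, and the Silver--M\"uller condition transfers to every outgoing direction. The Rellich-type exterior uniqueness theorem (Lemma~5.30 of \cite{kirhet15}) then forces $\bE\equiv\bH\equiv 0$ in $\mathbb{R}^3_+$, and subtracting two hypothetical solutions handles general $\bF$.

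For existence I would verify the four defining properties of (HSP) for the pair in \eqref{eq:EH:hsp} directly. Maxwell's equations follow from the two identities $\cl\,\tilde{\cal M}_k[\bF]=\tilde{\cal L}_k[\bF]$ (by definition) and $\cl\,\tilde{\cal L}_k[\bF]=k^2\,\tilde{\cal M}_k[\bF]$, the latter obtained away from $A^h$ by combining $\cl\,\cl=-\Delta+\nabla\,\dv$ with the Helmholtz identity $(-\Delta-k^2)\Phi_k=\delta_0$ and using that the normal component of $\bF$ vanishes. The SMC follows from the standard far-field asymptotics of the vector layer potentials generated by the compactly supported density $\bF$.

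The most delicate step is the boundary trace on the plane $\{x_3=l/2\}$. I would exploit the fact that $A^h$ lies in that plane and that $\bF$ is tangential: using $\tilde{\cal M}_k[\bF](x)=\int_{A^h}\nabla_x\Phi_k(x;y)\times\bF(y)\,ds(y)$ together with $\partial_{x_3}\Phi_k(x;y)=0$ whenever $x_3=y_3$, one sees that $\tilde{\cal M}_k[\bF]$ has only an $x_3$-component at every plane point outside $\overline{A^h}$, which gives $\nu\times\bE=0$ there. On $A^h$ itself, the classical jump relation (\cite[Thm.~5.52]{kirhet15}) writes the upper trace as $-\tfrac12\bF$ plus a principal-value contribution; the same parity computation annihilates the principal-value piece, and the prefactor $-2$ in \eqref{eq:EH:hsp} then produces $\nu\times\bE|_{A^h}=\bF$. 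The main technical obstacle is to justify these pointwise manipulations at the $\tilde{H}^{-1/2}(\Dv,A^h)$ regularity of $\bF$, which is handled by approximation with smooth compactly supported tangential densities and continuity of the trace and jump operators.
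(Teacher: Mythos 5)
Your proposal is correct and follows essentially the same route as the paper: uniqueness via reflection of the homogeneous solution across the plane $\{x_3=l/2\}$ into a radiating entire solution (the paper's appeal to Lemma~5.30 of \cite{kirhet15}) followed by Rellich-type uniqueness, and existence by direct verification of the ansatz \eqref{eq:EH:hsp}. Your parity argument — that $\nabla_x\Phi_k\times\bF$ is purely normal when $x$ and $y$ lie in the same plane, so the off-aperture tangential trace vanishes and the principal-value part of the jump relation drops out on $A^h$ — is precisely the content behind the paper's one-line ``one directly verifies,'' and the density argument you add to handle $\tilde{H}^{-1/2}(\Dv,A^h)$ data is the appropriate way to make it rigorous.
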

Let ${\cal L}_k[{\bF}]$ be the trace of $\tilde{\cal L}_k[\bF]$ on $A^h$. By
Theorem~\ref{thm:wp:hsp} and the open mapping theorem, 
\begin{align}
  \label{eq:t2t:0}
  \nu\times \bH|_{A^h} = \frac{-2}{\bi k}{\cal L}_k[\nu\times \bE|_{A^h}]\in H^{-1/2}(\Dv, A^h),
\end{align}
and ${\cal L}_k$ is bounded from $\tilde{H}^{-1/2}(\Dv,A^h)$ to $H^{-1/2}(\Dv,A^h)$. Clearly,
${\cal L}_k$ maps the tangential component of $\bE$ to that of $\bH$ so that we
shall call it the tangential-to-tangential (T2T) map in the sequel. As we
shall see in Section 4, the T2T map ${\cal L}_k$ plays a central
role in formulating the resonance eigenvalue problem.

\subsection{Boundary value problem in the annular hole}

Recall that $A_h$ denotes the planar annular aperture. In this section, we first construct
a countable basis for the function space $\widetilde{H}^{-1/2}(\Dv,A^h)$ and then express the solution of the boundary value problem in the annular hole $G^{h}_+$ using the basis.

\subsubsection{A complete basis for $\widetilde{H}^{-1/2}(\Dv,A^h)$}
As $[L^2(A^h)]^2\cap \widetilde{H}^{-1/2}(\Dv,A^h)$ is dense in
$\widetilde{H}^{-1/2}(\Dv,A^h)$, we only need to construct a dense and countable
basis of $[L^2(A^h)]^2$. From Eqs.~(1.42)\&(1.55) in Chpt. IX of
\cite{daulio90}, the Helmholtz decomposition of $[L^2(A^h)]^2$ is given below.
\begin{mylemma}
  \label{lem:helmdecomp}
  Let $\Delta_2=\nabla_2\cdot\nabla_2$, and 
  \begin{align}
\cl_2\ H^1(A^h):=&\{\cl_2 f: f\in H^1(A^h)\},\\
\nabla_2 H_0^1(A^h):=&\{\nabla_2f: f\in H_0^1(A^h)\},\\
    \mathbb{H}_2(A^h):=&\{\nabla_2 f: f\in H^1(A^h), \Delta_2 f = 0, f|_{r=a}=C_1, f|_{r=a(1+h)}=C_2; C_1, C_2\in\mathbb{R}\},
\end{align}
be three closed subspaces of $[L^2(A^h)]^2$ that are orthogonal to each
other in the sense of the $L^2$-inner product.
Then, $[L^2(A^h)]^2$ can be decomposed into the direct sum
of the above three subspaces, i.e.,
\begin{equation}
  [L^2(A^h)]^2 = \cl_2\ H^1(A^h) \oplus \nabla_2 H_0^1(A^h)\oplus\mathbb{H}_2(A^h).
\end{equation}
\end{mylemma}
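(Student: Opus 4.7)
The plan is to establish the decomposition in two moves: first verify that the three listed subspaces are pairwise $L^2$-orthogonal, and then show their orthogonal sum fills the whole space by proving that its orthogonal complement is trivial. Closure of each summand (needed before we can talk about the direct sum being closed) comes from standard facts: $\nabla_2 H_0^1(A^h)$ is closed by Poincar\'e's inequality on the bounded domain $A^h$; $\mathbb{H}_2(A^h)$ is finite dimensional (in fact one-dimensional, since $A^h$ has first Betti number $1$); and $\cl_2\, H^1(A^h)$ is closed by the same Poincar\'e argument applied modulo constants, together with the fact that $\cl_2 f=0$ iff $f$ is constant.

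For orthogonality, I would use integration by parts in each of the three pairings. For $f\in H^1(A^h)$ and $g\in H_0^1(A^h)$, the identity $\dv(\cl_2 f)=0$ combined with $g|_{\partial A^h}=0$ makes $\int_{A^h}\cl_2 f\cdot\nabla_2 g\,ds=0$. For $f\in H^1(A^h)$ and $\nabla_2 u\in\mathbb{H}_2(A^h)$, the same identity leaves only boundary contributions of the form $\sum_i C_i\int_{\Gamma_i}\cl_2 f\cdot\nu\,d\sigma$; on each component $\Gamma_i$ the integrand equals $\pm\partial_\tau f$, whose integral around a closed curve vanishes. For $g\in H_0^1(A^h)$ and $\nabla_2 u\in\mathbb{H}_2(A^h)$, IBP combined with $\Delta_2 u=0$ and $g|_{\partial A^h}=0$ yields zero directly.

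The substance of the argument is proving $V^\perp=\{0\}$, where $V$ denotes the direct sum. Take $F\in[L^2(A^h)]^2$ orthogonal to $V$. Orthogonality to $\nabla_2 H_0^1(A^h)$ (test functions in $C_c^\infty$) gives $\dv F=0$ distributionally in $A^h$. Orthogonality to $\cl_2\, H^1(A^h)$ gives, via integration by parts with a general $H^1$ test function, both $\cl F=0$ in $A^h$ and a vanishing tangential trace $F\cdot\tau=0$ on $\partial A^h$ in the appropriate $H^{-1/2}$ sense. Orthogonality to $\mathbb{H}_2(A^h)$, using $\dv F=0$ and taking $u$ equal to two independent choices of boundary constants, forces the flux $\int_{\Gamma_i}F\cdot\nu\,d\sigma=0$ on each boundary component (only one condition is independent by the divergence theorem, matching $\dim\mathbb{H}_2=1$). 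Since $F$ is divergence-free and has zero circulation around the hole, there exists a single-valued stream function $\psi\in H^1(A^h)$ with $F=\cl_2\psi$; then $\cl F=-\Delta_2\psi=0$ together with $F\cdot\tau=-\partial_\nu\psi=0$ on $\partial A^h$ gives a harmonic function with homogeneous Neumann data, hence $\psi$ is constant and $F\equiv 0$.

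The principal obstacle is the multiply connected geometry: on a simply connected domain a divergence-free $L^2$ field is automatically a curl, but on the annulus one must separately account for the circulation around the inner boundary, which is precisely what the auxiliary space $\mathbb{H}_2(A^h)$ absorbs. A secondary technical point is that $L^2$ vector fields do not a priori possess classical boundary traces, so the tangential trace $F\cdot\tau$ and normal trace $F\cdot\nu$ must be interpreted in $H^{-1/2}(\partial A^h)$ via the fact that $\dv F$ and $\cl F$ are both in $L^2$ (in fact zero); this regularity bootstrap is what makes the boundary manipulations rigorous.
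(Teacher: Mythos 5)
Your proof is correct, but it is worth noting that the paper does not actually prove this lemma at all: it simply quotes the Helmholtz decomposition from Eqs.~(1.42) and (1.55) of Chapter IX of Dautray--Lions. What you have written is therefore a self-contained replacement for that citation, and it is the standard one: closedness of the summands via Poincar\'e (and finite-dimensionality of $\mathbb{H}_2$, which is indeed one-dimensional here, spanned by $\nabla_2\log r$ as the paper uses in \eqref{eq:bs:h2}); pairwise orthogonality by integration by parts, with the harmonic space absorbing exactly the boundary fluxes; and triviality of the orthogonal complement via a stream function, which is where the multiple connectivity genuinely enters. The one place your wording is loose is the sentence ``$F$ is divergence-free and has zero circulation around the hole'': the obstruction to a single-valued stream function is the \emph{flux} $\int_{\Gamma_i}F\cdot\nu\,d\sigma$ (equivalently the circulation of the rotated field $\mathcal{R}F$), not the circulation of $F$ itself --- but since the preceding sentence derives precisely the vanishing of that flux from orthogonality to $\mathbb{H}_2$, the argument goes through; just fix the terminology. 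The trade-off between the two approaches is the usual one: the citation is shorter, while your argument makes explicit why the dimension of $\mathbb{H}_2$ matches the first Betti number of the annulus, which is the geometric fact the paper later relies on when it asserts $\mathbb{H}_2(A^h)={\rm span}\{\nabla_2\log r\}$ and, in Section~6, that $\mathbb{H}_2$ collapses to $\{\mathbf{0}\}$ for a simply connected cross section.
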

Now we find a countable basis for each of the three subspaces. It is
not hard to see that one-dimensional $\mathbb{H}_2(A^h)$ is given by
\begin{equation}
  \label{eq:bs:h2}
  \mathbb{H}_2(A^h) = {\rm span}\big\{\nabla_2\log(r)\big\}. 
\end{equation}
To characterize $\nabla_2 H_0^1(A^h)$, we consider the following Dirichlet eigenvalue problem
\begin{align*}
  {\rm (DEP):}\quad\quad \left\{
  \begin{array}{ll}
  -\Delta_2\psi = \lambda \psi \quad&{\rm in}\; R^h,\\
  \psi = 0 \quad&{\rm on}\; \partial R^h.\\
  \end{array}
\right.
\end{align*}
The countable normalized eigenfunctions are (cf. \cite{kutsig84}) 
\begin{align}
  \label{eq:phimnD}
  \psi_{ij}^{D}(r,\theta;h) :=& \left[ C_{ij}^{D} \right]^{-1}\Big[ Y_{|i|}(\beta_{|i|j}^{D})J_{|i|}(\beta_{|i|j}^{D}r)-J_{|i|}(\beta_{|i|j}^{D})Y_{|i|}(\beta_{|i| j}^{D}r) \Big]e^{\bi i\theta},
\end{align}
for $(i,j)\in\mathbb{Z}\times \mathbb{N}^*$. The associated eigenvalues 
\begin{equation} \label{eq-lambdaD}
  \lambda_{ij}^{D} = \left( \beta_{|i| j}^{D}\right)^2>0.
\end{equation}
In the above, $J_i$ and $Y_i$ are the first and second kind Bessel functions of order $i$, $\beta_{|i|j}^{D}$ is the $j$-th positive root (in ascending order) of the following euqation:
\begin{equation}
  \label{eq:gov:roots}
  F^{D}_{|i|}(\beta;h):=Y_{|i|}(\beta)J_{|i|}\left( \beta(1+h) \right) - J_{|i|}(\beta)Y_{|i|}\left( \beta(1+h) \right)=0,
\end{equation}
and $C_{ij}^{D}>0$ is chosen such that $||\psi_{ij}^{\rm
  D}||_{L^2(R^h)}=1$. 

For $0<h\ll1$, the asymptotic analysis of $\lambda_{ij}^{D}$ and $\psi_{ij}^{D}$ are carried out in detail in Appendix A. It is shown in \eqref{eq:asy:betamn} and \eqref{eq:psimnD} that 
$$\lambda_{ij}^{D}\sim \left(\frac{j\pi}{h}\right)^2 \quad \mbox{and} \quad \psi_{ij}^D \sim \frac{e^{im\theta}}{\sqrt{\pi r h}} \sin\left(\frac{n\pi}{h}(r-1)\right)$$ for $h\ll1$.
It follows from \cite[Thm. 4.12]{mcl00} that
$\{\psi_{ij}^{D}(\cdot;h)\}_{(i,j)\in\mathbb{Z}\times\mathbb{N}^*}$  constitutes a complete
orthonormal basis of $L^2(R^h)$ vanishing on the boundary $\partial R^h$, and they form a dense and countable basis for $H_0^1(R^h)$. Therefore,
\begin{equation}
  \label{eq:bs:gr}
  \nabla_2 H_0^1(R^h)=\overline{{\rm span}\{\nabla\psi_{ij}^{D}(\cdot;h)\}_{(i,j)\in\mathbb{Z}\times\mathbb{N}^*}}, 
\end{equation}
where the overline denotes the closure.

As for the subspace $\cl_2\ H^1(A^h)$, we consider the following Neumann eigenvalue problem
\begin{align*}
  {\rm (NEP):}\quad\quad \left\{
  \begin{array}{ll}
  -\Delta_2\psi = \lambda \psi,\quad&{\rm in}\quad R^h\\
  \partial_{\nu}\psi = 0\quad&{\rm on}\quad \partial R^h.\\
  \end{array}
\right.
\end{align*}
As shown in \cite{kutsig84}, the countable
eigenvalues are $\lambda^{N}_{00}:=0$ and 
\begin{equation}\label{eq-lambdaN}
  \lambda_{mn}^{N} = \left( \beta_{|m|n}^{N}\right)^2,
  \quad (m,n)\in(\mathbb{Z}\times\mathbb{N})^*,
\end{equation}
where $\beta_{|m|n}^{N}$ is the $n$-th nonnegative root (in ascending order starting from $n=0$) of the equation
\begin{equation}
  \label{eq:gov:roots:n}
  F_{|m|}^{N}(\beta;h):=Y_{|m|}'(\beta)J_{|m|}'\left( \beta(1+h) \right) - J_{|m|}'(\beta)Y_{|m|}'\left( \beta(1+h) \right)=0.
\end{equation}
The associated normalized eigenfunctions are
$\psi^{N}_{00}:=\frac{1}{\sqrt{\pi h(2+h)}}$ and
\begin{align}\label{eq:psimnN}
  \psi_{mn}^{N}(r,\theta;h) :=& \left[ C_{mn}^{N} \right]^{-1}\Big[ Y_{|m|}'(\beta_{|m|n}^{N})J_{|m|}(\beta_{|m|n}^{N}r)-J_{|m|}'(\beta_{|m|n}^{N})Y_{|m|}(\beta_{|m|n}^{N}r) \Big]e^{\bi m\theta}, 
\end{align}
in which $C_{mn}^{N}>0$ is chosen such that $||\psi_{mn}^{\rm N}||_{L^2(R^h)}=1$.

The asymptotic formulas of $\lambda_{mn}^{N}$ and
  $\psi_{mn}^{N}$ are provided in \eqref{eq:asy:betamn:N} - \eqref{eq:psimnN:n0}. It is important to note that
   when $h\ll 1$,
  $$\lambda_{m0}^{N} \sim m^2 \quad \mbox{while} \quad  \lambda_{mn}^{N} \sim (\frac{n\pi}{h})^2 \quad \mbox{for} \; n\ge1.
  $$
 The eigenfunctions
  $$\psi_{m0}^{\rm N}\sim\frac{e^{\bi m\theta}}{\sqrt{\pi h(h+2)}} \quad\mbox{and}\quad \psi_{mn}^{N} \sim \frac{e^{\bi m\theta}}{\sqrt{\pi rh}}\cos\left[ \frac{n\pi}{h}(r-1) \right] \quad \mbox{for} \; n\ge1. $$
$\{\psi_{mn}^{\rm N}\}_{m\in\mathbb{Z},n\geq 0}$ constitutes a
complete orthonormal basis of $L^2(R^h)$ (cf. \cite[Thm. 4.12]{mcl00}), and is a dense and countable
basis of $H^1(A^h)$. Therefore,
\begin{equation}
  \label{eq:bs:Cl}
  \cl_2\ H^1(A^h) = \overline{{\rm span}\{\cl_2\ \psi_{mn}^{N}(\cdot;h)\}_{(m,n)\in(\mathbb{Z}\times\mathbb{N})^*}}.
\end{equation}
where we have excluded the constant eigenfunction $\psi_{00}^{N}$.

To ease the burden of notations in the subsequent analysis, we introduce the rotation operator ${\cal R}$:
\begin{equation}\label{eq:R}
  {\cal R}: f=[f_1,f_2]\mapsto [f_2,-f_1],\quad\forall f\in [L^2(A^h)]^2.
\end{equation}
It is clear that ${\cal R}[L^2(A^h)]^2=[L^2(A^h)]^2$.
Consequently, by virtue of (\ref{eq:bs:h2}), (\ref{eq:bs:gr}) and
(\ref{eq:bs:Cl}), we have
\begin{align}
  \label{eq:bs:tot}
  \widetilde{H}^{-1/2}(\Dv,A^h) =& \overline{{\cal R} [L^2(A^h)]^2\cap\widetilde{H}^{-1/2}(\Dv,A^h)}\nonumber\\
  =& \overline{{\rm span}\{{\cal R}\cl_2\ \psi_{mn}^{N}, {\cal R}\nabla \psi_{ij}^{D}, {\cal R}\nabla\log r\}_{(m,n,i,j)\in(\mathbb{Z}\times\mathbb{N})^*\times \mathbb{Z}\times\mathbb{N}^*}}, 
\end{align}
where the norm of $\widetilde{H}^{-1/2}(\Dv,A^h)$ is used for the completion. In the next
subsection, we construct the solution in $G^{h}_+$ for $\nu\times
\bE|_{A^{h}}$ being
one of the basis functions.

\subsubsection{Field representation in the annular hole}
Given $\bF\in \widetilde{H^{-1/2}}(\Dv,A^h)$, let us consider the boundary value problem
\begin{align*}
({\rm AHP}):\quad\quad \left\{
  \begin{array}{l}
  \cl\ \bE = \bi k \bH \quad{\rm in} \; G^{h}_+,\\
  \cl\ \bH = -\bi k \bE \quad{\rm in} \; G^{h}_+,\\
  \nu\times \bE|_{\Gamma^{h}_{+}} = 0,\\
  \nu\times \bH|_{\Gamma_b} = 0,\\
  \nu\times \bE|_{A^h} = \bF. \\
  \end{array}
  \right.
\end{align*}
The well-posedness of problem (AHP) is given in the following theorem.
\begin{mytheorem}
  \label{thm:wp:hlp}
    Assume that $k$ is real and positive and $k\notin\{\sqrt{\lambda_{mn}^N+(2i+1)^2\pi^2/l^2}:
  (m,i,n)\in\mathbb{Z}^2\times\mathbb{N}\}$. 
  For $0<h\ll 1$, the boundary value
  problem (AHP) attains a unique solution $[\bE,\bH]\in H(\cl, G^{h}_+)$ that
  depends continuously on the boundary data $\bF\in \widetilde{H^{-1/2}}(\Dv,A^h)$.
  \begin{proof}
    We first address the uniqueness. Let $\bF\equiv 0$. It follows from Lemma 5.30(b) of \cite{kirhet15} that an even reflection of
    $\bE$ extends $\bE$ and
    $\bH$ into $G^h$ such that
    \begin{align*}
  \cl\ \bE =& \bi k \bH\;{\rm in}\quad G^{h},\\
  \cl\ \bH =& -\bi k \bE\;{\rm in}\quad G^{h},\\
  \nu\times \bE =& 0\quad{\rm on}\; \partial G^{h}.
    \end{align*}
    An odd reflection of $\bE$ w.r.t $x_3=\pm l/2$ extends both $\bE,\bH$ to a
    larger domain $\Omega$ with $G^h\subset \Omega$ and that
    $[\bE,\bH]\in [H(\cl,\Omega)]^2$ with $\nu\times \bE=0$ on $\partial\Omega$.
    It can be shown that $\bE,\bH\in [H^1(G^h)]^3$; see, for instance, \cite[Chapter IX,\S 1]{daulio90}. 
   Thus $E_3\in H^1(G^h)$ satisfies
    \begin{align*}
      -\Delta E_3 =& k^2 E_3 \quad{\rm in}\; G^h,\\
      E_3 =& 0 \quad{\rm on} \; \Gamma^h,\\
      \partial_{\nu} E_3 =& 0 \quad {\rm on} \; A^{h}\cup A^{h}_{-},
    \end{align*}
    where $A^{h}_{-}:=\{x:(x_1,x_2)\in R^h,x_3 = -l/2\}$ is the bottom aperture of $G^h$ and $\Gamma^h$ is the side boundary  of $G^h$. 
    In light of \eqref{eq:asy:betamn}, we choose sufficiently small $h$ such that $k^2$ is not an eigenvalue of
    the above problem. Consequently, $E_3\equiv 0$ in $G^h$.

    Next, $H_3\in H^1(G^h)$ satisfies
    \begin{align*}
      -\Delta H_3 =& k^2 H_3 \quad{\rm in}\; G^h,\\
      \partial_{\nu}H_3 =& 0 \quad{\rm on}\; \Gamma^h,\\
       H_3 =& 0 \quad {\rm on}\; A^{h}\cup A^{h}_{-}.
    \end{align*}
    The boundary value problem attains trivial solution when $k\notin\{\sqrt{\lambda_{mn}^N+(2i+1)^2\pi^2/l^2}:
  (m,i,n)\in\mathbb{Z}^2\times\mathbb{N}\}$.
  Therefore,  $E_1$ and $E_2$ can be expressed as in the form of 
    \[
      \left[
        \begin{array}{c}
          E_1(x)\\
          E_2(x)\\
          \end{array}
      \right] =       \left[
        \begin{array}{c}
          f_1(x_1,x_2)\\
          f_2(x_1,x_2)\\
          \end{array}
      \right]e^{\bi kx_3} + 
\left[
        \begin{array}{c}
          g_1(x_1,x_2)\\
          g_2(x_1,x_2)\\
          \end{array}
      \right]e^{-\bi kx_3},
    \]
where $f_j$ and $g_j$ ($j=1,2$) are harmonic functions, and 
    \[
      \Dv \left[
        \begin{array}{c}
          f_1(x_1,x_2)\\
          f_2(x_1,x_2)\\
          \end{array}
      \right] = \Dv \left[
        \begin{array}{c}
          g_1(x_1,x_2)\\
          g_2(x_1,x_2)\\
          \end{array}
      \right] =       \Cl \left[
        \begin{array}{c}
          f_1(x_1,x_2)\\
          f_2(x_1,x_2)\\
          \end{array}
      \right] = \Cl \left[
        \begin{array}{c}
          g_1(x_1,x_2)\\
          g_2(x_1,x_2)\\
          \end{array}
      \right] =  0.
    \]
    By Lemma~\ref{lem:helmdecomp}, it can be verified that
    $[f_1,f_2]^{T},[g_1,g_2]^{T}\in\mathbb{H}_2$. Consequently, 
    \begin{align*}
  E_1 =& \frac{x_1}{x_1^2+x_2^2}(c_1e^{\bi k x_3} + c_2e^{-\bi k x_3}),\quad E_2 = \frac{x_2}{x_1^2+x_2^2}(c_1e^{\bi k x_3} + c_2e^{-\bi k x_3}),\\
 H_1 =& \frac{-x_2}{x_1^2+x_2^2}(c_1e^{\bi k x_3} - c_2e^{-\bi k x_3}),\quad H_2 = \frac{x_1}{x_1^2+x_2^2}(c_1e^{\bi k x_3} -c_2e^{-\bi k x_3}),
\end{align*}
for some constants $c_1$ and $c_2$. The boundary condition $E_1=E_2=0$ on
$A^h\cup A^{h}_{-}$ implies
\[
  c_1 e^{\bi kl/2} + c_2e^{-\bi kl/2} = 0,\quad c_1=c_2.
\]
Thus a nonzero solution $[\bE,\bH]$ exists if and only if $e^{\bi kl/2}+e^{-\bi
  kl/2}=2\cos(kl/2)=0$, which is excluded by our assumption. Now the
well-posedness follows thanks to Theorem 5.60 in \cite{kirhet15}.
  \end{proof}
\end{mytheorem}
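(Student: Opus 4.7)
The plan is to deduce existence and continuous dependence from a standard well-posedness result for Maxwell's equations in a Lipschitz cavity (for instance Theorem 5.60 of \cite{kirhet15}), which via Fredholm theory reduces the entire statement to the uniqueness assertion: if $\bF\equiv 0$ then $[\bE,\bH]\equiv 0$ in $G^{h}_+$. The uniqueness argument will proceed in three stages: symmetrize the problem by reflecting the solution into a closed cavity, decouple the vertical components $E_3$ and $H_3$ into scalar Helmholtz problems that are ruled out by the spectral assumption on $k$, and then handle the residual transverse field as a TEM-like mode.

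For the symmetrization, I would first even-reflect $[\bE,\bH]$ across the base $\Gamma_b$ (the reflection compatible with the condition $\nu\times\bH=0$) to obtain a Maxwell solution on the full cylinder $G^h$ with $\nu\times\bE=0$ on all of $\partial G^h$, and then odd-reflect $\bE$ across $x_3=\pm l/2$ to embed everything in a strictly larger perfectly conducting cavity $\Omega\supset G^h$. Standard interior regularity for Maxwell cavity eigenfunctions (e.g.\ Chapter IX of \cite{daulio90}) then upgrades the solution to $\bE,\bH\in[H^1(G^h)]^3$. Using $\dv\bE=\dv\bH=0$ together with the inherited boundary traces, $E_3$ is seen to satisfy $-\Delta E_3=k^2 E_3$ with Dirichlet data on the side wall $\Gamma^h$ and Neumann data on the two caps $A^h\cup A^{h}_{-}$, while $H_3$ satisfies the same equation with Neumann data on $\Gamma^h$ and Dirichlet data on the caps. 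The parity imposed by the reflection across $\Gamma_b$ selects only the $x_3$-modes of the appropriate symmetry, producing spectra of the form $\{\lambda^D_{ij}+((2n+1)\pi/l)^2\}$ for $E_3$ and $\{\lambda^N_{mn}+((2i+1)\pi/l)^2\}$ for $H_3$. The asymptotic $\lambda^D_{ij}\gtrsim (j\pi/h)^2$ from Appendix A pushes the $E_3$ spectrum above any fixed bounded $k^2$ for $h$ small enough, and the $H_3$ spectrum is ruled out by the standing hypothesis on $k$, so $E_3\equiv H_3\equiv 0$.

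Once $E_3$ and $H_3$ vanish, the transverse components decouple and Maxwell's system forces $[E_1,E_2]$ (and similarly $[H_1,H_2]$) into the plane-wave form $[f_1,f_2]e^{\bi k x_3}+[g_1,g_2]e^{-\bi k x_3}$, in which each of $[f_j]$ and $[g_j]$ is simultaneously divergence-free and curl-free on the annulus. By the Helmholtz decomposition in Lemma~\ref{lem:helmdecomp}, these transverse vector fields must reside in the one-dimensional harmonic subspace $\mathbb{H}_2(A^h)=\mathrm{span}\{\nabla_2\log r\}$, yielding the explicit TEM representation displayed in the proof excerpt. Imposing the vanishing of $E_1,E_2$ on $A^h\cup A^{h}_{-}$ combined with the reflection-induced symmetry forcing $c_1=c_2$ reduces the whole question to the scalar condition $\cos(kl/2)=0$, i.e.\ $k=(2i+1)\pi/l$. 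This value is captured by the $(m,n)=(0,0)$ stratum of the excluded set since $\lambda^N_{00}=0$, and is therefore ruled out by hypothesis, giving $c_1=c_2=0$ and hence $[\bE,\bH]\equiv 0$.

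The principal subtlety lies in the asymmetric roles played by the Dirichlet and Neumann annular spectra: the rapid growth $\lambda^D_{ij}\to\infty$ as $h\to 0$ absorbs the $E_3$ problem into the small-$h$ hypothesis alone, whereas the persistent low-lying Neumann modes $\lambda^N_{m0}\sim m^2$ are insensitive to $h$ and thus necessitate the explicit exclusion on $k$ in the theorem statement; this same exclusion also covers the TEM cavity resonance $\cos(kl/2)=0$ as a special $(m,n)=(0,0)$ case.
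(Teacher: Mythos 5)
Your proposal is correct and follows essentially the same route as the paper's proof: reduction to uniqueness via Theorem 5.60 of \cite{kirhet15}, even reflection across $\Gamma_b$ followed by odd reflection across $x_3=\pm l/2$, decoupling of $E_3$ and $H_3$ into scalar Helmholtz problems ruled out respectively by the small-$h$ blow-up of $\lambda^D_{ij}$ and by the spectral hypothesis on $k$, and identification of the residual transverse field with the $\mathbb{H}_2$ (TEM) mode killed by $\cos(kl/2)\neq 0$. Your explicit tracking of the parity selecting the $(2i+1)\pi/l$ vertical modes, and the observation that the TEM condition is the $(m,n)=(0,0)$ stratum of the excluded set, merely make precise what the paper leaves implicit.
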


We now construct special solutions to the problem (AHP), which are called waveguide modes in the annular hole $G^h_{+}$. Denote
\begin{equation} \label{eq-s}
s_{mn}^N=\sqrt{k^2-\lambda_{mn}^N}, \quad s_{ij}^D = \sqrt{k^2-\lambda_{ij}^D}.
\end{equation} 
Assume that $k\notin\{\sqrt{\lambda_{mn}^N+(2i+1)^2\pi^2/l^2}$. 
\begin{itemize}
  \item[1.] Transverse electric (TE) modes.  For each $(m,n)\in(\mathbb{Z}\times\mathbb{N})^*$, define 
    \begin{align}\label{eq:E_TE_modes}
      \bE_{mn}^{TE} =& \left[
      \begin{array}{l}
        (e^{\bi s_{mn}^N x_3} + e^{-\bi s_{mn}^N x_3})\partial_{x_2}\psi_{mn}^N\\
        -(e^{\bi s_{mn}^N x_3} + e^{-\bi s_{mn}^N x_3})\partial_{x_1}\psi_{mn}^N\\
        0
      \end{array}
      \right],
    \end{align}
    \begin{align}\label{eq:H_TE_modes}
      \bH_{mn}^{TE} =& \frac{1}{k}\left[
      \begin{array}{l}
        s_{mn}^N(e^{\bi s_{mn}^N x_3} - e^{-\bi s_{mn}^N x_3})\partial_{x_1}\psi_{mn}^N\\
        s_{mn}^N(e^{\bi s_{mn}^N x_3} - e^{-\bi s_{mn}^N x_3})\partial_{x_2}\psi_{mn}^N\\
        -\bi \lambda_{mn}^{N}(e^{\bi s_{mn}^N x_3} + e^{-\bi s_{mn}^N x_3})\psi_{mn}^{N}. 
      \end{array}
      \right]
    \end{align}
Then $\{[\bE_{mn}^{TE},\bH_{mn}^{TE}]\}_{(m,n)\in(\mathbb{Z}\times\mathbb{N})^*}$ is the unique solution of (AHP) with $\bF=[F_1,F_2, 0]^T =[ 2\cos(s_{mn}^Nl/2){\cal R}\cl_2\ \psi_{mn}^N, 0]^T$. These solutions
are called transverse electric (TE) modes.
  \item[2.] Transverse magnetic (TM) modes. For each $(i,j)\in\mathbb{Z}\times\mathbb{N}^*$, define
    \begin{align}\label{eq:E_TM_modes}
      \bE_{ij}^{TM} =& \left[
      \begin{array}{l}
        (e^{\bi s_{ij}^D x_3} + e^{-\bi s_{ij}^D x_3})\partial_{x_1}\psi_{ij}^D\\
        (e^{\bi s_{ij}^D x_3} + e^{-\bi s_{ij}^D x_3})\partial_{x_2}\psi_{ij}^D\\
        \lambda_{ij}^{D}/(\bi s_{ij}^D)(e^{\bi s_{ij}^D x_3} -e^{-\bi s_{ij}^D x_3})\psi_{ij}^{D}\\
      \end{array}
      \right],
      \end{align}
      \begin{align}\label{eq:H_TM_modes}
      \bH_{ij}^{TM} =& \frac{k(e^{\bi s_{ij}^D x_3} - e^{-\bi s_{ij}^D x_3})}{s_{ij}^D}\left[
      \begin{array}{l}
        -\partial_{x_2}\psi_{ij}^D\\
        \partial_{x_1}\psi_{ij}^D\\
        0\\
      \end{array}
      \right].
    \end{align}
Then $\{[\bE_{ij}^{TM},\bH_{ij}^{TM}]\}_{(i,j)\in\mathbb{Z}\times\mathbb{N}^*}$  is the unique solution of (AHP) with $\bF= [2\cos(s_{ij}^Dl/2){\cal R}\nabla_2\psi_{ij}^D,0]^T$. These solutions are
called transverse magnetic (TM) modes.
  \item[3.] Transverse electromagnetic
    (TEM) mode.  Define
    \begin{align}\label{eq:E_TEM_modes}
      \bE_{E}^{TEM} =& (e^{\bi k x_3} + e^{-\bi k x_3})\left[
      \begin{array}{l}
        \partial_{x_1}\log r\\
        \partial_{x_2}\log r\\
        0\\
      \end{array}
      \right],
      \end{align}
      \begin{align}\label{eq:H_TEM_modes}
      \bH_{E}^{TEM} =& (e^{\bi k x_3} - e^{-\bi k x_3})\left[
      \begin{array}{l}
        -\partial_{x_2}\log r\\
        \partial_{x_1}\log r\\
        0\\
      \end{array}
      \right].
    \end{align}
Then $\{[\bE_{E}^{TEM},\bH_{E}^{TEM}]\}$ is the unique solution of (AHP) with $\bF= [2\cos(kl/2){\cal R}\nabla_2\log r,0]^T$. This solution is called the transverse electromagnetic (TEM) mode.
\end{itemize}
\begin{myremark}
  For $k\in\{\sqrt{\lambda_{mn}^N+(2i+1)^2\pi^2/l^2}:
  (m,i,n)\in\mathbb{Z}^2\times\mathbb{N}\}$, we use
  $\nu\times\bH|_{A^h}=\bF^H=[F_1^H,F_2^H,0]$ as the boundary condition instead,
  where we choose $[F_1^H,F_2^H]$ from $\Big\{\frac{2\bi
    s_{mn}^N}{k}\sin(s_{mn}^Nl/2)\nabla_2\psi_{mn}^N, \frac{-2k\bi
    \sin(s_{ij}^Dl/2)}{s_{ij}^D}\cl_2\psi_{ij}^D, -2\bi\sin(kl/2)\cl_2\log r \Big\}$
  for $(m,n)\in(\mathbb{Z}\times\mathbb{N})^*$ and
  $(i,j)\in\mathbb{Z}\times\mathbb{N}^*$. The above
  TE, TM and TEM modes can be reproduced as well.
\end{myremark}

\medskip

Finally, we use the above waveguide modes to construct solutions to the problem (AHP). 
Let ${\bf F}\in \widetilde{H^{-1/2}}(\Dv,A^h)$, we expand it as
\begin{align}
  \label{eq:tE:as}
  \bf F =& \sum_{(m,n)\in(\mathbb{Z}\times\mathbb{N})^*}d^{TE}_{mn}(\nu\times \bE_{mn}^{TE}|_{A^h}) + \sum_{(i,j)\in\mathbb{Z}\times\mathbb{N}^*} d^{TM}_{ij}(\nu\times \bE_{ij}^{TM}|_{A^h}) \nonumber\\
  &+ d^{TEM}(\nu\times \bE_{E}^{TEM}|_{A^h})\in \widetilde{H^{-1/2}}(\Dv,A^h),
\end{align}
with the Fourier coefficients $\{d_{mn}^{TE},d_{ij}^{TM},d^{TEM}\}$.  Then
it follows from Theorem~\ref{thm:wp:hlp} that the unique solution of the boundary value problem is
\begin{align}
  \label{eq:rep:bE}
  \bE =& \sum_{(m,n)\in(\mathbb{Z}\times\mathbb{N})^*}d^{TE}_{mn}\bE_{mn}^{TE} + \sum_{(i,j)\in\mathbb{Z}\times\mathbb{N}^*} d^{TM}_{ij}\bE_{ij}^{TM}+ d^{TEM}\bE_{E}^{TEM}\in [L^2(G^h)]^3,\\
  \label{eq:rep:bH}
  \bH =& \sum_{(m,n)\in(\mathbb{Z}\times\mathbb{N})^*}d^{TE}_{mn}\bH_{mn}^{TE} + \sum_{(i,j)\in\mathbb{Z}\times\mathbb{N}^*} d^{TM}_{ij}\bH_{ij}^{TM}+ d^{TEM}\bH_{E}^{TEM}\in [L^2(G^h)]^3,
\end{align}
where the modes $\bE_{mn}^{TE}$, $\bH_{mn}^{TE}$, $\bE_{ij}^{TM}$, $\bH_{ij}^{TM}$,
$\bE_{E}^{TEM}$, $\bE_{H}^{TEM}$ are defined in \eqref{eq:E_TE_modes}-\eqref{eq:H_TEM_modes}.
We have
\begin{align*}
  ||\bE||_{[L^2(G^h)]^3}^2 =& \sum_{(m,n)\in(\mathbb{Z}\times\mathbb{N})^*}|d^{TE}_{mn}|^2||\bE_{mn}^{TE}||_{[L^2(G^h)]^3}^2 + \sum_{(i,j)\in\mathbb{Z}\times\mathbb{N}^*} |d^{TM}_{ij}|^2||\bE_{ij}^{TM}||_{[L^2(G^h)]^3}^2\\
  &+ |d^{TEM}|^2||\bE_{E}^{TEM}||_{[L^2(G^h)]^3}^2\\
  =&\sum_{(m,n)\in(\mathbb{Z}\times\mathbb{N})^*}|d^{TE}_{mn}|^2\frac{\lambda_{mn}^{N}}{|s_{mn}^N|}|s_{mn}^Nl + \sin(s_{mn}^Nl)| +\sum_{(i,j)\in\mathbb{Z}\times\mathbb{N}^*}|d^{TM}_{ij}|^2\frac{\lambda_{ij}^{D}}{|s_{ij}^D|}|s_{ij}^Dl + \sin(s_{ij}^Dl)|\\
  &+\sum_{(i,j)\in\mathbb{Z}\times\mathbb{N}^*}|d^{TM}_{ij}|^2\frac{[\lambda_{ij}^{D}]^2}{|s_{ij}^D|^3}|s_{ij}^Dl - \sin(s_{ij}^Dl)|+|d^{TEM}|^22\pi \log(1+h)|kl + \sin(kl)|<\infty,
\end{align*}
and
\begin{align*}
  ||\bH||_{[L^2(G^h)]^3}^2 =& \sum_{(m,n)\in(\mathbb{Z}\times\mathbb{N})^*}|d^{TE}_{mn}|^2||\bH_{mn}^{TE}||_{[L^2(G^h)]^3}^2 + \sum_{(i,j)\in\mathbb{Z}\times\mathbb{N}^*} |d^{TM}_{ij}|^2||\bH_{ij}^{TM}||_{[L^2(G^h)]^3}^2\\
  &+ |d^{TEM}|^2||\bH_{E}^{TEM}||_{[L^2(G^h)]^3}^2\\
  =&\sum_{(m,n)\in(\mathbb{Z}\times\mathbb{N})^*}|d^{TE}_{mn}|^2\frac{|s_{mn}^N|^2\lambda_{mn}^{N}}{k^2|s_{mn}^N|}|s_{mn}^Nl - \sin(s_{mn}^Nl)|\\
  &+\sum_{(m,n)\in(\mathbb{Z}\times\mathbb{N})^*}|d^{TE}_{mn}|^2\frac{|\lambda_{mn}^{N}|^2}{k^2|s_{mn}^N|}|s_{mn}^Nl +\sin(s_{mn}^Nl)|\\
  &+\sum_{(i,j)\in\mathbb{Z}\times\mathbb{N}^*}|d^{TM}_{ij}|^2\frac{k^2\lambda_{ij}^{D}}{|s_{ij}^D|^3}|s_{ij}^Dl -\sin(s_{ij}^Dl)|+|d^{TEM}|^22\pi \log(1+h)|kl -\sin(kl)|<+\infty.
\end{align*}

By Lemmas~\ref{lem:dir:eig} and ~\ref{lem:neu:eig},
$\lambda_{mn}^N,\lambda_{mn}^D\to+\infty$ as $m^2+n^2\to\infty$, so there holds
\[
  |s^o_{mn}|\eqsim \sqrt{\lambda_{mn}^o}, |s_{mn}^{o}l\pm \sin(s_{mn}^ol)|\eqsim
  |2\pm\sin(s_{mn}^{o}l)|,\quad{\rm for}\quad o=N,D,
\]
where $2$ is introduced to ensure that $|2\pm \sin(s_{mn}^o)l)|\geq 1$.
In summary, we have the following proposition.

\begin{myprop}
Let $E, H$ be defined as in 
(\ref{eq:rep:bE})-(\ref{eq:rep:bH}).
Then $||\bE||_{[L^2(G^h)]^3}^2<\infty$ and $||\bH||_{[L^2(G^h)]^3}^2<\infty$ if and only if
\begin{equation}
  \label{eq:cond:coef}
  \left\{  
    \begin{array}{l}
    \{c_{mn}^{TE}:=d_{mn}^{TE}(\lambda_{mn}^N)^{3/4}|2+\sin(s_{mn}^Nl)|^{1/2}\}_{(m,n)\in(\mathbb{Z}\times\mathbb{N})^*}\in \ell^2,\\
     \{c_{ij}^{TM}:=d_{ij}^{TM}(\lambda_{ij}^D)^{1/4}|2+\sin(s_{ij}^Dl)|^{1/2}\}_{(i,j)\in \mathbb{Z}\times\mathbb{N}^*}\in \ell^2,\\
      |d^{TEM}|<\infty,
    \end{array}
\right.
\end{equation}
where $\ell^2$ denotes the space of square-summable sequences. 
On the other hand, for any Fourier coefficients
$\{d_{mn}^{TE},d_{ij}^{TM},d^{TEM}\}$ satisfying (\ref{eq:cond:coef}),  (\ref{eq:rep:bE}) and (\ref{eq:rep:bH}) provide the unique
solution to (AHP) in $H(\cl, G^{h}_+)$ with $\nu\times \bE|_{A^h}\in\widetilde{H^{-1/2}}(\Dv,A^h)$.
\end{myprop}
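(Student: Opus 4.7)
The plan is to establish both directions by combining the mutual orthogonality of the waveguide modes in $[L^2(G^{h}_+)]^3$ with explicit norm formulas and the spectral asymptotics from Appendix A. First I would verify pairwise orthogonality of the three families $\{\bE_{mn}^{TE}\}$, $\{\bE_{ij}^{TM}\}$, $\{\bE_E^{TEM}\}$ (and similarly for the magnetic modes) in $[L^2(G^{h}_+)]^3$: orthogonality across families reduces by Fubini to the three-way orthogonality of the transverse subspaces $\cl_2\, H^1(A^h)$, $\nabla_2 H_0^1(A^h)$, $\mathbb{H}_2(A^h)$ supplied by Lemma~\ref{lem:helmdecomp}, while within a single family it follows from the $L^2(R^h)$-orthonormality of $\{\psi_{mn}^N\}$ and $\{\psi_{ij}^D\}$ together with $\int_{R^h}\nabla_2\psi_\alpha^o\cdot\overline{\nabla_2\psi_\beta^o}=\lambda_\alpha^o\,\delta_{\alpha\beta}$ obtained by integration by parts.

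Next I would compute each squared $L^2$-norm explicitly. The normalization $\|\nabla_2\psi_{mn}^o\|_{L^2(R^h)}^2=\lambda_{mn}^o$ combined with the elementary integrals of $|e^{\bi s x_3}\pm e^{-\bi s x_3}|^2$ on $(0,l/2)$ reproduces exactly the closed-form sums already displayed just above the proposition. Applying $\lambda_{mn}^o\to\infty$ from Lemmas~\ref{lem:dir:eig}--\ref{lem:neu:eig}, $|s_{mn}^o|\eqsim\sqrt{\lambda_{mn}^o}$, and the estimate $|s_{mn}^o l\pm\sin(s_{mn}^o l)|\eqsim|2\pm\sin(s_{mn}^o l)|$ to the dominant summands yields the leading-order scalings
\begin{align*}
\|\bE_{mn}^{TE}\|^2&\eqsim (\lambda_{mn}^N)^{1/2}|2+\sin(s_{mn}^N l)|,\quad \|\bH_{mn}^{TE}\|^2\eqsim (\lambda_{mn}^N)^{3/2}|2+\sin(s_{mn}^N l)|,\\
\|\bE_{ij}^{TM}\|^2&\eqsim (\lambda_{ij}^D)^{1/2}|2+\sin(s_{ij}^D l)|,\quad \|\bH_{ij}^{TM}\|^2\eqsim (\lambda_{ij}^D)^{-1/2}|2-\sin(s_{ij}^D l)|,
\end{align*}
while the TEM norms are bounded by absolute constants times $|d^{TEM}|^2$. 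Requiring both $\|\bE\|^2<\infty$ and $\|\bH\|^2<\infty$, the binding constraint is produced by $\bH^{TE}$ (hence the exponent $3/4$) for the TE family and by $\bE^{TM}$ (hence the exponent $1/4$) for the TM family; together with the bound for the TEM mode this is exactly \eqref{eq:cond:coef}.

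For the converse direction, given coefficients satisfying \eqref{eq:cond:coef}, orthogonality and the bounds above show that the partial sums of \eqref{eq:rep:bE}--\eqref{eq:rep:bH} are Cauchy in $[L^2(G^{h}_+)]^3$; the limit $[\bE,\bH]$ satisfies Maxwell's equations in the distributional sense because $\bH$ is obtained term by term as $(\bi k)^{-1}\cl\,\bE$. The trace $\nu\times\bE|_{A^h}$ coincides with the series \eqref{eq:tE:as} in $\widetilde{H^{-1/2}}(\Dv,A^h)$ by continuity of $\gamma_t$ and the density statement \eqref{eq:bs:tot}, while the homogeneous boundary conditions on $\Gamma^{h}_{+}$ and $\Gamma_b$ hold mode by mode; uniqueness is then delivered by Theorem~\ref{thm:wp:hlp}. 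The principal technical obstacle is controlling the exponentially large factor $|\sin(s_{mn}^o l)|$ that arises when $\lambda_{mn}^o$ is large and $s_{mn}^o$ is nearly imaginary: this blow-up is precisely what is absorbed into the weights $|2+\sin(s_{mn}^o l)|^{1/2}$ of \eqref{eq:cond:coef}, and maintaining the equivalence $|s_{mn}^o l\pm\sin(s_{mn}^o l)|\eqsim|2\pm\sin(s_{mn}^o l)|$ consistently across the TE, TM and TEM regimes is the key bookkeeping step.
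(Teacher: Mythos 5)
Your proposal is correct and follows essentially the same route as the paper: the paper's argument consists precisely of the explicit orthogonal-mode norm computations displayed above the proposition together with the equivalences $|s_{mn}^o|\eqsim\sqrt{\lambda_{mn}^o}$ and $|s_{mn}^o l\pm\sin(s_{mn}^o l)|\eqsim|2\pm\sin(s_{mn}^o l)|$, from which the weights $(\lambda_{mn}^N)^{3/4}$ and $(\lambda_{ij}^D)^{1/4}$ are read off exactly as you identify (the $\bH$-norm binding for TE, the $\bE$-norm binding for TM). Your added justification of the mode orthogonality via Lemma~\ref{lem:helmdecomp} and your treatment of the converse direction through Cauchy partial sums and Theorem~\ref{thm:wp:hlp} only make explicit what the paper leaves implicit.
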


\begin{myremark}
Unless otherwise stated, here and thereafter, the $\ell^2$ sequence with two indices is arranged in the usual dictionary order. 
\end{myremark}

\begin{myremark}
Transforming the sequence $\{d_{mn}^{TE},d_{ij}^{TM}\}$ to an $\ell^2$
sequence $\{c_{mn}^{TE},c_{ij}^{TM}\}$ balances the magnitudes of the TE, TM, and TEM modes in the hole, which is essential in solving the eigenvalue problem formulated as an infinite-dimensional (INF) linear system by the mode matching method in the next section. As we shall see below, such a transformation eases the analysis of the mapping property of the related  INF coefficient matrix and the reduction of the INF system into finite-dimensional ones. 
\end{myremark}

\section{Quantitative analysis of scattering resonances}
In this section, we  quantitatively characterizes the resonances for the scattering problem (E) in a bounded domain over the complex plane. These resonances are complex values of $k$ such that the homogeneous problem \eqref{eq:problemE_1}-\eqref{eq:problemE_4} 
with $\bE^{\rm inc} = \bH^{\rm inc}=0$ attains nontrivial solutions.

\subsection{A vectorial mode matching formulation}
We first develop a vectorial
analogy of the mode matching method originally proposed in \cite{zholu21,luwanzho21} to reformulate the scattering problem (E) with trivial incident field. 
Before proceeding, we introduce the following bilinear form over
$H^{-1/2}(\Dv,A^h)\times\widetilde{H}^{-1/2}(\Dv,A^h)$ (see\cite[P. 306]{kirhet15}):
\[
  \langle\bF,\bG\rangle =: \langle \bF,\nu\times \bG  \rangle_{A^h},
\]
where $\nu\times\bG=-(\bG\times\nu)\in \widetilde{H^{-1/2}}(\Cl,A^h)$, and
$\langle \cdot,\cdot \rangle_{A^h}$ represents the duality pair between $H^{-1/2}(\Dv,A^h)$
and $\widetilde{H^{-1/2}}(\Cl,A^h)$. 
Let ${\cal S}_{k}$ be the following single-layer operator
\begin{equation}
  \label{eq:Sh}
  {\cal S}_{k}[\phi](x) =\int_{A^{h}} \Phi_k(x;y)\phi(y)\, dS(y),\quad x\in A^{h}.
\end{equation}
Then ${\cal S}_{k}$ is bounded from $\widetilde{H^{-1/2}}(A^h)$
to $H^{1/2}(A^h)$. The following holds for the T2T map ${\cal L}_k$.
\begin{mylemma}[\cite{kirhet15}, Lemma 5.61]
  For any $\bF,\bG\in \widetilde{H}^{-1/2}(\Dv,A^h)$,
  \begin{align}
    \langle{\cal L}_k[\bF],\bG  \rangle =& \langle {\cal L}_k[\bG],\bF \rangle,\\
    \label{eq:LkSk}
    \langle{\cal L}_k[\bF],\bG  \rangle =& -\langle\Dv~\bG, {\cal S}_k[\Dv~\bF] \rangle_{A^h} + k^2\langle\bG, {\cal S}_k[\bF]  \rangle_{A^h},
  \end{align}
  where ${\cal S}_k[\bF]$ is taken componentwisely, and it belongs to $H^{-1/2}(\Cl, A^h)$.
\end{mylemma}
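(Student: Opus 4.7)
The plan is to derive both identities from a single structural representation of $\tilde{\cal L}_k[\bF]$ obtained via the vector identity $\cl\,\cl = -\Delta + \nabla\dv$, and then to deduce the symmetry from the resulting formula essentially for free.

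First I would rewrite $\tilde{\cal L}_k[\bF]$ for $x\notin A^h$ as the sum of a scalar and a gradient part. Setting $\mathcal{A}(x):=\int_{A^h}\Phi_k(x;y)\bF(y)\,ds(y)$, one has $\Delta_x\mathcal{A}=-k^2\mathcal{A}$ off $A^h$ since $(\Delta+k^2)\Phi_k=-\delta$. For the divergence, moving the gradient from $x$ to $y$ and applying surface integration by parts (licit because $\bF\in\widetilde{H}^{-1/2}(\Dv,A^h)$ is a tangential field whose support lies in $\overline{A^h}$ and so has vanishing conormal trace on $\partial A^h$) gives $\dv_x\mathcal{A}(x)=\int_{A^h}\Phi_k(x;y)\Dv\,\bF(y)\,ds(y)$. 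Combining yields the representation
\[
\tilde{\cal L}_k[\bF](x)=k^2\mathcal{A}(x)+\nabla_x\int_{A^h}\Phi_k(x;y)\Dv\,\bF(y)\,ds(y),\qquad x\notin A^h.
\]

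Next I would take the tangential trace on $A^h$ and pair with $\bG\in\widetilde{H}^{-1/2}(\Dv,A^h)$ via $\langle\cdot,\cdot\rangle$. The first summand contributes $k^2\langle\bG,{\cal S}_k[\bF]\rangle_{A^h}$ once the duality pair is unfolded using $\langle\bF,\bG\rangle=\langle\bF,\nu\times\bG\rangle_{A^h}$ and the componentwise definition of ${\cal S}_k$. For the gradient term, the tangential component of $\nabla_x\int_{A^h}\Phi_k(x;y)\Dv\,\bF(y)\,ds(y)$ on $A^h$ is the surface gradient of ${\cal S}_k[\Dv\,\bF]$, and a surface Stokes/Green identity on $A^h$ transfers that gradient onto $\bG$ as a surface divergence. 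The boundary integral on $\partial A^h$ drops because $\bG$ is supported in $\overline{A^h}$, producing the desired $-\langle\Dv\,\bG,{\cal S}_k[\Dv\,\bF]\rangle_{A^h}$, with the minus sign coming from the integration by parts. This establishes \eqref{eq:LkSk}. The symmetry $\langle{\cal L}_k[\bF],\bG\rangle=\langle{\cal L}_k[\bG],\bF\rangle$ is then immediate: since $\Phi_k(x;y)=\Phi_k(y;x)$, the operator ${\cal S}_k$ is symmetric under $\langle\cdot,\cdot\rangle_{A^h}$, so both terms on the right-hand side of \eqref{eq:LkSk} are invariant under $\bF\leftrightarrow\bG$.

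The main technical obstacle is the rigorous handling of these manipulations in the low-regularity spaces $H^{-1/2}(\Dv,A^h)$ and $\widetilde{H}^{-1/2}(\Dv,A^h)$, where the integrands are only distributional and the surface Stokes identity on the open set $A^h$ must be interpreted through duality. The standard remedy is a density argument: approximate $\bF$ and $\bG$ by smooth tangential fields compactly supported in $A^h$, carry out every step classically, and pass to the limit using the mapping properties of ${\cal S}_k:\widetilde{H}^{-1/2}(A^h)\to H^{1/2}(A^h)$, of ${\cal L}_k:\widetilde{H}^{-1/2}(\Dv,A^h)\to H^{-1/2}(\Dv,A^h)$ from \eqref{eq:t2t:0}, of the traces $\gamma_t,\gamma_T$, and of the duality pairing on $\widetilde{H}^{-1/2}(\Dv,A^h)\times H^{-1/2}(\Cl,A^h)$.
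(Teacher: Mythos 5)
The paper does not prove this lemma at all: it is quoted verbatim from Kirsch--Hettlich \cite{kirhet15}, Lemma 5.61, so there is no in-paper argument to compare against. Your proposal reconstructs the standard textbook proof of that cited result --- the decomposition $\cl\,\cl=-\Delta+\nabla\dv$ together with $(\Delta+k^2)\Phi_k=0$ off the diagonal, the transfer of $\dv_x$ to $\Dv_y$ (with no boundary term precisely because $\bF\in\widetilde{H}^{-1/2}(\Dv,A^h)$), the surface integration by parts moving the gradient onto $\bG$, and symmetry from $\Phi_k(x;y)=\Phi_k(y;x)$ --- and it is correct in substance, including the density argument needed to justify the manipulations at $H^{-1/2}$ regularity. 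The only point worth spelling out more explicitly is the bookkeeping of the rotation in the pairing $\langle\bF,\bG\rangle=\langle\bF,\nu\times\bG\rangle_{A^h}$: since ${\cal L}_k[\bF]=\nu\times\tilde{\cal L}_k[\bF]|_{A^h}$, the identity $(\nu\times\bA)\cdot(\nu\times\bB)=\gamma_T\bA\cdot\gamma_T\bB$ on the flat aperture is what removes both rotations and produces the unrotated pairings $\langle\bG,{\cal S}_k[\bF]\rangle_{A^h}$ and $\langle\Dv\,\bG,{\cal S}_k[\Dv\,\bF]\rangle_{A^h}$ on the right-hand side; one should also note that only the tangential part of $\nabla_x{\cal S}_k[\Dv\,\bF]$ is continuous across $A^h$, which is exactly the part the trace picks out.
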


Now, from the integral equation formulation \eqref{eq:EH:hsp} and the tangential traces of $\bE$ and $\bH$ (\ref{eq:rep:bE}) (\ref{eq:rep:bH}) over the annular aperture $A^h$, when $\bE^{\rm inc} = \bH^{\rm inc}=0$, the homogeneous problem \eqref{eq:problemE_1}-\eqref{eq:problemE_4} can be formulated as the following system over the aperture $A^h$:
\begin{align}
  \label{eq:t2t}
  \nu\times \bH|_{A^h} =& \frac{-2}{\bi k}{\cal L}_k[\nu\times \bE|_{A^h}],\\
  \label{eq:rep:tE}
  \nu\times \bE|_{A^h} =& \sum_{(m,n)\in(\mathbb{Z}\times\mathbb{N})^*}\frac{c^{TE}_{mn} \cdot 2\cos(s_{mn}^Nl/2) }{(\lambda_{mn}^N)^{3/4}|2+\sin(s_{mn}^Nl)|^{1/2}}{\cal R}\cl_2\ \psi_{mn}^N\nonumber\\
  &+ \sum_{(i,j)\in\mathbb{Z}\times\mathbb{N}^*} \frac{c^{TM}_{ij} \cdot 2\cos(s_{ij}^Dl/2) }{(\lambda_{ij}^{D})^{1/4}|2+\sin(s_{ij}^Dl)|^{1/2}}{\cal R}\nabla_2\psi_{ij}^{D} \nonumber\\
  &+ d^{TEM} \cdot 2\cos(kl/2){\cal R}\nabla_2\log r,\\
  \label{eq:rep:tH}
  \nu\times\bH|_{A^h}=&\sum_{(m,n)\in(\mathbb{Z}\times\mathbb{N})^*}\frac{c^{TE}_{mn} \cdot 2\bi s_{mn}^N\sin(s_{mn}^Nl/2) }{k(\lambda_{mn}^N)^{3/4}|2+\sin(s_{mn}^Nl)|^{1/2}}\cl_2\ \psi_{mn}^N\nonumber\\
  &+ \sum_{(i,j)\in\mathbb{Z}\times\mathbb{N}^*} \frac{c^{TM}_{ij} \cdot 2\bi k\sin(s_{ij}^Dl/2) }{s_{ij}^D(\lambda_{ij}^{D})^{1/4}|2+\sin(s_{ij}^Dl)|^{1/2}}\nabla_2\psi_{ij}^D\nonumber\\
  &+ d^{TEM} \cdot 2\bi \sin(kl/2)\nabla_2\log r. 
\end{align}
At a resonance $k$, there exist nontrivial solutions $\{c^{TE}_{mn}, c^{TM}_{mn},  d^{TEM}\}$ for the above system.

Using the completeness of the basis given in (\ref{eq:bs:tot}), the integral equation (\ref{eq:t2t}) is equivalent to the following system:
\begin{align}
  \langle \nu\times \bH|_{A^h}, {\cal R}\cl_2 \overline{\psi_{m'n'}^{N}} \rangle =& \frac{-2}{\bi k} \langle{\cal L}_k[\nu\times \bE|_{A^h}], {\cal R}\cl_2 \overline{\psi_{m'n'}^{N}} \rangle,\quad (m',n')\in(\mathbb{Z}\times\mathbb{N})^*, \label{eq:system1} \\
  \langle \nu\times \bH|_{A^h}, {\cal R}\nabla_2 \overline{\psi_{i'j'}^{D}} \rangle =& \frac{-2}{\bi k} \langle{\cal L}_k[\nu\times \bE|_{A^h}], {\cal R}\nabla_2 \overline{\psi_{i'j'}^{D}} \rangle,\quad(i',j')\in\mathbb{Z}\times\mathbb{N}^*, \label{eq:system2} \\
  \langle \nu\times \bH|_{A^h}, {\cal R}\nabla_2 \log r \rangle =& \frac{-2}{\bi k} \langle{\cal L}_k[\nu\times \bE|_{A^h}], {\cal R}\nabla_2 \log r \rangle, \label{eq:system3}
\end{align}
where the overline represents the complex conjugate. 
Using the expansions (\ref{eq:rep:tE}), (\ref{eq:rep:tH}), and the following identities
\begin{align}
 \langle \cl_2 \psi_{mn}^N,{\cal R}\cl_2 \overline{\psi_{m'n'}^{N}}  \rangle  =& -\lambda_{mn}^{N}\delta_{mm'}\delta_{nn'}, \label{eq:identity1} \\
 \langle \nabla_2 \psi_{ij}^D,{\cal R}\nabla_2 \overline{\psi_{i'j'}^{D}}  \rangle  =& -\lambda_{ij}^{D}\delta_{ii'}\delta_{jj'}, \label{eq:identity2} \\
 \langle \nabla_2 \log r,{\cal R}\nabla_2 \log r  \rangle  =& -2\pi\log(1+h), \label{eq:identity3} 
\end{align}
where $\delta_{m0}$ is the Kronecker delta function, the system \eqref{eq:system1}-\eqref{eq:system3} can be rewritten as an equation of INF matrices
and INF vectors:
\begin{align}
  \label{eq:INF:sys}
  &\left[
  \begin{array}{lll}
    \bS^{TE}\bD^{TE} & & \\
     & \bD^{TM} & \\
     &  & D^{TEM}\\
  \end{array}
  \right]\left[
  \begin{array}{l}
    \bc^{TE}\\
    \bc^{TM}\\
    d^{TEM}\\
  \end{array}
  \right] \nonumber\\
  =&  \left[
  \begin{array}{lll}
    \bA^{TE,TE} & \bA^{TE,TM} & \bC^{TE,TEM}\\
    \bA^{TM,TE} & \bA^{TM,TM} & \bC^{TM,TEM}\\
    \bR^{TEM,TE} & \bR^{TEM,TM} & A^{TEM,TEM}\\
  \end{array}
  \right] \left[
  \begin{array}{l}
    \bc^{TE}\\
    \bc^{TM}\\
    d^{TEM}\\
  \end{array}
  \right].
\end{align}
In the above, the unknown coefficients are given by 
the two INF column vectors $\bc^{TE} =
[c_{m'n'}^{TE}]_{(m',n')\in(\mathbb{Z}\times\mathbb{N})^*}$ and $\bc^{TM} =
[c_{i'j'}^{TM}]_{(i',j')\in\mathbb{Z}\times\mathbb{N}^*}$, and a complex number $d^{TEM}$. They represent the Fourier coefficients of the TE, TM, and TEM modes respectively in \eqref{eq:rep:tE} and \eqref{eq:rep:tH}. 

On the left side of the system, $D^{TEM} =\sin(kl/2)$, and
the three  INF diagonal matrices are given by
\begin{align*}
  \bS^{TE} = &{\rm Diag}\{s_{mn}^N\},\\
  \bD^{TE} =& {\rm Diag }\{\sin(s_{mn}^Nl/2)|2+\sin(s_{mn}^Nl)|^{-1/2}\},\\
  \bD^{TM} =& {\rm Diag }\{\sin(s_{ij}^Dl/2)|2+\sin(s_{ij}^Dl)|^{-1/2}\}.
\end{align*}
The elements in the matrices are obtained from the field representation \eqref{eq:rep:tH} and the identities \eqref{eq:identity1}-\eqref{eq:identity3}. We use the superscripts to denote the contribution of each type of mode to the matrices.
On the right side of the system, the four INF matrices are 
\begin{align}
  \bA^{TE,TE} =& \left[\frac{-2\cos(s_{mn}^Nl/2)|2+\sin(s_{mn}^Nl)|^{-1/2}}{(\lambda_{m'n'}^{N})^{1/4}(\lambda_{mn}^N)^{3/4}}\langle {\cal L}_k{\cal R}\cl_2\psi_{mn}^{N},{\cal R}\cl_2\overline{\psi_{m'n'}^{N}} \rangle\right],\nonumber\\
  \bA^{TE,TM} =& \left[\frac{-2\cos(s_{ij}^Dl/2)|2+\sin(s_{ij}^Dl)|^{-1/2}}{(\lambda_{m'n'}^{N})^{1/4}(\lambda_{ij}^D)^{1/4}}\langle {\cal L}_k{\cal R}\nabla_2\psi_{ij}^{D},{\cal R}\cl_2\overline{\psi_{m'n'}^{N}} \rangle\right],\nonumber\\
  \bA^{TM,TE} =& \left[\frac{-2s_{i'j'}^D\cos(s_{mn}^Nl/2)|2+\sin(s_{mn}^Nl)|^{-1/2}}{k^2(\lambda_{i'j'}^D)^{3/4}(\lambda_{mn}^N)^{3/4}}\langle {\cal L}_k{\cal R}\cl_2\psi_{mn}^{N},{\cal R}\nabla_2\overline{\psi_{i'j'}^{D}} \rangle\right],\nonumber\\
  \bA^{TM,TM} =& \left[\frac{-2s_{i'j'}^D\cos(s_{ij}^Dl/2)|2+\sin(s_{ij}^Dl)|^{-1/2}}{k^2(\lambda_{i'j'}^D)^{3/4}(\lambda_{ij}^D)^{1/4}}\langle {\cal L}_k{\cal R}\nabla_2\psi_{ij}^{D},{\cal R}\nabla_2\overline{\psi_{i'j'}^{D}} \rangle\right]. \nonumber
\end{align}
The two INF column vectors $\bC^{TE,TEM}$ and $\bC^{TM,TEM}$, and two INF row vectors $\bR^{TEM,TE}$, and $\bR^{TEM,TM}$ are
\begin{align}
  \bC^{TE,TEM} =& \left[\frac{-2\cos(kl/2)}{(\lambda_{m'n'}^{N})^{1/4}}\langle {\cal L}_k{\cal R}\nabla_2\log r,{\cal R}\cl_2\overline{\psi_{m'n'}^{N}} \rangle\right],\nonumber\\
  \bC^{TM,TEM} =& \left[\frac{-2s_{i'j'}^D\cos(kl/2)}{k^2(\lambda_{i'j'}^D)^{3/4}}\langle {\cal L}_k{\cal R}\nabla_2\log r,{\cal R}\nabla_2\overline{\psi_{i'j'}^{D}} \rangle\right],\nonumber\\
  \bR^{TEM,TE} =& \left[\frac{-\cos(s_{mn}^Nl/2)|2+\sin(s_{mn}^Nl)|^{1/2}}{(\lambda_{mn}^N)^{3/4}\pi k\log(1+h)}\langle {\cal L}_k{\cal R}\cl_2\psi_{mn}^{N},{\cal R}\nabla_2\log r \rangle\right],\nonumber\\
  \bR^{TEM,TM} =& \left[\frac{-\cos(s_{ij}^Dl/2)|2+\sin(s_{ij}^Dl)|^{1/2}}{(\lambda_{ij}^D)^{1/4}\pi k\log(1+h)}\langle {\cal L}_k{\cal R}\nabla_2\psi_{ij}^{D},{\cal R}\nabla_2\log r \rangle\right],\nonumber
\end{align}
and the scalar
\begin{align}
  A^{TEM,TEM} =\frac{-\cos(kl/2)}{\pi k\log(1+h)}\langle {\cal L}_k{\cal R}\nabla_2\log r,{\cal R}\nabla_2\log r \rangle.\nonumber
\end{align}
The elements in the matrices and vectors are obtained from using the expansion \eqref{eq:rep:tE} for the systems \eqref{eq:system1}-\eqref{eq:system3} and the identities \eqref{eq:identity1}-\eqref{eq:identity3}. Each pair of superscript for the matrix/vector denotes the interaction of two modes after applying the operator ${\cal L}_k$ to one mode.
We set the following rules for the indices of the elements of the INF matrices/vectors:
\begin{itemize}
\item[(1).] $(m,n)$ and $(m',n')$ range over
  $(\mathbb{Z}\times\mathbb{N})^*$;
\item[(2).] $(i,j)$ and $(i',j')$ range over
$\mathbb{Z}\times\mathbb{N}^*$;
\item[(3).] the index $(m,n)$ or $(i,j)$ is the column index
  of the matrix, while the prime index $(m',n')$ or $(i',j')$ is the row index of the matrix.
\item[(4).] The columns (and rows) of each INF matrix are arranged in the dictionary order.
\end{itemize}
The product of the block INF matrix and the block INF vector in
 (\ref{eq:INF:sys}) is well-defined by the usual matrix-vector product.

\subsection{Resonances for Problem (E)}
We are ready to analyze the resonances for the scattering problem (E), which are the characteristic values of the system  (\ref{eq:INF:sys}). We shall follow the avenues described below to derive their asymptotic expansions:
\begin{enumerate}
\item[(1).] First, we decompose the whole system  (\ref{eq:INF:sys}) into a sequence of subsystems \eqref{eq:eig:m} with different angular momentum $m\in\mathbb{Z}$. 
\item[(2).]  We further reduce each subsystem \eqref{eq:eig:m} to a nonlinear characteristic equation \eqref{eq:single} by projecting the solution onto the dominant resonant mode. Such a characteristic equation is called resonance condition. To this end, we
estimate the contribution from the modes that are orthogonal to the resonant modes in each subsystem, which is accomplished by the asymptotic analysis of each matrix element with respect to the parameter $h$ and the key estimates are provided in Lemma \ref{lem:dm'm:h}.
\item[(3).]  Finally, we investigate the resonance condition \eqref{eq:single} and analyze its roots to obtain the asymptotic expansions of resonances. The main results for the resonances are summarized in Theorems~\ref{thm:even:res} and \ref{thm:odd:res}.
\end{enumerate}

\subsubsection{Subsystem for each angular momentum}
For a function depending on the angle $\theta$, we use $\Theta(f)$ to denote its angular momentum so that the $\theta$-dependence of the function is given by $e^{\bi \Theta(f)\theta}$. For example, for $\psi_{ij}^D$ and $\psi_{mn}^N$ defined \eqref{eq:phimnD} and \eqref{eq:psimnN}, there holds $\Theta(\psi_{mn}^N)=m$
and $\Theta(\psi_{ij}^D)=i$. On the other hand, $\Theta(\log r)=0$. We have the following orthogonality relation for two basis functions with different momenta.
\begin{mylemma}
\label{lem:orth:theta}
  For any $f,g\in\{\psi_{mn}^N,\psi_{ij}^D,\log
  r\}_{(m,n,i,j)\in(\mathbb{Z}\times\mathbb{N})^*\times \mathbb{Z}\times\mathbb{N}^*}$ with
$\Theta(f)\neq \Theta(g)$, there holds
  \begin{align}
    \langle {\cal L}_k {\cal R}{\rm Op}_1 [f],{\cal R}{\rm Op}_2 \overline{[g]} \rangle =& 0,
  \end{align}
  where ${\rm Op}_j$ represents one of the two operators $\{\cl_2,\nabla_2\}$,
  for $j=1,2$.
  \begin{proof}
    We only show the proof when $\Op_1 =\cl_2$, $\Op_2 =
    \nabla_2$, $f=\psi_{mn}^N$ and $g=\psi_{ij}^D$ with $m\neq i$. For
    simplicity, let $f(r,\theta)=f_n(r)e^{\bi m\theta}$ and
    $g(r',\theta')=g_j(r')e^{\bi i\theta'}$, where both $f_n$ and
    $g_j$ are real. A direction calculation gives
    \begin{align*}
      \nabla_2 f \cdot \cl_2'\bar{g} =& (f_n'(r)e^{\bi m\theta}\hat{r} +\bi m
      f_n(r)e^{\bi m\theta}\hat{\theta})\cdot (-g_j'(r')e^{-\bi i\theta'}\hat{\theta}'+\bi i
      f_n(r)e^{-\bi i\theta'}\hat{r}')\\
      =& [h^1_{nj}(r,r')\cos(\theta-\theta') + h^2_{nj}(r,r')\sin(\theta-\theta')]e^{\bi m\theta-\bi i\theta'},
    \end{align*}
    where $\hat{\theta}$ and $\hat{r}$ are the polar unit vectors, and $h^{o}_{nj},o=1,2$ are uniquely determined
    from $f_n$, $g_j$ and their first-order derivatives. Thus by
    (\ref{eq:LkSk}),
    \begin{align*}
      &\langle {\cal L}_k {\cal R}\cl_2 [f],{\cal R}\nabla_2' \overline{[g]} \rangle \\
      =& -k^2\langle \nabla_2[f], {\cal S}_k[\cl_2'\overline{g}]  \rangle_{A^h}\\
      =&-k^2\int_{0}^{2\pi}e^{\bi (m-i)\theta'}d\theta' \int_{0}^{2\pi}d\theta\int_{[a,a(1+h)^2]}\frac{e^{\bi k \sqrt{r^2+r'^2-2rr'\cos\theta}}[h_{nj}^1\cos\theta+h_{nj}^2\sin\theta]}{4\pi|r^2+r'^2-2rr'\cos\theta|} e^{\bi m\theta}drdr'\\
      =& \, 0.
    \end{align*}
    The proof for the other cases are similar.
  \end{proof}
\end{mylemma}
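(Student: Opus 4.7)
The plan is to mimic the computation sketched by the authors but formulate the argument in a way that transparently handles all four operator combinations at once, together with the cases where $f$ or $g$ equals $\log r$. The key observation is that the free-space Helmholtz kernel $\Phi_k(x;y)$ is invariant under the simultaneous planar rotation $(\theta,\theta')\mapsto(\theta+\alpha,\theta'+\alpha)$, whereas the factor $\bF(y)\cdot\bG(x)$ transforms multiplicatively by $e^{i(\Theta(f)-\Theta(g))\alpha}$. Integrating over this rotation angle will force the pairing to vanish whenever $\Theta(f)\neq\Theta(g)$.

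Concretely, I would first invoke identity \eqref{eq:LkSk} to reduce
\begin{equation*}
\langle {\cal L}_k{\cal R}\Op_1 f,{\cal R}\Op_2\overline{g}\rangle
= -\langle \Dv({\cal R}\Op_2\overline{g}),{\cal S}_k[\Dv({\cal R}\Op_1 f)]\rangle_{A^h}
+ k^2\langle {\cal R}\Op_2\overline{g},{\cal S}_k[{\cal R}\Op_1 f]\rangle_{A^h}
\end{equation*}
to a pair of double integrals over $A^h\times A^h$ whose only non-polynomial ingredient is the kernel $\Phi_k(x;y)$. Because $\Dv\cl_2=0$ and $\Dv\nabla_2=\Delta_2$, the scalar factor $\Dv({\cal R}\Op_j\cdot)$ is either zero or a multiple of the radial/angular Laplacian of a single scalar.

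Next, I would parameterize $x=(r\cos\theta,r\sin\theta)$ and $y=(r'\cos\theta',r'\sin\theta')$, write $f=f_\ast(r)e^{i\Theta(f)\theta}$ and $g=g_\ast(r')e^{i\Theta(g)\theta'}$ (with $\Theta(\log r)=0$ and $f_\ast(r)=\log r$ in the degenerate case), and express the polar components of ${\cal R}\Op_j$ applied to each function. The radial/angular polar components always retain the full angular exponential of the underlying scalar, so $\bF(y)$ carries the factor $e^{i\Theta(f)\theta'}$ and $\overline{g}$, hence $\bG(x)$, carries $e^{-i\Theta(g)\theta}$. Converting the Cartesian dot product $\bG(x)\cdot\bF(y)$ into polar components via the identities $\hat r(x)\cdot\hat r(y)=\cos(\theta-\theta')$, $\hat r(x)\cdot\hat\theta(y)=\sin(\theta-\theta')$, etc., the integrand of each of the two double integrals becomes a finite sum of terms of the schematic form
\begin{equation*}
K(r,r',\theta-\theta')\,e^{i\Theta(f)\theta'}e^{-i\Theta(g)\theta},
\end{equation*}
where $K$ depends on $r,r'$ and the angular difference $\theta-\theta'$ only, through $\Phi_k$, the trigonometric coefficients and $f_\ast,g_\ast$ and their derivatives.

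Finally, I would change variables $\phi=\theta-\theta'$ and integrate out the overall rotation: for each fixed $\phi$, the $\theta$-integral contributes $\int_0^{2\pi} e^{-i\Theta(g)\theta}e^{i\Theta(f)(\theta-\phi)}d\theta = e^{-i\Theta(f)\phi}\int_0^{2\pi} e^{i(\Theta(f)-\Theta(g))\theta}d\theta$, which vanishes by the orthogonality of the circular characters whenever $\Theta(f)\neq\Theta(g)$. Both pairings on the right of the reduction formula therefore vanish, giving the claim. The only place that requires genuine care is Step 3, namely verifying that after conjugation and application of ${\cal R}\Op_j$ the angular variables of $x$ and $y$ really do appear only in the combination $\theta-\theta'$ outside of the global exponentials $e^{i\Theta(f)\theta'}e^{-i\Theta(g)\theta}$; once this polar factorization is established, the rest is immediate. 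The $\log r$ case is covered uniformly since it corresponds to $\Theta=0$, so $e^{i\Theta\theta}\equiv 1$ and the same orthogonality argument applies.
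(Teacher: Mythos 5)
Your proposal is correct and follows essentially the same route as the paper: reduce via \eqref{eq:LkSk} to single-layer pairings, observe that in polar coordinates the kernel and the unit-vector dot products depend on the angles only through $\theta-\theta'$ while the scalar factors contribute $e^{\bi\Theta(f)\theta}e^{-\bi\Theta(g)\theta'}$, and conclude by orthogonality of the circular characters. The only difference is presentational — you treat all operator combinations and the $\log r$ case uniformly via the rotation-invariance argument, whereas the paper works out one representative case explicitly and declares the rest similar.
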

Using the above lemma, the full system (\ref{eq:INF:sys}) can be decoupled into a
sequence of subproblems, where the elements in each subsystem attain the same angular dependence $e^{\bi m \theta}.$
More specifically, for each $m\in\mathbb{Z}$, we have
\begin{align}
  \label{eq:eig:m}
  \left[
  \begin{array}{lll}
     D_m &  & \\
    & \bI & \\
     & & \bI \\
  \end{array}
  \right]\left[
  \begin{array}{l}
    d_m\\
    \bc_m^{TE}\\
    \bc_m^{TM}\\
  \end{array}
  \right] = \left[
  \begin{array}{lll}
    A_{mm} & \bR_m^{TE} & \bR_m^{TM}\\
    \bC_m^{TE} & \bB_m^{TE,TE} & \bB_m^{TE,TM}  \\
    \bC_m^{TM} & \bB_m^{TM,TE} & \bB_m^{TM,TM} \\
  \end{array}
  \right] \left[
  \begin{array}{l}
    d_m\\
    \bc_m^{TE}\\
    \bc_m^{TM}\\
  \end{array}
  \right].
\end{align}
In the above, the unknown coefficients for each $m$ are
\begin{align*}
d_m= \left\{
  \begin{array}{lc}
    d^{TEM},& m=0;\\
    d_{m0}^{TE},& m\neq 0,\\
  \end{array}
  \right.\quad \bc_m^{TE}=[D_{mn'}^{TE}c_{mn'}^{TE}]_{n'\in\mathbb{N}^*},\quad
\bc_m^{TM}=[D_{mj'}^{TM}c_{mj'}^{TM}]_{j'\in\mathbb{N}^*}.
\end{align*}
${\bI}$ denotes the INF identity matrix on $\ell^2$ such that $\bI\bc_m^j=\bc_m^j, j\in\{TE,TM\}$. The scalar
$D_0=\sin (kl/2)$ and $D_m=s_{m0}^N\sin (s_{m0}^Nl/2)$ for $m\neq 0$. The $3\times 3$ block matrices, relating to the $3\times
3$ block matrices on the right-side in (\ref{eq:INF:sys}) for each $m$, are given as follows:
\begin{align*}
  & A_{mm} := \left\{
  \begin{array}{lc}
    A^{TEM,TEM},& m=0;\\
    |2+\sin(s_{m0}^Nl)|^{1/2}A_{m0,m0}^{TE,TE},& m\neq 0,\\
  \end{array}
  \right.\\
  \\
  & \bR_m^{TE}:=[R_{n;m}^{TE}]_{n\in\mathbb{N^*}}=\left\{
  \begin{array}{lc}
    \ \left[R_{mn}^{TEM,TE}(D^{TE}_{mn})^{-1}\right]_{n\in\mathbb{N}^*},& m=0;\\ \\
    \ \left[(\lambda_{m0}^{N})^{-3/4}A^{TE,TE}_{m0,mn}(D^{TE}_{mn})^{-1}\right]_{n\in\mathbb{N}^*}, & m\neq 0,\\
  \end{array} \right.\\ \\
   & \bR_m^{TM}:=\left[R_{j;m}^{TM}\right]_{j\in\mathbb{N^*}}=\left\{
  \begin{array}{lc}
    \ \left[R_{mj}^{TEM,TM}(D^{TM}_{mj})^{-1}\right]_{j\in\mathbb{N}^*},& m=0;\\ \\
    \ \left[(\lambda_{m0}^{N})^{-3/4}A^{TE,TM}_{m0,mj}(D^{TM}_{mj})^{-1}\right]_{j\in\mathbb{N}^*}, & m\neq 0,\\
  \end{array} \right.\\ \\
  & \bC_m^{TE}:=\left[C_{n';m}^{TE}\right]_{n'\in\mathbb{N^*}}=\left\{
  \begin{array}{lc}
    \ \left[(s_{mn'}^{N})^{-1}C_{mn'}^{TE,TEM}\right]_{n'\in\mathbb{N}^*},& m=0;\\ \\
    \ \left[(s_{mn'}^{N})^{-1}A^{TE,TE}_{mn',m0}(\lambda_{m0}^N)^{3/4}|2+\sin(s_{m0}^Nl)|^{1/2}\right]_{n'\in\mathbb{N}^*}, & m\neq 0,\\
  \end{array} \right.\\ \\
   & \bC_m^{TM}:=\left[C_{j';m}^{TM}\right]_{j'\in\mathbb{N^*}}=\left\{
  \begin{array}{lc}
    \ \left[C_{mj'}^{TM,TEM}\right]_{j'\in\mathbb{N}^*},& m=0;\\ \\
    \ \left[A^{TM,TE}_{mj',m0}(\lambda_{m0}^N)^{3/4}|2+\sin(s_{m0}^Nl)|^{1/2}\right]_{j'\in\mathbb{N}^*}, & m\neq 0,\\
  \end{array} \right. \\ \\
  & \bB_m^{TE,TE} := \left[B^{TE,TE}_{n'n;m}=(s^N_{mn'})^{-1}A^{TE,TE}_{mn',mn}(D^{TE}_{mn})^{-1}\right]_{n',n\in\mathbb{N}^*},\\ \\
  & \bB_m^{TM,TE} :=\left[B^{TM,TE}_{j'n;m}=A^{TM,TE}_{mj',mn}(D^{TE}_{mn})^{-1}\right]_{j',n\in\mathbb{N}^*},\\ \\
  & \bB_m^{TE,TM} := \left[B^{TE,TM}_{n'j;m}=(s^{N}_{mn'})^{-1}A^{TE,TM}_{mn',mj}(D^{TM}_{mj})^{-1}\right]_{n',j\in\mathbb{N}^*},\\ \\
  & \bB_m^{TM,TM} :=\left[B^{TM,TM}_{j'j;m}=A^{TM,TM}_{mj',mj}(D^{TM}_{mj})^{-1}\right]_{j',j\in\mathbb{N}^*}.
\end{align*}
In the above, $D^{TE}_{mn'}$ denotes the $mn'$-th diagonal element of $\bD^{TE}$ in \eqref{eq:INF:sys}, and 
$A^{TE,TE}_{mn',mn}$ denotes the $mn'$-th row, $mn$-th column element of
$\bA^{TE,TE}$ in \eqref{eq:INF:sys}, etc. The square bracket $[\cdot]$ represents an INF matrix, an INF row vector or an INF column vector, with the subscript given by the following:
\begin{itemize}
\item[1.] The subscript $n\in\mathbb{N}^*$ (or $j\in\mathbb{N}^*$) represents the column index $n$ (or $j$) in a row vector;
\item[2.] The subscript $n'\in\mathbb{N}^*$ or $j'\in\mathbb{N}^*$ with a prime represents row index $n'$ (or $j'$) a column vector;
\item[3.] The subscript $n',n\in\mathbb{N}^*$ represents the column index $n'$ and row index $n$ for a matrix.
\end{itemize}
Consequently, to solve for \eqref{eq:INF:sys} it is equivalent to solve for $k\in{\cal B}$ such that, for each $m$, the system (\ref{eq:eig:m})
attains nonzero $\ell^2$-sequences $\{d_m, c_m^{TE},c_m^{TM}\}$.

\subsubsection{Characteristic equation and resonance condition}
To proceed, we transform each system (\ref{eq:eig:m}) to an equivalent characteristic equation. To this end, we first analyze the matrix elements in (\ref{eq:eig:m}) for $h\ll 1$. Let ${\cal S}_0$ be a single-layer potential over the interval
$(0,1)$ given by
\begin{align}
  ({\cal S}_0 [\phi])(r):= 2\int_{0}^{1}\frac{1}{2\pi}\log\frac{1}{|r-r'|} \phi(r')ds(r'),
\end{align}
wherein the kernel function is the fundamental solution of
the 2D Laplacian. It is known that ${\cal S}_0$ is bounded from
$\widetilde{H^{-1/2}}(0,1)$ to $H^{1/2}(0,1) =
(\widetilde{H^{-1/2}}(0,1))'$ \cite[Lem. 2.1.2]{luwanzho21}. Let
\begin{equation}
  \label{eq:phi}
   \phi_n(r) = \begin{cases}
    \frac{1}{\sqrt{2}}\cos(n\pi r),& n\in\mathbb{N}^*,\\
    1,& n=0.
    \end{cases}
\end{equation}
Then, $\{\phi_n\}_{n\in\mathbb{N}}$ forms an orthonormal basis of the space
$L^2(0,1)$. We
equip $H^{1/2}(0,1)$ with the following norm:
\[
  ||f||_{H^{1/2}(0,1)}^2:=\sum_{n=0}^{\infty}(1+n^2)^{1/2}|(f,\phi_n)_{L^2(0,1)}|^2,
\]
and $\widetilde{H^{-1/2}}(0,1)$ with the norm
\[
  ||f||_{\widetilde{H^{-1/2}}(0,1)}^2:=\sum_{n=0}^{\infty}(1+n^2)^{-1/2}|\langle
  f,\phi_n\rangle_{(0,1)} |^2,
\]
where $\langle \cdot,\cdot \rangle_{(0,1)}$ indicates the duality pair between
$\widetilde{H^{-1/2}}(0,1)$ and $H^{1/2}(0,1)$. The estimations of the matrix elements in \eqref{eq:eig:m} are given in the following lemma:
\begin{mylemma}
  \label{lem:dm'm:h}
  Let $h\ll 1$ and $k\in{\cal B}$. For each $m\in\mathbb{Z}$, the following hold:
  \begin{itemize}
  \item[(i).] The element $B_{n'n;m}^{TE,TE}$ in the matrix $\bB_m^{TE,TE}$ attains the following asymptotic expansions:
      \begin{align}
        \label{eq:n'n:TETE}
        B_{n'n;m}^{TE,TE} = -2({\cal S}_0[(n'\pi)^{1/2}\phi_{n'}],(n\pi)^{1/2}\phi_n)_{L^2(0,1)} + {\cal O}(h)\epsilon_{n'n;m}^{TE,TE},
      \end{align}
      where the INF matrix
      $\{\epsilon_{n'n;m}^{TE,TE}\}_{n,n'=1}^{\infty}:\ell^2\to\ell^2$ is uniformly
      bounded for $h\ll1$.
  \item[(ii).] The element $B_{n'j;m}^{TE,TM}$ in the INF matrix $\bB_m^{TE,TM}$ attains the following asymptotic expansions:
      \begin{align}
        \label{eq:n'j:TETM}
        B_{n'j;m}^{TE,TM} = (1-\delta_{m0}){\cal O}(h)\epsilon_{n'j;m}^{TE,TM},
      \end{align}
      where the INF matrix
      $\{\epsilon_{n'n;m}^{TE,TM}\}_{n,n'=1}^{\infty}:\ell^2\to\ell^2$ is uniformly
      bounded for $h\ll1$.
   \item[(iii).] The element $B_{j'n;m}^{TM,TE}$ in the INF matrix $\bB_m^{TM,TE}$ attains the following asymptotic expansions:
      \begin{align}
        \label{eq:j'n:TMTE}
        B_{j'n;m}^{TM,TE} = (1-\delta_{m0}){\cal O}(h)\epsilon_{j'n;m}^{TM,TE},
      \end{align}
      where the INF matrix
      $\{\epsilon_{j'n;m}^{TM,TE}\}_{n,n'=1}^{\infty}:\ell^2\to\ell^2$ is uniformly
      bounded for $h\ll1$.
   \item[(iv).] The element $B_{j'j;m}^{TM,TM}$ in the INF matrix $\bB_m^{TE,TE}$ attains the following asymptotic expansions:
      \begin{align}
        \label{eq:j'j:TMTM}
        B_{j'j;m}^{TM,TM} = -2({\cal S}_0[(n'\pi)^{1/2}\phi_{j'}],(n\pi)^{1/2}\phi_j)_{L^2(0,1)} + {\cal O}(h)\epsilon_{j'j;m}^{TM,TM},
      \end{align}
      where the INF matrix
      $\{\epsilon_{n'n;m}^{TE,TE}\}_{n,n'=1}^{\infty}:\ell^2\to\ell^2$ is
      uniformly bounded for $h\ll1$.

    \item[(v).] The two INF column vectors $\bC_m^{TE}$ and
      $\bC_m^{TM}$ are uniformly bounded in $\ell^2$ as $h\to 0+$.
      For $m\neq 0$, 
      \begin{align}
        \label{eq:Cn'mTE}
        C_{n';m}^{TE} =& -\bi \lambda_{m0}\cos(s_{m0}^Nl/2)h^{1/2}\left[({\cal S}_0[(n'\pi)^{1/2}\phi_{ n' }],\phi_{0})_{L^2(0,1)} + {\cal O}(h\log h)\epsilon_{n';m}^{CTE}  \right],\\
        \label{eq:Cj'mTM}
        C_{j';m}^{TM} =& m\cos(s_{m0}^Nl/2)h^{1/2}\left[({\cal S}_0[\phi_0],(j'\pi)^{1/2}\phi_{j'})_{L^2(0,1)} +{\cal O}(h\log h)\epsilon_{n';m}^{CTM}\right],
      \end{align}
      and for $m=0$,
      \begin{align}
        \label{eq:Cn'mTE:m=0}
        C_{n';m}^{TE} =& 0,\\
        \label{eq:Cj'mTM:m=0}
        C_{j';m}^{TM} =& -\sqrt{2\pi}\bi \cos(kl/2)h\left[({\cal S}_0[\phi_0],(j'\pi)^{1/2}\phi_{j'})_{L^2(0,1)} +{\cal O}(h\log h)\epsilon_{n';m}^{CTM}\right],
      \end{align}
      where the two INF column vectors $\{\epsilon_{n';m}^{CTE}\}$ and
      $\{\epsilon_{j';m}^{CTM}\}$ are uniformly bounded in $\ell^2$ for $h\ll1$.

    \item[(vi).] The two INF row vectors $\bR_m^{TE}$ and
      $\bR_m^{TM}$ are uniformly bounded in $\ell^2$ as $h\to 0+$.
      For $m\neq 0$,
      \begin{align}
        \label{eq:RnmTE}
        R_{n;m}^{TE} =& -\bi h^{1/2}({\cal S}_0[\phi_{0}],(n\pi)^{1/2}\phi_{ n })_{L^2(0,1)} + {\cal O}(h^{3/2} \log h)\{\epsilon_{n;m}^{RTE}\},\\
        \label{eq:RjmTM}
        R_{j;m}^{TM} =& \frac{m}{\lambda_{m0}}k^2h^{1/2}\left[({\cal S}_0[\phi_0],(j\pi)^{1/2}\phi_{j})_{L^2(0,1)} + {\cal O}(h\log h)\{\epsilon_{j;m}^{RTM}\}  \right],
      \end{align}
      and for $m=0$,
      \begin{align}
        \label{eq:RnmTE:m=0}
        R_{n;m}^{TE} =& 0,\\
        \label{eq:RjmTM:m=0}
        R_{j;m}^{TM} =& \frac{\bi k}{\sqrt{2\pi}} \left[({\cal S}_0[(j\pi)^{1/2}\phi_{j}],\phi_0)_{L^2(0,1)} + {\cal O}(h\log h)\{\epsilon_{j;m}^{RTM}\}  \right],
      \end{align}
      where the two INF row vectors $\{\epsilon_{n;m}^{RTE}\}$ and
      $\{\epsilon_{j;m}^{RTM}\}$ in $\ell^2$ are uniformly bounded  for $h\ll1$.
    \item[(vii).] As $h\to 0^+$, for $m\neq 0$,
      \begin{align}
        \label{eq:Amm}
        A_{mm} =& 2\cos(s_{m0}^Nl/2)\lambda_{m0}^N[-h\frac{\log h}{4\pi} + \alpha_m(k)h + \bi \beta_m(k)h]\nonumber\\
                &-\frac{2k^2m^2\cos(s_{m0}^Nl/2)}{\lambda_{m0}}[-h\frac{\log h}{4\pi} + \tilde{\alpha}_m(k)h + \bi \tilde{\beta}_m(k)h ] \nonumber\\
        &+ \cos(s_{m0}^Nl/2){\cal O}(h^2\log h),
      \end{align}
      and for $m=0$,
      \begin{align}
        \label{eq:Amm:m=0}
        A_{mm} =& -\cos(kl/2)k[-h\frac{\log h}{4\pi} + \alpha_1(k)h + \bi \beta_1(k)h + {\cal O}(h^2\log h)],
      \end{align}
      where for $m\in\mathbb{Z}$,
\begin{align}
  \label{eq:alpham:k}
  \alpha_m(k) 
  =&\frac{3}{8\pi}+ \frac{1}{\pi}\int_{0}^{\pi/2}\frac{(\cos(k\sin(\theta))-1)\cos(2m\theta)}{\sin(\theta)}d\theta \nonumber\\
  &+ \frac{1}{4\pi}[\log 2 -\gamma - \psi(|m|+1/2)],\\
  \label{eq:betam:k}
  \beta_m(k) = &\frac{1}{\pi}\int_{0}^{\pi/2}\frac{\sin(k\sin(\theta))\cos(2m\theta)}{\sin(\theta)}d\theta=\frac{1}{2}\int_{0}^{k}J_{2m}(t)dt,\\
  \label{eq:talpham:k}
  \tilde{\alpha}_m(k)=&\frac{\alpha_{m+1}(k)+\alpha_{m-1}(k)}{2},\\
  \label{eq:tbetam:k}
  \tilde{\beta}_m(k)=&\frac{\beta_{m+1}(k)+\beta_{m-1}(k)}{2},
\end{align}
$\gamma$ is Euler's constant, and $\psi$ denotes the logarithmic derivative of
gamma function (c.f. \cite[\S 5.2(i)]{nist10}).
  \end{itemize}
  In the above, the prefactors in the ${\cal O}$-notations depend only on ${\cal
  B}$
  and $m$.
  \begin{proof}
    Details of the proof are presented in Appendix C.
  \end{proof}
\end{mylemma}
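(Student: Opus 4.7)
The plan is to reduce every matrix and vector element to one-dimensional integrals on the rescaled radial interval $(0,1)$ by combining the identity~\eqref{eq:LkSk} with the thin-annulus asymptotics of the Neumann and Dirichlet eigenfunctions supplied in Appendix~A. The first step is to compute the relevant surface divergences: since the rotation ${\cal R}$ satisfies ${\cal R}\nabla_2 = \cl_2$ and ${\cal R}\cl_2 = -\nabla_2$, one has $\Dv({\cal R}\nabla_2\psi_{ij}^D) = 0$, $\Dv({\cal R}\nabla_2\log r) = 0$, while $\Dv({\cal R}\cl_2\psi_{mn}^N) = \lambda_{mn}^N \psi_{mn}^N$. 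Substituting these into~\eqref{eq:LkSk} rewrites each pairing as a $k^2$-weighted piece $k^2({\cal S}_k\bF,\bG)$ plus a $\Dv$-weighted scalar piece that survives only when a ${\cal R}\cl_2\psi^N$ appears on both sides.

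Second, I would substitute the small-$h$ asymptotics. After the radial rescaling $r = 1 + h t$ with $t\in(0,1)$, the radial profiles become $\sqrt{2}\cos(n\pi t)$, $\sqrt{2}\sin(j\pi t)$, or the constant $1$, matching $\phi_n$ from~\eqref{eq:phi} up to the normalization factor $1/\sqrt{\pi r h}$. Each radial integration contributes a Jacobian $h\,dt$, so a double integral over $A^h\times A^h$ carries a prefactor $h^2$. The kernel $\Phi_k(x;y)$ admits $|x-y|^2 = 4\sin^2\tfrac{\theta-\theta'}{2} + {\cal O}(h)$ off the radial diagonal and $|x-y|^2 \approx h^2(t-t')^2 + (\theta-\theta')^2$ near it; projection onto the $m$-th angular Fourier mode then extracts from its near-diagonal singularity precisely the one-dimensional logarithmic kernel defining ${\cal S}_0$.

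Third, I would dispatch parts (i)--(vi) pairing by pairing. For (i) and (iv), the leading-order behavior of the $\Dv$-pairing (resp.\ of the $k^2$ term for TM--TM) produces exactly $-2({\cal S}_0[(n'\pi)^{1/2}\phi_{n'}], (n\pi)^{1/2}\phi_n)_{L^2(0,1)}$, with the ${\cal O}(h)$ remainder controlled uniformly by the $\widetilde{H^{-1/2}}\to H^{1/2}$ boundedness of ${\cal S}_0$ on the basis $\{\phi_n\}$. For (ii) and (iii), the $\Dv$-term vanishes, and the $k^2({\cal S}_k\bF,\bG)$ piece picks up one extra power of $h$ from the $\sin$--$\cos$ radial orthogonality at leading order; the factor $(1-\delta_{m0})$ appears because at $m=0$ the field $\cl_2\psi_{0j}^D$ is purely angular while $\nabla_2\psi_{0n'}^N$ is purely radial, and the rotation-invariant ${\cal S}_k$ cannot couple them. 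For (v) and (vi), the profile $\psi_{m0}^N$ equals the constant $1/\sqrt{\pi h(h+2)} = (1+{\cal O}(h))/\sqrt{2\pi h}$, which gives the claimed $h^{1/2}$ prefactor; the vanishing at $m=0$ of $R_{n;0}^{TE}$ and $C_{n';0}^{TE}$ follows by the same radial-versus-angular orthogonality argument, and the $\sqrt{2\pi}$ in~\eqref{eq:Cj'mTM:m=0} is supplied by $\cl_2\log r = -\hat\theta/r$.

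The principal obstacle is part (vii), where both the $\log h$ leading order and the explicit constants $\alpha_m(k)$, $\beta_m(k)$ must be extracted from the self-pairing. The approach is to split $\Phi_k(x;y) = [\Phi_k(x;y) - \Phi_0^\flat(x;y)] + \Phi_0^\flat(x;y)$ with $\Phi_0^\flat(x;y) = -\log|x-y|/(2\pi)$; the regular remainder, analytic in $|x-y|$, contributes $\alpha_m(k)$ and $\beta_m(k) = \tfrac{1}{2}\int_0^k J_{2m}(t)\,dt$ after reducing $\int_0^{2\pi}e^{\bi k|x-y|}e^{\bi m\phi}d\phi$ on the annulus to the Bessel integrals~\eqref{eq:alpham:k}--\eqref{eq:betam:k} via a Jacobi--Anger expansion. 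The logarithmic part yields the $-h\log h/(4\pi)$ leading term and the digamma constant $[\log 2-\gamma-\psi(|m|+1/2)]/(4\pi)$ inside $\alpha_m(k)$ from the classical Fourier coefficient $\tfrac{1}{2\pi}\int_0^{2\pi}\log|2\sin(\phi/2)|e^{\bi m\phi}d\phi = -1/(2|m|)$ for $m\neq 0$, combined with a delicate cancellation near $\phi=0$ where the $h$-scale radial variable meets the angular singularity. The additional $-2k^2m^2/\lambda_{m0}^N$ piece in~\eqref{eq:Amm} arises from the $k^2({\cal S}_k\bF,\bG)$ term at $\bF=\bG={\cal R}\cl_2\psi_{m0}^N=-\nabla_2\psi_{m0}^N$, whose angular derivative produces $(\bi m/r)^2$ and, after the thin-annulus normalization, the stated $m^2/\lambda_{m0}^N$ factor.
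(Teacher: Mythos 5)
Your skeleton matches the paper's Appendix C --- split each pairing via \eqref{eq:LkSk} into a $\Dv$-part and a $k^2$-part, rescale $r=1+ht$, and recognize ${\cal S}_0$ in the thin-annulus limit --- and your surface-divergence computations and the $m=0$ radial/angular decoupling argument for (ii), (iii), (v), (vi) are sound. But two steps, both of which carry the real weight of the lemma, would fail as written.

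The first and most serious gap is the kernel decomposition in part (vii). You split $\Phi_k=[\Phi_k-\Phi_0^\flat]+\Phi_0^\flat$ with $\Phi_0^\flat(x;y)=-\log|x-y|/(2\pi)$ and call the remainder analytic; it is not, since $\Phi_k(x;y)=e^{\bi k|x-y|}/(4\pi|x-y|)$ has a $1/|x-y|$ singularity that no logarithm can remove. The logarithmic structure you need lives one level down: it appears only in the angular Fourier coefficient $F_m(r,r')=\frac{1}{2\pi}\int_0^{2\pi}e^{\bi k\rho}\rho^{-1}e^{\bi m\theta}\,d\theta$, whose static part equals $Q_{m-1/2}(w)/(2\pi\sqrt{rr'})$ with $w=(r^2+r'^2)/(2rr')$, and it is the expansion of the Legendre function $Q_{m-1/2}$ as $w\to 1^+$ that simultaneously produces the $-\log[h|r-r'|]/(2\pi)$ singularity and the constant $\frac{1}{4\pi}[\log 2-\gamma-\psi(|m|+1/2)]$. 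Your proposed source of that constant, the Fourier coefficient $\frac{1}{2\pi}\int_0^{2\pi}\log|2\sin(\phi/2)|e^{\bi m\phi}d\phi=-1/(2|m|)$, is a purely rational number and cannot reproduce the digamma term, whose value $3\log 2-2\sum_{j=1}^{|m|}(2j-1)^{-1}$ (up to the $\gamma$'s) is characteristic of $Q_{m-1/2}$, not of the 2D Laplace kernel. This is precisely what the paper's Lemma C.2 supplies through Legendre/hypergeometric identities, and your argument has no substitute for it; the same lemma is also what justifies, in parts (i)--(iv), that the full kernel $F_m(1+hr,1+hr')$ equals $-\log[h|r-r'|]/(2\pi)$ plus controllably smooth corrections, which is the step that turns the surface pairings into ${\cal S}_0$-pairings.

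The second gap is the uniform $\ell^2\to\ell^2$ bound on the error matrices $\epsilon^{TE,TE}_{n'n;m}$ etc. The entries carry the growing weights $(n\pi)^{1/2}$, $(n'\pi)^{1/2}$ and are multiplied by $h$-dependent profiles such as $\gamma^N_m(1+ht;h)$, so a bare appeal to the $\widetilde{H^{-1/2}}(0,1)\to H^{1/2}(0,1)$ boundedness of ${\cal S}_0$ on the basis $\{\phi_n\}$ does not close the estimate. The paper devotes a separate lemma (Lemma C.1) to showing that matrices of the form $\{({\cal S}_0[(n')^{1/2}f\cos(n'\pi\cdot)],\,n^{1/2}g\cos(n\pi\cdot))_{L^2(0,1)}\}$ are bounded on $\ell^2$ uniformly in $\|f\|_{W^1_\infty}$, $\|g\|_{W^1_\infty}$; the proof requires estimating the Fourier coefficients of the weighted trigonometric profiles in $\widetilde{H^{-1/2}}(0,1)$ and is where the half order of smoothing of ${\cal S}_0$ is actually spent against the half power of $n$. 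Without an argument of this type, the "uniformly bounded $\epsilon$-matrices" claims in (i)--(vi) are unsupported.
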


Now, we define three INF matrices
\begin{align*}
{\bB}_m:=\left[
  \begin{array}{lll}
    \bB_m^{TE,TE} & \bB_m^{TE,TM}  \\
    \bB_m^{TM,TE} & \bB_m^{TM,TM} \\
  \end{array}
  \right],\ {\bP}:=[p_{n'n}]_{n',n\in\mathbb{N}^*},\ {\bP}_2:=\left[
  \begin{array}{lll}
     \bP &   \\
     & \bP \\
  \end{array}
  \right],
\end{align*}
which are uniformly bounded from $\ell^2$ to $\ell^2$ for $h\ll 1$, and four INF column/row
vectors
\begin{align*}
  \bc_m:=\left[
  \begin{array}{l}
     \bc_m^{TE}\\
    \bc_m^{TM} \\
  \end{array}
  \right],\quad  \bR_m:=\left[
  \begin{array}{ll}
     \bR_m^{TE} & \bR_m^{TM} \\
  \end{array}
  \right],\quad\bC_m:=\left[
  \begin{array}{l}
     \bC_m^{TE}\\
    \bC_m^{TM} \\
  \end{array}
  \right],\quad \bp:=[p_{n'0}]_{n'\in\mathbb{N}^*},
\end{align*}
uniformly bounded in $\ell^2$. In the above, the element 
\[
  p_{n'n} =\begin{cases}
  ({\cal S}_0[(n'\pi)^{1/2}\phi_{n'}],(n\pi)^{1/2}\phi_n)_{L^2(0,1)} &
  n\in\mathbb{N}^*;\\
  ({\cal S}_0[(n'\pi)^{1/2}\phi_{n'}],\phi_0)_{L^2(0,1)} & n=0;\\
\end{cases}
\]
for $n\in\mathbb{N}$, where $\phi_{n}$ was defined in
(\ref{eq:phi}). Then, equation (\ref{eq:eig:m}) becomes
\begin{align}
\label{eq:eig:m:2}
  \left[
  \begin{array}{ll}
     D_m - A_{mm} &  -\bR_m\\
   -\bC_m & \bI - \bB_m \\
  \end{array}
  \right]\left[
  \begin{array}{l}
    d_m\\
    \bc_m\\
  \end{array}
  \right] = {\bf 0},\quad m\in\mathbb{Z}.
\end{align}
We have the following lemma.
\begin{mylemma}
  \label{lem:inv:I-Bm}
  For $0<h\ll 1$ and $k\in{\cal B}$, $\bB_m$ is
  uniformly bounded from $\ell^2$ to $\ell^2$ with
  \begin{align}
    \label{eq:bounded:Bm}
    ||\bB_m + 2{\bP}_2|| = {\cal O}(h)\quad \mbox{as} \; h\to 0^+.
  \end{align}
  Moreover, $\bI-\bB_m$ and $\bI + 2\bP_2$ attain uniformly bounded inverses for $h\ll1$, and there holds
  \begin{align}
    \label{eq:inv:Bm}
    ||({\bI}-\bB_m)^{-1} - ({\bI}+2{\bP}_2)^{-1}|| = {\cal O}(h),\quad \quad \mbox{as} \; h\to 0^+.
  \end{align}
  In the above, the prefactors in the ${\cal O}$-notations depend only on $m$
  and ${\cal B}$. 
  \begin{proof}
    The estimation~\eqref{eq:bounded:Bm} follow from
    Lemma~\ref{lem:dm'm:h} (i)-(iv). Using the Neumann series, we see that (\ref{eq:inv:Bm}) holds if $\bI + 2\bP_2$ is invertible, which is true since ${\cal S}_0$
    is positive and bounded below \cite[Cor. 8.13]{mcl00}.
  \end{proof}
\end{mylemma}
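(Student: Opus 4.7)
The plan is to read off the block structure of $\bB_m$ from Lemma~\ref{lem:dm'm:h}, identify its leading-order part as $-2\bP_2$, and then deduce the inverse-bound via the Neumann series together with the positivity of $\mathcal{S}_0$.

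First I would establish \eqref{eq:bounded:Bm}. Writing $\bB_m+2\bP_2$ as a $2\times 2$ block INF matrix, parts (i)--(iv) of Lemma~\ref{lem:dm'm:h} give
\begin{align*}
(\bB_m^{TE,TE})_{n'n} + 2 p_{n'n} &= {\cal O}(h)\,\epsilon_{n'n;m}^{TE,TE}, \\
(\bB_m^{TM,TM})_{j'j} + 2 p_{j'j} &= {\cal O}(h)\,\epsilon_{j'j;m}^{TM,TM}, \\
(\bB_m^{TE,TM})_{n'j} &= {\cal O}(h)\,\epsilon_{n'j;m}^{TE,TM}, \\
(\bB_m^{TM,TE})_{j'n} &= {\cal O}(h)\,\epsilon_{j'n;m}^{TM,TE},
\end{align*}
and the four error matrices $\{\epsilon^{\bullet}_{\cdot\cdot;m}\}$ define operators on $\ell^2$ that are uniformly bounded as $h\to 0^+$. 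Assembling the blocks yields a uniformly bounded operator on $\ell^2\oplus\ell^2$ of norm $\mathcal{O}(h)$, proving \eqref{eq:bounded:Bm}.

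Next I would prove that $\bI+2\bP_2$ is invertible with a uniformly bounded inverse. The matrix $\bP$ is, by the definition $p_{n'n}=(\mathcal{S}_0[(n'\pi)^{1/2}\phi_{n'}],(n\pi)^{1/2}\phi_n)_{L^2(0,1)}$, precisely the Gram matrix of $\mathcal{S}_0$ acting on the rescaled Fourier cosine basis. Since the normalizing weights $(n\pi)^{1/2}$ are exactly the ones defining the $\widetilde{H}^{-1/2}(0,1)$--norm used in Section~3, $\bP$ is unitarily equivalent to $\mathcal{S}_0:\widetilde{H}^{-1/2}(0,1)\to H^{1/2}(0,1)$ viewed through the duality pairing. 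By \cite[Cor.~8.13]{mcl00}, $\mathcal{S}_0$ is self-adjoint, positive and bounded below, hence $\bP$ is a bounded self-adjoint positive operator on $\ell^2$ with a strictly positive spectrum bounded away from $0$. Consequently $\bI+2\bP$ is positive-definite with bounded inverse, and so is its block-diagonal extension $\bI+2\bP_2$.

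Combining the two preceding steps, $\bI-\bB_m = (\bI+2\bP_2) + (\bB_m+2\bP_2) \cdot(-1) = (\bI+2\bP_2)\bigl[\bI - (\bI+2\bP_2)^{-1}(\bB_m+2\bP_2)\bigr]$, and the bracketed factor is a uniformly bounded invertible operator by the Neumann series once $h$ is small enough that $\|(\bI+2\bP_2)^{-1}\|\cdot\|\bB_m+2\bP_2\|<1/2$. This gives the uniform bound on $(\bI-\bB_m)^{-1}$. For \eqref{eq:inv:Bm} I invoke the second resolvent identity
\[
(\bI-\bB_m)^{-1} - (\bI+2\bP_2)^{-1} = (\bI-\bB_m)^{-1}\bigl[(\bI+2\bP_2)-(\bI-\bB_m)\bigr](\bI+2\bP_2)^{-1} = (\bI-\bB_m)^{-1}(\bB_m+2\bP_2)(\bI+2\bP_2)^{-1},
\]
and the norm of the right-hand side is $\mathcal{O}(h)$ by \eqref{eq:bounded:Bm} and the two uniform inverse bounds. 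All prefactors depend only on $m$ and ${\cal B}$ because the same is true of the constants in Lemma~\ref{lem:dm'm:h}.

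The only genuinely non-routine step is the positivity claim for $\bP$: one must check that the weights $(n\pi)^{1/2}$ in the definition of $p_{n'n}$ really do make $\bP$ unitarily equivalent to the positive, bounded-below operator $\mathcal{S}_0$, rather than some similar-looking but indefinite composition. Everything else is Neumann series and the second resolvent identity.
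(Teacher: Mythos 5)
Your proof is correct and follows essentially the same route as the paper's own (two-line) argument: the $\mathcal{O}(h)$ bound \eqref{eq:bounded:Bm} read off from Lemma~\ref{lem:dm'm:h} (i)--(iv), invertibility of $\bI+2\bP_2$ from the positivity and coercivity of $\mathcal{S}_0$ on $\widetilde{H^{-1/2}}(0,1)$, and a Neumann-series/resolvent-identity step for \eqref{eq:inv:Bm}. The one nit is terminological: the map $\{a_{n'}\}\mapsto\sum_{n'}a_{n'}(n'\pi)^{1/2}\phi_{n'}$ is a bounded isomorphism of $\ell^2$ onto a subspace of $\widetilde{H^{-1/2}}(0,1)$, not a unitary, so $\bP$ is \emph{congruent} to $\mathcal{S}_0$ rather than unitarily equivalent --- but congruence preserves positivity and boundedness below of the quadratic form, which is all your argument actually uses.
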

From the invertbility of the operator $\bI-\bB_m$ in the above lemma, for each $m\in \mathbb{Z}$ the system (\ref{eq:eig:m:2}) can be further reduced to the following single nonlinear equation:
\begin{equation}
  \label{eq:single}
  \left[D_m(k) - A_{mm}(k)- \bR_m(k)(\bI-\bB_m(k))^{-1}\bC_m(k)  \right]d_m = 0,
\end{equation}
where we make the argument $k$ explicit to emphasize the dependence of the equation on $k$.
We call \eqref{eq:single} the characteristic equation, which is the resonance condition for the scattering problem \eqref{eq:problemE_1} - \eqref{eq:problemE_4}.

\subsubsection{Asymptotic analysis of resonances}
We are ready to state and prove our first main result for the resonances.
\begin{mytheorem}
  \label{thm:even:res}
  Assume that $h\ll 1$, the resonances for the scattering problem \eqref{eq:problemE_1} - \eqref{eq:problemE_4} in the bounded region $\mathcal{B}$ are given as follows:
  \begin{itemize}
      \item[(i)] For each given integer $m\neq0$, there exist a finite sequence of  resonance in $\mathcal{B}$
\begin{equation}
  \label{eq:even:res}
    k^*_{m,2m'}=k_{m,2m'} -\frac{m^2h}{2k_{m,2m'}}- \frac{2\Pi_m(k_{m,2m'},h)}{k_{m,2m'}l} + {\cal O}(h^2\log^2h), \quad  m'\in\mathbb{N}^* \; \mbox{is bounded},
\end{equation}
and a near-$|m|$ resonance
\begin{equation}
  \label{eq:even:nearm}
  k^*_m = |m|  - \frac{|m| h}{2} - \frac{|m| h}{l}\left[ (\tilde{\alpha}_m(|m|)-\alpha_m(|m|))
      +\bi(\tilde{\beta}_m(|m|)-\beta_m(|m|))\right] + {\cal O}(h^2\log h),
\end{equation}
where $k_{m,2m'} = \sqrt{m^2 + \frac{(2m')^2\pi^2}{l^2}}$, 
\begin{align}
   \label{eq:Pim}
    \Pi_m(k,h) =& \frac{(m^2-k^2)}{2\pi}h\log h  +2k^2h(\tilde{\alpha}_m(k) +\bi\tilde{\beta}_m(k)) - 2m^2h(\alpha_m(k)+\bi\beta_m(k))\nonumber\\
    &+(m^2-k^2)h(\bp^{T}(\bI+2\bP)^{-1}\bp),
\end{align}
and $\alpha_m$, $\beta_m$, $\tilde{\alpha}_m$ and
$\tilde{\beta}_m$ are defined in (\ref{eq:alpham:k})-(\ref{eq:tbetam:k}).
\item [(ii)] m = 0: there exist a finite sequence of resonance in $\mathcal{B}$
\begin{equation}
  \label{eq:even:res:m=0}
  k^*_{0,2m'} =k_{0,2m'}-2k_{0,2m'}\Pi_0(k_{0,2m'},h) + {\cal O}(h^2\log^2 h),
\end{equation}
where 
\begin{align}
  \label{eq:Pi0}
      \Pi_0(k,h) = & -h\frac{\log h}{4\pi} + \alpha_1(k)h + \bi
     \beta_1(k)h - h\bp^{T}(\bI+2\bP)^{-1}\bp.
\end{align}
  \end{itemize}
  Moreover, each resonance in \eqref{eq:even:res} - \eqref{eq:even:res:m=0} obtains an imaginary part of order ${\cal O}(h)$.
\begin{proof}
We obtain the resonances for the scattering problem   by solving for the characteristic values satisfying
\begin{equation}
    \label{eq:single:m}
    D_m(k) =A_{mm}(k)+\bR_m(k)(\bI-\bB_m(k))^{-1}\bC_m(k)
\end{equation}
for each $m \neq 0$, which reads
  \begin{equation}
    \label{eq:tmp2}
    s_{m0}^N(e^{\bi s_{m0}^N l/2}-e^{-\bi s_{m0}^N l/2}) = -\bi(e^{\bi s_{m0}^N l/2}+e^{-\bi s_{m0}^N l/2})\left[\Pi_m(k,h) + {\cal O}(h^2\log h)  \right].
  \end{equation}
Recall that $s_{m0}^N$ is defined by (\ref{eq-s}) with $\lambda_{m0}^N =(\beta_{m0}^N)^2$ given in (\ref{eq-lambdaN}). Note that $\beta_{m0}^N$ attains asymptotic expansion (\ref{eq:asy:betam2:N}) as $h \to 0$. We first find the resonances that are away from the integer number $m$ as $h \to 0$. More precisely, at such resonances, we have 
$$
\liminf_{h\to 0}|s_{m0}^N|>0.
$$
To proceed, note that $e^{\bi
    s_{m0}^N l/2}-e^{-\bi s_{m0}^N l/2}={\cal O}(h\log h)$ for $h\ll 1$, since
the r.h.s. of (\ref{eq:tmp2}) is ${\cal O}(h\log h)$. Therefore, we have for some $m'\in\mathbb{N}^*$ that
    \[
      \epsilon_{mm'}:=s_{m0}^Nl -2m'\pi =o(1),\quad {\rm as}\quad h\to 0^+,
    \]
    and \eqref{eq:tmp2} leads to
    \[
      1 - e^{\bi \epsilon_{mm'}} = \frac{2\bi \Pi_m(k,h)}{s^N_{m0}+\bi \Pi_m(k,h)} + {\cal O}(h^2\log h).
    \]
    By Taylor's expansion of $\log(1-2x/(s_{m0}^N+x))$ at $x=0$ and by $\Pi_m(k,h)={\cal
      O}(h\log h)$,
    \begin{align*}
      \epsilon_{mm'} = -\bi\log\left[1 - \frac{2 \bi \Pi_m(k,h)}{s_{m0}^N+\bi\Pi_m(k,h)} -{\cal O}(h^2\log h)\right] =\frac{-2\Pi_m(k,h)}{s_{m0}^N} + {\cal O}(h^2\log^2 h).
    \end{align*}
    Therefore, $\epsilon_{m,2m'}={\cal O}(h\log h)$ so that $s_{m0}^Nl=2m'\pi + {\cal
      O}(h\log h)$. But by (\ref{eq:asy:betam2:N}),
    \[
    k = \sqrt{(\lambda_{m0}^N)^2 + (s_{m0}^N)^2} = \sqrt{m^2- m^2 h + s_{m0}^2} + {\cal O}(h^2).
    \] We thus have
  \[
    k = k_{m,2m'} + {\cal O}(h\log h).
  \]
  Now, according to the definition of $\Pi_m$ in (\ref{eq:Pim}), we have
  \[
    \Pi_m(k,h) = \Pi_m(k_{m,2m'},h) + {\cal O}(h^2\log^2 h), 
  \]
  so that 
  \[
    \epsilon_{mm'} = -\frac{2l\Pi_m(k_{m,2m'},h)}{(2m')\pi} + {\cal O}(h^2\log^2 h).
  \]
  Hence
  \begin{align*}
    s^N_{m0} =& \frac{(2m')\pi}{l} -\frac{2\Pi_m(k_{m,2m'},h)}{(2m')\pi} + {\cal O}(h^2\log^2 h),
  \end{align*}
  and one obtains the expansion (\ref{eq:even:res}).
  Therefore, resonances $k$ satisfying
  (\ref{eq:single:m}) attain the asymptotic expansion (\ref{eq:odd:res}) for $h\ll 1$ for some $m'\in\mathbb{N}^*$.

  As for the existence of resonances, one notices that when $k$ lies in the region $D_h=\{k\in\mathbb{C}:{\rm Re}(k)>0,|s_{m0}^N(k)l -(2m')\pi|\leq
  h^{1/2}\}\subset {\cal B}$, the following holds on the boundary of this disk
   \begin{align*}
     &\Bigg|\left[(e^{\bi s_{m0}^N l}+1) -(\bi s_{m0}^N)^{-1}(e^{\bi s_{m0}^N l}-1)\left[ \Pi_{m}(k,h) +{\cal O}(h^2\log h)\right]  \right] - \left[\bi (s_{m0}^N l - (2m')\pi)  \right]\Bigg| \\
     &={\cal O}(h)\leq \sqrt{h}=|\bi (s_{m0}^N l - (2m')\pi)|.
   \end{align*}
   Rouch\'e's theorem states that there exists a unique root for
   (\ref{eq:single:m}) in $D_h$. Similarly one can verify the expansion (\ref{eq:even:res:m=0}) for $m=0$.
   
Finally, we solve for resonances that are asymptotically close to the integer $m$ when $h \to 0$. To do so, 
assume that $s_{m0}^N=o(1)$, as $h\to 0^+$. 
  Since $(e^{\bi s_{m0}^Nl /2}+e^{-\bi s_{m0}^N l/2}) = 1 + {\cal O}([s_{m0}^N]^2)$,
  we have
  \begin{align*}
    \bi [s_{m0}^N]^2l=& s_{m0}^N(e^{\bi s_{m0}^N l/2}-e^{-\bi s_{m0}^N l/2})+{\cal O}([s_{m0}^N]^4)\\
    =& -\bi\left[\Pi_m(k,h) + {\cal O}(h^2\log h)  \right] + {\cal O}([s_{m0}^N]^2h\log h) + {\cal O}([s_{m0}^N]^4)\\
    =& -\bi \Pi_m(m,h)+ {\cal O}(h^2\log h) + {\cal O}([s_{m0}^N]^2h\log h) + {\cal O}([s_{m0}^N]^4)\\
    =&-2m^2h\bi\left[ (\tilde{\alpha}_m(m) +\bi\tilde{\beta}_m(m)) - (\alpha_m(m)+\bi\beta_m(m)) \right]\\
    &+ {\cal O}(h^2\log h) + {\cal O}([s_{m0}^N]^2h\log h) + {\cal O}([s_{m0}^N]^4).
  \end{align*}
  Thus, $[s_{m0}^N]^2= {\cal O}(h)$ and
  \begin{align*}
[s_{m0}^N]^2=& -2m^2hl^{-1}\left[ (\tilde{\alpha}_m(m) +\bi\tilde{\beta}_m(m)) - (\alpha_m(m)+\bi\beta_m(m)) \right]+ {\cal O}(h^2\log h),
  \end{align*}
  which implies (\ref{eq:even:nearm}). 
\end{proof}
\end{mytheorem}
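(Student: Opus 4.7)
The plan is to start from the reduced scalar characteristic equation \eqref{eq:single}, which already expresses the resonance condition as a single equation in $k$ for each angular momentum $m$. Substituting the asymptotic expansions of $A_{mm}(k)$, $\bR_m(k)$, $\bC_m(k)$, $\bB_m(k)$ from Lemma \ref{lem:dm'm:h} and using Lemma \ref{lem:inv:I-Bm} to replace $(\bI-\bB_m)^{-1}$ by $(\bI+2\bP_2)^{-1}$ modulo an $O(h)$ error, I would group all $O(h\log h)$ contributions into the single quantity $\Pi_m(k,h)$ defined in \eqref{eq:Pim}. Using $D_m(k)=s_{m0}^N\sin(s_{m0}^Nl/2)$ and $A_{mm}$'s leading factor $2\cos(s_{m0}^Nl/2)\lambda_{m0}^N$, the condition should reduce to the compact trigonometric equation
\begin{equation*}
s_{m0}^N\bigl(e^{\bi s_{m0}^Nl/2}-e^{-\bi s_{m0}^Nl/2}\bigr) = -\bi\bigl(e^{\bi s_{m0}^Nl/2}+e^{-\bi s_{m0}^Nl/2}\bigr)\bigl[\Pi_m(k,h) + O(h^2\log h)\bigr].
\end{equation*}

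I would then analyze the roots in two regimes. In the first regime where $\liminf_{h\to 0^+}|s_{m0}^N|>0$, the right-hand side is $O(h\log h)$, so $e^{\bi s_{m0}^Nl/2}-e^{-\bi s_{m0}^Nl/2}$ must itself be small. This forces $s_{m0}^Nl = 2m'\pi+\epsilon_{mm'}$ with $\epsilon_{mm'}=o(1)$ for some $m'\in\mathbb{N}^*$. A Taylor expansion of $\log(1-2x/(s_{m0}^N+x))$ gives $\epsilon_{mm'}=-2\Pi_m(k,h)/s_{m0}^N+O(h^2\log^2 h)$, and the dispersion identity $k^2=\lambda_{m0}^N+(s_{m0}^N)^2$ together with the expansion $\lambda_{m0}^N = m^2-m^2 h+O(h^2)$ from Appendix~A allows me to invert and obtain \eqref{eq:even:res}. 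Existence and uniqueness inside the disk $\{k:|s_{m0}^N(k)l-2m'\pi|\le h^{1/2}\}$ follow from Rouch\'e's theorem, since on its boundary the linearized term $\bi(s_{m0}^Nl-2m'\pi)$ has modulus $h^{1/2}$ while the full remainder is $O(h)$. In the second regime, assuming $s_{m0}^N\to 0$, I would expand both exponentials at the origin to obtain $\bi[s_{m0}^N]^2 l = -\bi\Pi_m(k,h)+O([s_{m0}^N]^4)+O(h^2\log h)$. Because $\Pi_m(k,h)=O(h\log h)$, this forces $[s_{m0}^N]^2=O(h)$, hence $k=|m|+O(h)$, and evaluating $\Pi_m$ at $k=|m|$ (where the $h\log h$ terms cancel between the $\lambda_{m0}^N$ and $k^2m^2/\lambda_{m0}^N$ contributions in \eqref{eq:Amm}) yields the near-$|m|$ expansion \eqref{eq:even:nearm}. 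The case $m=0$ is analogous, except that $D_0(k)=\sin(kl/2)$ lacks the prefactor $s_{m0}^N$, so the near-integer branch disappears and only \eqref{eq:even:res:m=0} with $k_{0,2m'}=2m'\pi/l$ survives. The $O(h)$ imaginary parts in each case come from the purely imaginary summands $2\bi k^2 h\tilde\beta_m(k)$ and $-2\bi m^2 h\beta_m(k)$ appearing in $\Pi_m$.

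The principal obstacle I anticipate is the bookkeeping that assembles $\Pi_m(k,h)$ in the precise form \eqref{eq:Pim}. The $O(h\log h)$ and $O(h)$ contributions to the characteristic equation arise from four distinct sources: the two logarithmic pieces inside $A_{mm}$ corresponding to the TE and to the TM/TEM-mediated piece weighted by $k^2 m^2/\lambda_{m0}^N$, the leading-order product $\bR_m^{TE}(\bI+2\bP)^{-1}\bC_m^{TE}$, and the cross product $\bR_m^{TM}(\bI+2\bP)^{-1}\bC_m^{TM}$ with its characteristic $m^2/\lambda_{m0}^N$ prefactor. Verifying that these sources combine, using $\lambda_{m0}^N\sim m^2$ and the shared rank-one structure of $\bp$ from Lemma \ref{lem:dm'm:h} (v)-(vi), into exactly the single Neumann-series term $(m^2-k^2)h\,\bp^T(\bI+2\bP)^{-1}\bp$ is the main technical step; once it is in place the rest is a Rouch\'e plus implicit-function argument.
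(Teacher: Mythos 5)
Your proposal follows essentially the same route as the paper's proof: reduce the characteristic equation \eqref{eq:single:m} to the trigonometric form \eqref{eq:tmp2}, split into the regime $\liminf_{h\to 0}|s_{m0}^N|>0$ (where the Taylor expansion of the logarithm, the dispersion relation $k^2=\lambda_{m0}^N+(s_{m0}^N)^2$, and Rouch\'e's theorem on the disk $|s_{m0}^N l-2m'\pi|\le h^{1/2}$ give \eqref{eq:even:res}) and the regime $s_{m0}^N=o(1)$ (where the quadratic expansion of the exponentials gives \eqref{eq:even:nearm}), with the $m=0$ case handled analogously. The bookkeeping you flag as the main obstacle --- assembling the four $O(h\log h)$ contributions into the single expression $\Pi_m(k,h)$ of \eqref{eq:Pim} --- is precisely the step the paper states without detail in passing from \eqref{eq:single:m} to \eqref{eq:tmp2}, so your identification of it as the genuine technical content is accurate.
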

\begin{myremark}
   In fact, the constant $\bp^{T}(\bI+2\bP)^{-1}\bp$ in the expansions~(\ref{eq:even:res}) and~(\ref{eq:even:res:m=0}) is given explicitly as
  \[
    \bp^{T}(\bI+2\bP)^{-1}\bp = \frac{1}{2\pi^2} - \frac{1}{\pi^2}\log\left( \frac{\pi}{2} \right).
  \]
  It is obtained in \cite{holsch19} for the 2D slit problem using the matched asymptotic expansion and also numerically verified in \cite{zholu21}.
\end{myremark}
\begin{myremark}
The resonances attain the imaginary parts of order ${\cal O}(h)$, thus they are very close to the real axis when $h\ll1$. We point out that $k^*_{m,2m'}$ and $k^*_{m}$ given in \eqref{eq:even:res} and \eqref{eq:even:nearm} are resonances associated with the TE modes in the annular hole. Note that the leading-order of resonances $k^*_{m,2m'}$ depends on the metal thickness $l$, while the leading-order of resonances  $k^*_{m}$ is independent of $l$.  The independence on the metal thickness for the latter is also called epsilon-near-zero phenomenon \cite{yoo16}. On the other hand, $k^*_{0,2m'}$ given in \eqref{eq:even:res:m=0} are resonances associated with the TEM mode in the annular hole. As discussed in Section 5, the excitation of these two types of resonances are very different. 
\end{myremark}


\subsection{Resonances for Problem (O)}
In this section, we characterize the resonances for scattering
problem (O). Due to the similarity between the even problem \eqref{eq:problemE_1} - \eqref{eq:problemE_4} and the odd problem \eqref{eq:problemO_1} - \eqref{eq:problemO_3}, we shall
directly state the difference and the final results. By the same vectorial mode matching procedure, we can still obtain the linear system
(\ref{eq:INF:sys}) but with the following replacements: on the l.h.s,
\begin{align*}
  \sin(kl/2)\to \cos(kl/2),\quad \sin(s_{mn}^Nl/2)\to \cos(s_{mn}^Nl/2),\quad \sin(s_{ij}^Dl/2)\to \cos(s_{ij}^Dl/2);
\end{align*}
on the r.h.s.,
\begin{align*}
  \cos(kl/2)\to -\sin(kl/2),\quad \cos(s_{mn}^Nl/2)\to -\sin(s_{mn}^Nl/2),\quad \cos(s_{ij}^Dl/2)\to -\sin(s_{ij}^Dl/2),
\end{align*}
and the auxiliary coefficients $|2+\sin(s_{mn}^Nl/2)|$ and
$|2+\sin(s_{ij}^Dl/2)|$ on both sides keep unchanged. With the above minor
changes, we obtain the eigenvalue problem (\ref{eq:eig:m:2}) and the
characteristic equation (\ref{eq:single:m}) with
$D_m$, $A_{mm}$, $\bR_m$, $\bB_m$ and $\bC_m$ changed accordingly. 
From now on, we shall add the superscript $o$ (or $e$) to all the elements in  (\ref{eq:eig:m:2}) to indicate that they are for Problem (O) (or (E)).

The asymptotic analysis of the resonances are stated in the following theorem.
\begin{mytheorem}
  \label{thm:odd:res}
 Assume that $h\ll 1$, there exist a finite sequence of resonances for Problem
  (O) in $\mathcal{B}$ for each given integer $m\neq0$:
\begin{equation}
  \label{eq:odd:res}
    k^*_{m,2m'+1}=k_{m,2m'+1} -\frac{m^2h}{2k_{m,2m'+1}}- \frac{2\Pi_m(k_{m,2m'+1},h)}{k_{m,2m'+1}l} + {\cal O}(h^2\log^2h),\quad m'\in\mathbb{N},
\end{equation}
and a finite sequence of resonances when $m=0$:
\begin{equation}
  \label{eq:odd:res:m=0}
  k^*_{0,2m'+1} =k_{0,2m'+1}-2k_{0,2m'+1}\Pi_0(k_{0,2m'+1},h) + {\cal O}(h^2\log^2 h),\quad m'\in\mathbb{N},
\end{equation}
where $k_{m,2m'+1} = \sqrt{m^2 + \frac{(2m'+1)^2\pi^2}{l^2}}$.
\begin{proof}
 For the scattering problem (O), the characteristic equation (\ref{eq:single:m}) becomes
  \begin{align}
    \label{eq:tmp1}
    \bi s_{m0}^N(e^{\bi s_{m0}^N l}+1)
    =&(e^{\bi s_{m0}^N l}-1)\left[ \Pi_{m}(k,h) +{\cal O}(h^2\log h)\right]. 
  \end{align}
  In the following, we claim the trivial solution $k=\sqrt{\lambda_{m0}^N}$ is
  not a resonance. According to Lemmas~\ref{lem:dm'm:h} (v) and
  \ref{lem:inv:I-Bm}, $\bC^o_m\equiv 0$ so that $\bc^o_m\equiv 0$. But
  (\ref{eq:rep:tE}) and (\ref{eq:rep:tH}) imply $\nu\times \bE^o =\nu\times\bH^o =
  0$ on $A^h$ so that $\bE^o=\bH^o\equiv 0$ in the whole space
  $\mathbb{R}^3\backslash\overline{\Omega_{\rm M}}$. Moreover,
  \begin{align*}
    \lim_{h\to 0}\frac{2}{l} =& \lim_{h\to 0}\lim_{k\to \sqrt{\lambda_{m0}^N}}\frac{\bi s_{m0}^N(e^{\bi s_{m0}^N l}+1)}{(e^{\bi s_{m0}^N l}-1)} = \lim_{h\to 0}\lim_{k\to \sqrt{\lambda_{m0}^N}}\left[ \Pi_{m}(k,h) +{\cal O}(h^2\log h)\right] = 0,
  \end{align*}
  which is impossible. Thus, there is no resonance near $s_{m0}$ for
  problem (O), which is the main difference compared with problem (E). The proof of the expansions ~\eqref{eq:odd:res} and \eqref{eq:odd:res:m=0} for the resonances follow the same lines as Theorem \ref{thm:even:res}.
 \end{proof}
\end{mytheorem}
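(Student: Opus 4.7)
The plan is to mirror the proof of Theorem~\ref{thm:even:res} while carefully tracking how the substitution $\sin\leftrightarrow\cos$ in the mode matching formulation alters the resulting characteristic equation, and to give separate attention to the one genuinely new phenomenon: the disappearance of the near-$|m|$ resonance. First I would rerun the vectorial mode matching of Section~4.1 for Problem~(O), verifying that the replacements listed in the excerpt lead to the same block system~(\ref{eq:eig:m:2}) but with $D_m^o$, $A_{mm}^o$, $\bR_m^o$, $\bB_m^o$, $\bC_m^o$ in place of their even-problem counterparts. Since Lemmas~\ref{lem:dm'm:h} and \ref{lem:inv:I-Bm} depend on the cosine/sine prefactors only through uniformly bounded quantities, the asymptotic estimates of every block carry over verbatim modulo these prefactor swaps, and the reduction to a single nonlinear equation for each $m\in\mathbb{Z}$ proceeds exactly as in \eqref{eq:single}. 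This produces the odd-problem characteristic equation
\begin{equation*}
  \bi s_{m0}^N\bigl(e^{\bi s_{m0}^N l}+1\bigr) = \bigl(e^{\bi s_{m0}^N l}-1\bigr)\bigl[\Pi_m(k,h)+{\cal O}(h^2\log h)\bigr],
\end{equation*}
which is \eqref{eq:tmp1}.

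Next I would search for roots near $s_{m0}^N l = (2m'+1)\pi$ for $m'\in\mathbb{N}$, since those are precisely the values at which $e^{\bi s_{m0}^N l}+1$ has small modulus and the left-hand side can balance a right-hand side of size ${\cal O}(h\log h)$. Writing $\epsilon := s_{m0}^N l-(2m'+1)\pi=o(1)$, one has $e^{\bi s_{m0}^N l}+1 = 1 - e^{\bi\epsilon}=-\bi\epsilon+{\cal O}(\epsilon^2)$ and $e^{\bi s_{m0}^N l}-1=-2+{\cal O}(\epsilon)$. Substituting and solving for $\epsilon$ by Taylor expansion gives $\epsilon = -2l\,\Pi_m(k_{m,2m'+1},h)/((2m'+1)\pi)+{\cal O}(h^2\log^2 h)$, and then the asymptotic expansion \eqref{eq:asy:betam2:N} of $\lambda_{m0}^N$ together with $k=\sqrt{\lambda_{m0}^N+(s_{m0}^N)^2}$ yields \eqref{eq:odd:res}. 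Existence and uniqueness of a root in a disk of radius $h^{1/2}$ about each such $(2m'+1)\pi$ follows from a Rouché argument structurally identical to the one in the proof of Theorem~\ref{thm:even:res}. The case $m=0$ is handled in the same manner, with $\Pi_0$ in place of $\Pi_m$, and yields \eqref{eq:odd:res:m=0}.

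The main obstacle, and the only genuinely new point compared with Problem~(E), is to rule out a resonance analogous to $k_m^*$ in \eqref{eq:even:nearm}, i.e.\ a root with $s_{m0}^N=o(1)$ as $h\to 0$. The cleanest route is the one sketched in the excerpt: in Problem~(O) the prefactor $-\sin(s_{m0}^N l/2)$ replaces the $\cos(s_{m0}^N l/2)$ appearing in the even-problem analogues of \eqref{eq:Cn'mTE}--\eqref{eq:Cj'mTM}, so at $k=\sqrt{\lambda_{m0}^N}$ we have $\bC_m^o\equiv 0$; together with invertibility of $\bI-\bB_m^o$ (Lemma~\ref{lem:inv:I-Bm}), this forces $\bc_m^o\equiv 0$, hence $\nu\times\bE^o = \nu\times\bH^o = 0$ on $A^h$ and by Theorem~\ref{thm:wp:hsp} the trivial solution in $\mathbb{R}^3\setminus\overline{\Omega_M}$. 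To convert this absence into a contradiction at nearby $k$, I expand \eqref{eq:tmp1} assuming $s_{m0}^N=o(1)$: then $e^{\bi s_{m0}^N l}+1=2+{\cal O}((s_{m0}^N)^2)$ and $e^{\bi s_{m0}^N l}-1=\bi s_{m0}^N l + {\cal O}((s_{m0}^N)^2)$, so the leading balance forces $2/l = \Pi_m(k,h)+o(1)$, which is impossible because $\Pi_m(k,h)={\cal O}(h\log h)\to 0$ as $h\to 0^+$. Combining this non-existence with the Rouché analysis of the previous paragraph gives exactly the finite sequences \eqref{eq:odd:res} and \eqref{eq:odd:res:m=0} as the only resonances of Problem~(O) in ${\cal B}$, completing the argument.
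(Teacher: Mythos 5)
Your proposal is correct and follows essentially the same route as the paper: the same odd-problem characteristic equation \eqref{eq:tmp1}, the same exclusion of a near-$\sqrt{\lambda_{m0}^N}$ root via $\bC_m^o\equiv 0$ and the contradiction $2/l=\lim\Pi_m=0$, and the same Taylor/Rouch\'e analysis near $s_{m0}^N l=(2m'+1)\pi$ borrowed from Theorem~\ref{thm:even:res}. The only difference is that you spell out the expansion steps that the paper delegates to ``follows the same lines as Theorem~\ref{thm:even:res}.''
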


\section{Electromagnetic field enhancement at resonant frequencies}

In this section, we solve the scattering problem (\ref{eq:E}) - (\ref{eq:smc}) when the incident wave $\{ {\bf E}^{\rm inc}, {\bf H}^{\rm inc} \}$ is present and study the electromagnetic field enhancement.

\subsection{Field enhancement due to the excitation of a TE mode in the annular hole}
Let us first consider the scattering problem when the incident frequency coincides with the real part of the resonance $k_{1,m'}^*$ ($m'\in\mathbb{N}^*$) in \eqref{eq:even:res} or \eqref{eq:odd:res}, or the resonance $k_1^*$ in \eqref{eq:even:nearm}.
For conciseness of the presentation, we only show the calculations for the normal incidence such that the polarization vectors in \eqref{eq:pla:inc} are given by  ${\bf E}^0=(0,1,0)^{T}$ and ${\bf H}^0=(1,0,0)^{T}$.

\begin{mytheorem}
\label{thm:enhance:te}
For a normal incident wave with the polarization vectors ${\bf E}^0=(0,1,0)^{T}$ and ${\bf H}^0=(1,0,0)^{T}$, the magnitude of electromagnetic field $\bE$ and $\bH$ in the hole $G^h$ attains the order ${\cal O}(h^{-1})$ at resonant frequencies ${\rm Re}(k_{1,m'}^*)$ for each $m'\in\mathbb{N}^*$ or ${\rm Re}(k_1^*)$. Specifically,
\begin{itemize}
  \item[(1).] If $k={\rm Re}(k_{1,m'}^*)$ for an even integer $m'$, 
  the following expansions hold inside the hole $G^{h}_+$:
    \begin{align}
    \label{eq:H3e:case1}
     H_3(x) =&
     h^{-1}(-1)^{m'/2}\frac{\cos(m'\pi/l x_3)x_1 e^{\bi k_{1,m'}l/2}\bi}{2[k_{1,m'}^{2}\tilde{\beta}_1(k_{1,m'})-\beta_1(k_{1,m'})]} + {\cal O}(\log h),\\
  E_1(x) 
  =&h^{-1}(-1)^{m'/2}\frac{k_{1,m'}e^{\bi k_{1,m'}l/2}\cos(m'\pi/l x_3)x_1x_2  }{2[k_{1,m'}^{2}\tilde{\beta}_1(k_{1,m'})-\beta_1(k_{1,m'})]} + {\cal O}(\log h)\\
  E_2(x) 
  =&-h^{-1}(-1)^{m'/2}\frac{k_{1,m'}e^{\bi k_{1,m'}l/2}\cos(m'\pi/l x_3)x_2^2  }{2[k_{1,m'}^{2}\tilde{\beta}_1(k_{1,m'})-\beta_1(k_{1,m'})]} + {\cal O}(\log h).
\end{align} 
  \item[(2).] If $k={\rm Re}(k_{1,m'}^*)$ for an odd integer $m'$, then in the hole $G^{h}_+$, we have
    \begin{align}
    \label{eq:H3o:case1}
     H_3(x) =&
      h^{-1}(-1)^{(m'-1)/2}\frac{\sin(m'\pi/l x_3)x_1e^{\bi k_{1,m'}l/2}\bi}{2[k_{1,m'}^{2}\tilde{\beta}_1(k_{1,m'})-\beta_1(k_{1,m'})]} + {\cal O}(\log h),\\
  E_1(x) 
  =&h^{-1}(-1)^{(m'-1)/2}\frac{k_{1,m'}e^{\bi k_{1,m'}l/2}\sin(m'\pi/l x_3)x_1x_2  }{2[k_{1,m'}^{2}\tilde{\beta}_1(k_{1,m'})-\beta_1(k_{1,m'})]} + {\cal O}(\log h),\\
  E_2(x) 
  =&-h^{-1}(-1)^{(m'-1)/2}\frac{k_{1,m'}e^{\bi k_{1,m'}l/2}\sin(m'\pi/l x_3)x_2^2  }{2[k_{1,m'}^{2}\tilde{\beta}_1(k_{1,m'})-\beta_1(k_{1,m'})]} + {\cal O}(\log h).
\end{align} 
\item[(3).] If $k={\rm Re}(k_1^*)$, then 
      \begin{align}
      \label{eq:H3e:case2}
      H_3(x)
        =&\frac{2\bi h^{-1}e^{\bi l/2}x_1}{\tilde{\beta}_1(1)-\beta_1(1)}+{\cal O}(\log h),\\
       E_1(x) =& \frac{2h^{-1}e^{\bi l/2}x_1x_2}{\tilde{\beta}_1(1)-\beta_1(1)}+{\cal O}(\log h),\\
  E_2(x) =& -\frac{2h^{-1}e^{\bi l/2} x_2^2}{\tilde{\beta}_1(1)-\beta_1(1)}+{\cal O}(\log h).
\end{align}
\end{itemize}  
\end{mytheorem}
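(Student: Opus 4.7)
I would adapt the vectorial mode-matching framework of Section~4 to the inhomogeneous scattering problem. First, with $[\bE^{\rm inc},\bH^{\rm inc}]/2$ present, the integral identity~\eqref{eq:t2t} acquires a source term given by the tangential trace of $(\bH^{\rm inc}+\bH^{\rm ref})/2$ on $A^h$, while~\eqref{eq:rep:tE}--\eqref{eq:rep:tH} are unchanged. Testing against the basis~\eqref{eq:bs:tot} and invoking Lemma~\ref{lem:orth:theta} still decouples the system over angular momenta $m\in\mathbb{Z}$. For the normal-incidence polarization $\bE^0=(0,1,0)^{T}$, $\bH^0=(1,0,0)^{T}$, the Cartesian unit vectors satisfy $\hat x_1=\cos\theta\,\hat r-\sin\theta\,\hat\theta$ and $\hat x_2=\sin\theta\,\hat r+\cos\theta\,\hat\theta$, so the relevant tangential traces on $A^h$ carry only $e^{\pm\bi\theta}$ angular content; consequently only the sectors $m=\pm 1$ are driven, and their combination yields the real-valued Cartesian profiles $x_1$, $x_1 x_2$, $-x_2^2$ stated in the theorem.

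Next I would reduce the $m=\pm 1$ subsystem via Lemma~\ref{lem:inv:I-Bm} to a scalar equation
\begin{equation*}
g_1(k)\,d_1 = F_1(k),
\end{equation*}
where $g_1(k):=D_1(k)-A_{11}(k)-\bR_1(k)(\bI-\bB_1(k))^{-1}\bC_1(k)$ is precisely the characteristic function analyzed in Theorems~\ref{thm:even:res} and~\ref{thm:odd:res}. Since $g_1(k^*_{1,m'})=0$ and ${\rm Im}(k^*_{1,m'})={\cal O}(h)$, Taylor expansion gives
\begin{equation*}
g_1({\rm Re}(k^*_{1,m'})) = -\bi\,{\rm Im}(k^*_{1,m'})\,g_1'(k^*_{1,m'}) + {\cal O}(h^2\log^2 h).
\end{equation*}
Using the explicit expression for ${\rm Im}(k^*_{1,m'})$ in~\eqref{eq:even:res}/\eqref{eq:odd:res}, which comes from $2\,{\rm Im}(\Pi_1(k_{1,m'},h))=4h(k_{1,m'}^2\tilde\beta_1(k_{1,m'})-\beta_1(k_{1,m'}))$, together with the leading-order $g_1'(k^*_{1,m'})$ obtained from $\partial_k D_1$ at $s_{10}^Nl=m'\pi$, produces the exact ${\cal O}(h)$ denominator $2[k_{1,m'}^2\tilde\beta_1(k_{1,m'})-\beta_1(k_{1,m'})]$ appearing in the conclusion. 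Case~(3) is handled identically, using Theorem~\ref{thm:odd:res} to note that the near-$|m|$ resonance $k_1^*$ occurs only for problem (E) and reading ${\rm Im}(k_1^*)=-h\,(\tilde\beta_1(1)-\beta_1(1))/l + {\cal O}(h^2\log h)$ from~\eqref{eq:even:nearm}.

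Then I would evaluate $F_1(k)$ at leading order. Writing out $\nu\times(\bH^{\rm inc}+\bH^{\rm ref})$ on $A^h$ explicitly (an ${\cal O}(1)$ tangential vector with $e^{\pm\bi\theta}$ content), pairing against the dominant basis element ${\cal R}\cl_2\psi_{10}^N$, and using $\psi_{10}^N\sim e^{\bi\theta}/\sqrt{\pi h(h+2)}$ from Appendix~A, one finds $F_1(k)={\cal O}(1)$ with explicit leading constant proportional to $e^{\bi kl/2}$. The subleading contribution from the non-resonant TE/TM projection $\bR_1(\bI-\bB_1)^{-1}\bg_1$ is controlled by Lemma~\ref{lem:dm'm:h}(iii)--(vi) together with Lemma~\ref{lem:inv:I-Bm}, and produces only an ${\cal O}(\log h)$ correction. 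Hence $d_1=F_1/g_1={\cal O}(h^{-1})$ with the stated leading coefficient. Substituting $d_1$ into~\eqref{eq:rep:bE}--\eqref{eq:rep:bH} and invoking the explicit TE-mode formulas~\eqref{eq:E_TE_modes}--\eqref{eq:H_TE_modes}: at $k={\rm Re}(k^*_{1,m'})$ one has $s_{10}^Nl = m'\pi + {\cal O}(h\log h)$, so the $x_3$-factor collapses to $2\cos(m'\pi x_3/l)$ for problem (E) (even $m'$) and to $2\bi\sin(m'\pi x_3/l)$ for problem (O) (odd $m'$); the boundary evaluation $e^{\bi s_{10}^N l/2}=e^{\bi m'\pi/2}$ yields the signs $(-1)^{m'/2}$ and $\bi(-1)^{(m'-1)/2}$ respectively; and combining the $m=+1$ with the $m=-1$ sector converts the $\psi_{10}^N$-structure together with its gradient into the Cartesian factors $x_1$, $x_1 x_2$, $-x_2^2$. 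Case~(3) follows the same scheme after observing that $s_{10}^N\to 0$ collapses the $x_3$-factor to the constant $2$.

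\textbf{Main obstacle.} The most delicate step is the precise evaluation of $F_1(k)$: one must verify that the projection of $\nu\times(\bH^{\rm inc}+\bH^{\rm ref})$ onto the dominant basis ${\cal R}\cl_2\psi_{10}^N$ does not vanish at leading order (so that $d_1$ is exactly of order $h^{-1}$ rather than smaller), and that the infinite tower of non-resonant couplings carried by $(\bI-\bB_1)^{-1}$ contributes only an ${\cal O}(\log h)$ remainder. Equally subtle is the identification of the prefactor $k_{1,m'}^2\tilde\beta_1(k_{1,m'})-\beta_1(k_{1,m'})$, which demands matching the Taylor expansion of $g_1$ around $k^*_{1,m'}$ with the imaginary part recorded in~\eqref{eq:even:res} while tracking the cancellations between the contributions of $D_1$ and $A_{11}$ predicted by Lemma~\ref{lem:dm'm:h}(vii).
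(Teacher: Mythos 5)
Your plan follows the paper's proof essentially step for step: decompose into the even/odd half-space problems, observe that the normal-incidence source has only $e^{\pm\bi\theta}$ angular content so that only the $m=\pm1$ sectors of the mode-matching system are driven, reduce via Lemma~\ref{lem:inv:I-Bm} to the scalar equation $g_1(k)d_1=F_1(k)$, Taylor-expand $g_1$ at ${\rm Re}(k^*)$ using the imaginary parts recorded in Theorems~\ref{thm:even:res} and~\ref{thm:odd:res} to produce the denominator $2[k_{1,m'}^2\tilde\beta_1-\beta_1]$, control the non-resonant tower through Lemma~\ref{lem:dm'm:h} and Lemma~\ref{lem:inv:I-Bm}, and reconstruct the field from the ${\rm TE}_{\pm1,0}$ modes. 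The one bookkeeping slip is in the intermediate orders: the source projection is $a_m=\frac{ke^{\bi kl/2}}{2\bi}\bigl[\sqrt{\pi h/2}+{\cal O}(h^{3/2})\bigr]={\cal O}(h^{1/2})$ rather than ${\cal O}(1)$, so $d_1={\cal O}(h^{-1/2})$ (not ${\cal O}(h^{-1})$), and the remaining factor $h^{-1/2}$ in the field comes from the normalization $\psi_{10}^N\sim h^{-1/2}$ --- as written, your claims $F_1={\cal O}(1)$ and $d_1={\cal O}(h^{-1})$ combined with that normalization would yield ${\cal O}(h^{-3/2})$ fields, so the two half-powers must be tracked consistently, though the strategy self-corrects once the pairing is actually computed.
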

\begin{proof}
 For the normal incidence, the reflected field is
\[
  \bE^{\rm ref} = -\bE^0 e^{\bi k(x_3-l)}, \bH^{\rm ref} = \bH^0 e^{\bi k(x_3-l)} \; \mbox{for} \; x_3>l/2.
\]  
The total field can be decomposed as
\begin{align*}
  \bE(x) = \left\{
  \begin{array}{lc}
    \bE^{\rm e}(x) + \bE^{\rm o}(x), & x_3\geq 0,\\
     \bE^{\rm e,*}(x^*) - \bE^{\rm o,*}(x^*), & x_3<0,\\
  \end{array}
  \right.
  \quad \bH(x) = (\bi k)^{-1}\cl\ \bE,
\end{align*}
wherein $\{\bE^{\rm e}, \bH^{\rm e}\}$ and $\{\bE^{\rm o}, \bH^{\rm o}\}$ satisfy \eqref{eq:problemE_1}-\eqref{eq:problemE_4} and \eqref{eq:problemO_1}-\eqref{eq:problemO_3} respectively.
On the annular aperture $A^h$, using the integral equation (\ref{eq:t2t:0}), 
it follows that
\begin{equation*}
  \nu\times \bH^{j}|_{A^h} +\frac{2}{\bi k}{\cal L}_k[\nu\times\bE^{j}|_{A^h}] = \nu\times \frac{\bH^{\rm inc}+\bH^{\rm ref}}{2}|_{A^h} = \left[\begin{matrix}
    0\\
    -e^{\bi kl/2} \\
    0\\
    \end{matrix}\right]= \left[\begin{matrix}
    -e^{\bi kl/2}\nabla_2(r\sin\theta)\\
    0\\
    \end{matrix}\right],\quad j=e,o.
\end{equation*}
Then, using $\sin\theta=
(2\bi)^{-1}(e^{\bi \theta}-e^{-\bi\theta})$ and Lemma~\ref{lem:orth:theta}, the mode-matching procedure in Section
4.1 gives rise to only two inhomogeneous INF linear system as shown below:
\begin{align}
  \label{eq:inh:sys}
\left[
  \begin{array}{ll}
     D_m^{j} - A^{j}_{mm} & -\bR^{j}_{m}\\
    -\bC^{j}_m & \bI - \bB^{j}_{m}\\ 
  \end{array}
  \right]\left[
  \begin{array}{l}
    d_m^{j}\\
    \bc_m^{j}\\
  \end{array}
  \right] = c^{j}\left[
  \begin{array}{l}
    a_m\\
    \bb_m
  \end{array}
  \right],\quad j=e,o;m=\pm 1.
\end{align}
In the above, $c^e=1$, $c^o=\bi$, $a_m = \frac{ke^{\bi
    kl/2}}{2\bi\lambda_{m0}^N}\langle \nabla_2 r\sin\theta, {\cal
  R}\cl_2\overline{\psi_{m0}^N} \rangle$, and
\begin{align}
  \bb_m &= \left[
  \begin{matrix}
   \{\frac{ke^{\bi kl/2}}{2\bi s_{mn'}^N(\lambda_{mn'}^N)^{1/4}}\langle \nabla_2 r\sin\theta, {\cal R}\cl_2\overline{\psi_{mn'}^N} \rangle\}_{n'\in\mathbb{N}^*}\\ 
   \{\frac{e^{\bi kl/2}s_{mj'}^D}{2\bi k(\lambda_{mj'}^D)^{3/4}}\langle \nabla_2 r\sin\theta, {\cal R}\nabla_2\overline{\psi_{mj'}^D} \rangle\}_{j'\in\mathbb{N}^*}\\ 
  \end{matrix}
  \right]\nonumber\\
  &=\left[
  \begin{matrix}
   \{\frac{ke^{\bi kl/2}(\lambda_{mn'}^N)^{3/4}}{2\bi s_{mn'}^N}\langle r\sin\theta, \overline{\psi_{mn'}^N} \rangle\}_{n'\in\mathbb{N}^*}\\ 
   \{0 \}_{j'\in\mathbb{N}^*}\\ 
  \end{matrix}
  \right],
\end{align}
where the last equality holds due to Green's identities.
We study the enhancement of the electromagnetic field $\{\bE^{e},\bH^{e}\}$ in the hole $G^{h}_+$ first. 

By Lemma~\ref{lem:psimnN}, integrating by parts gives
\[
  a_{m} = \frac{ke^{\bi kl/2}}{2\bi}\langle r\cos\theta, \overline{\psi_{m0}^N}
  \rangle_{A^h} = \frac{ke^{\bi kl/2}}{2\bi}\left[\sqrt{\frac{\pi h}{2}}+{\cal O}(h^{3/2})  \right]
\]
and 
\[
\langle  r\cos\theta, \overline{\psi_{mn'}^N} \rangle_{A^h} = {\cal O}[(n')^{-2}h^{3/2}],
\]
so that
$||\bb_{m}||_{\ell^2} = {\cal O}(h)$ . 
Using Lemmas~\ref{lem:inv:I-Bm} and \ref{lem:dm'm:h} (vi), the system (\ref{eq:inh:sys}) can be
reduced to the following inhomogeneous equation:
\begin{align*}
  \left[D^e_m(k) - A^e_{mm}(k)- \bR^e_m(k)(\bI-\bB^e_m(k))^{-1}\bC^e_m(k)  \right]d^e_m =& a_m + \bR^e_m(\bI-\bB^e_m)^{-1}\bb_m  \\
  =& \frac{ke^{\bi kl/2}\sqrt{2\pi h}}{4\bi}+{\cal O}(h^{3/2}), 
\end{align*}
for $m=\pm 1$.

\bigskip

\noindent {\it (1)}. Let $k={\rm Re}(k_{1,m'}^*)$ for  an even integer $m'\in\mathbb{N}^*$, in which $k_{1,m'}^*$ is given by \eqref{eq:even:res}. Since $k - k_{1,m'}^*=-{\rm Im}(k_{1,m'}^*)\bi ={\cal O}(h)$, we obtain
    \begin{align*}
      D^e_m(k) - D^e_m(k_{1,m'}^*) =& -[D_m^e]'(k_{1,m'}){\rm Im}(k_{1,m'}^*)\bi + {\cal O}(h^2\log h)\\
      =&2h(-1)^{m'/2}\left[k_{1,m'}^2\tilde{\beta}_1(k_{1,m'})-\beta_1(k_{1,m'})\right]\bi + {\cal O}(h^2\log h),\\
      A^e_{mm}(k) - A^e_{mm}(k_{m,m'}^*)=&{\cal O}(h^2\log h),\\
      \bR^e_m(k)(\bI-\bB^e_m(k))^{-1}\bC^e_m(k)=&\bR^e_m(k_{1,m'}^*)(\bI-\bB^e_m(k_{1,m'}^*))^{-1}\bC^e_m(k_{1,m'}^*)+{\cal O}(h^2\log h).
    \end{align*}
    Therefore,
    \begin{align*}
      d^e_m =& -\frac{k_{1,m'}(-1)^{m'}e^{\bi k_{1,m'}l/2}\sqrt{2\pi h}+{\cal O}(h^{3/2})}{8 h[k_{1,m'}^{2}\tilde{\beta}_1(k_{1,m'})-\beta_1(k_{1,m'})]+{\cal O}(h^2\log h)}\\
      =& -\frac{h^{-1/2}k_{1,m'}(-1)^{m'/2}e^{\bi k_{1,m'}l/2}\sqrt{2\pi }}{8 [k_{1,m'}^{2}\tilde{\beta}_1(k_{1,m'})-\beta_1(k_{1,m'})]}(1+{\cal O}(h\log h))
    \end{align*}
    and
    \begin{align*}
     ||\bc^e_m||_{\ell^2}=& ||(\bI-\bB^e_m)^{-1}(\bb_m+\bC^e_md^e_m)||_{\ell^2} = {\cal O}(1),
    \end{align*}
    for $m=\pm 1$. Hence, using the field representation (\ref{eq:rep:bH}), inside the hole
    $G^{h}_+$, there holds
    \begin{align*}
     H_3^e(x) =& \sum_{m\in\{-1,1\}}\sum_{n=1}^{\infty}[d_{mn}^{TE}]^e\frac{-2\bi\lambda_{mn}^N\cos(s_{mn}^Nx_3)}{k} \psi_{mn}^N(r,\theta)\nonumber\\
      &+\sum_{m\in\{-1,1\}}[d_{m0}^{TE}]^e\frac{-2\bi\lambda_{m0}^N\cos(s_{m0}^Nx_3)}{k} \psi_{m0}^N(r,\theta)\nonumber\\
      =&\sum_{m\in\{-1,1\}}\sum_{n=1}^{\infty}[c_{m;n}^{TE}]^e\frac{-2\bi[\lambda_{mn}^N]^{1/4}\cos(s_{mn}^Nx_3)}{k\sin(s_{mn}^Nl/2)} \psi_{mn}^N(r,\theta)\nonumber\\
      &+\sum_{m\in\{-1,1\}}d_{m}^e\frac{-2\bi\lambda_{m0}^N\cos(s_{m0}^Nx_3)}{k} \psi_{m0}^N(r,\theta)\nonumber\\
      =&h^{-1}(-1)^{m'/2}\frac{\cos(m'\pi/l x_3)\cos(\theta)e^{\bi k_{1,m'}l/2}\bi}{2[k_{1,m'}^{2}\tilde{\beta}_1(k_{1,m'})-\beta_1(k_{1,m'})]} + {\cal O}(\log h).
      \end{align*}
Similarly, an application of 
(\ref{eq:rep:bE}) yields the asymptotic for $E_1^e(x)$ and $E_2^e(x)$. 

\bigskip

\noindent{\it (2).} The case when $m'$ is odd can be derived similarly as above.  
   
\bigskip

\noindent{\it (3).} Let $k={\rm Re}(k_1^*)$. Again, $k-k_1^*=-{\rm Im}(k_1^*)\bi={\cal O}(h)$ so that 
      \begin{align*}
       D^e_m(k)-D^e_m(k_1^*) =& -[D^e_m]'(k_1^*){\rm Im}(k_1^*)\bi + {\cal O}(h^2)\\ 
        =&\frac{h}{2}(\tilde{\beta}_1(1)-\beta_1(1))\bi + {\cal O}(h^2),\\
      A^e_{mm}(k) - A^e_{mm}(k_{1}^*)=&{\cal O}(h^2\log h),\\
      \bR^e_m(k)(\bI-\bB^e_m(k))^{-1}\bC^e_m(k)=&\bR^e_m(k_{1}^*)(\bI-\bB^e_m(k_{1}^*))^{-1}\bC^e_m(k_{1}^*)+{\cal O}(h^2\log h),
      \end{align*}
      Thus, 
      \begin{align*}
        d^e_m 
        = \frac{-h^{-1/2}e^{\bi l/2}\sqrt{2\pi }}{2((\tilde{\beta}_1(1)-\beta_1(1))}(1 + {\cal O}(h\log h)),
      \end{align*}
      and $||\bc^e_m||_{\ell^2}={\cal O}(1)$. Inside the hole $G^{h}_+$, 
      \begin{align*}
      \label{eq:H3e:case2}
      H_3^e(x)=&\sum_{m\in\{-1,1\}}\sum_{n=1}^{\infty}[c_{m;n}^{TE}]^e\frac{-2\bi[\lambda_{mn}^N]^{1/4}\cos(s_{mn}^Nx_3)}{k\sin(s_{mn}^Nl/2)} \psi_{mn}^N(r,\theta)]\nonumber\\
      &+\sum_{m\in\{-1,1\}}d_{m}^e\frac{-2\bi\lambda_{m0}^N\cos(s_{m0}^Nx_3)}{k} \psi_{m0}^N(r,\theta)\nonumber\\
        =&\frac{2\bi h^{-1}e^{\bi l/2}\cos\theta}{\tilde{\beta}_1(1)-\beta_1(1)}+{\cal O}(\log h),
      \end{align*}
and similarly, the asymptotic expansions for  $E_1^e(x)$ and $E_2^e(x)$ can be derived. 
\end{proof}



From \eqref{eq:inh:sys}, we see that
a normal incident plane wave can only excite the TE$_{mn}$ modes in the annular hole with $m=\pm1$. To excite higher-order modes with $|m|\ge2$, an oblique incident wave needs to be applied.
Without loss of generality, we assume that the incident direction $d=(d_1,0,-d_3)^{T}$ and the electric polarization vector
$\bE_0=(0,1,0)^{T}$. Then repeating the above procedure gives
\begin{equation}\label{eq:obliq_src}
  \nu\times \frac{\bH^{\rm inc}+\bH^{\rm ref}}{2}|_{A^h} =\left[
    \begin{matrix}
      0\\
-d_3e^{-\bi k d_3l/2+\bi kd_1x_1}\\
0\\
\end{matrix}
  \right]=\left[  
    \begin{matrix}
d_3e^{-\bi k d_3l/2}\cl_2\frac{e^{\bi kd_1 r\cos\theta}-1}{\bi kd_1}\\
0\\
\end{matrix}
  \right].
\end{equation}
From the Jacobi-Anger
expansion \cite[Eq. (3.89)]{colkre13},
\begin{align}
  \frac{e^{\bi kd_1 r\cos\theta}-1}{\bi k d_1}=(\bi k d_1)^{-1}\sum_{m=-\infty}^{\infty}[\bi^m J_m(k d_1 r)-\delta_{m0}]e^{\bi m\theta},
\end{align}
the expansion contains terms with higher-order angular momentum. Therefore, the enhancement of  the electromagnetic field $\{\bE,\bH\}$ can be obtained at the resonant frequencies $k={\rm
  Re}(k_{m,m'}^*)$ with $|m|\ge 2$. We omit the detailed calculations here.


\subsection{Field enhancement due to excitation of the TEM mode in the annular hole}
In this section, we consider field enhancement at the resonant frequencies $k={\rm Re}(k_{0,2m'}^*)$ in \eqref{eq:even:res:m=0}, which are associated with the TEM mode in the annular hole. When a plane wave impinges on the subwavelength structure, the source term in \eqref{eq:obliq_src} is orthogonal to the resonant mode in the sense that
\[
 \langle\cl_2\frac{e^{\bi kd_1 r\cos\theta}-1}{\bi kd_1}, {\cal R}\nabla_2\log r
 \rangle = -\langle\cl_2\frac{e^{\bi kd_1 r\cos\theta}-1}{\bi kd_1}, \nabla_2\log r
 \rangle_{A^h} = 0.
\]
This follows from the orthogonality relation $\cl_2 H^1(A^h) \perp \mathbb{H}_2(A^h)$ in Lemma~\ref{lem:helmdecomp}.
Thus the mode matching formulation leads to a homogeneous system for $m=0$, which only attains trivial solution for $k\in\mathbb{R}$. This implies that no field amplification could be obtained at the resonant frequencies 
$k={\rm Re}(k_{0,m'}^*)$. In other words, TEM modes can not be excited by using the plane wave incidence.

To excite a TEM mode, we consider a spherical incident wave produced by an electric monopole located at $(0,0, y_3)^{T}$ with $y_3>0$ and it points toward the $x_3$-direction. Namely, $\bE^{\rm inc}$ satisfies
\[
\cl\ \cl\ \bE^{\rm inc} - k^2 \bE^{\rm inc}= -\hat{x}_3\delta(0,0,x_3-y_3),
\]
where $\hat{x}_3 = (0,0,1)^{T}$ is the unit vector in $x_3$-direction. 
Indeed, it is known that 
\begin{equation}
\label{eq:sph:inc}
\bE^{\rm inc} = \left[\hat{x}_3 + k^{-2}\partial_{x_3}\nabla\right]\left(\frac{e^{\bi k\sqrt{r^2+(x_3-y_3)^2}}}{4\pi \sqrt{r^2+(x_3-y_3)^2}}\right).
\end{equation}
The reflected electric field produced by the perfect conducting metallic slab is given by
\[
\bE^{\rm ref} = \left[\hat{x}_3 + k^{-2}\partial_{x_3}\nabla\right]\left(\frac{e^{\bi k\sqrt{r^2+(x_3+y_3)^2}}}{4\pi \sqrt{r^2+(x_3+y_3)^2}}\right).
\]
We have the following result for the field amplification. 
\begin{mytheorem} \label{thm-enhance-tem}
Let the incident electric field be of the form (\ref{eq:sph:inc}). The magnitude of the total electromagnetic field $\bE$ and $\bH$ attains the order ${\cal O}(h^{-1})$ at frequencies $k={\rm Re}(k_{0,m'}^*)$ for $m'\in\mathbb{N}^*$. Specifically,
\begin{itemize}
    \item[(1).] If
$k={\rm Re}(k_{0,m'}^*)$ for an even integer $m'$, the following hold in the annular gap $G_+^h$ :
  \begin{align}
       E_1 =& \frac{2(-1)^{m'/2}c(k_{0,m'}, y_3)}{h} \, x_1\, \cos(k_{0,m'}x_3) + {\cal O}(\log h),\\ 
       E_2 =& \frac{2(-1)^{m'/2}c(k_{0,m'}, y_3)}{h} \, x_2 \, \cos(k_{0,m'}x_3) + {\cal O}(\log h),\\ 
       H_1 =& -\frac{2(-1)^{m'/2}\bi c(k_{0,m'}, y_3)}{h}  \, x_2 \,  \sin( k_{0,m'} x_3) + {\cal O}(\log h),\\ 
       H_2 =& \frac{2(-1)^{m'/2}\bi c(k_{0,m'}, y_3)}{h} \, x_1 \, \sin(k_{0,m'} x_3) + {\cal O}(\log h).
    \end{align}
    where $c(k,y) = -\frac{e^{\bi k\sqrt{1+y^2}}(\bi k\sqrt{1+y^2}-1)}{2m'\pi\beta_1(k)(1+y^2)^{3/2}}$.
    \item[(2).] If $k={\rm Re}(k_{0,m'}^*)$ for an odd integer $m'$, then in the annular gap $G_+^h$,
  \begin{align}
       E_1 =& \frac{2(-1)^{(m'-1)/2}c(k_{0,m'}, y_3)}{h} \, x_1\, \sin(k_{0,m'}x_3) + {\cal O}(\log h),\\ 
       E_2 =& \frac{2(-1)^{(m'-1)/2}c(k_{0,m'}, y_3)}{h} \, x_2 \, \sin(k_{0,m'}x_3) + {\cal O}(\log h),\\ 
       H_1 =& -\frac{2(-1)^{(m'-1)/2}\bi c(k_{0,m'}, y_3)}{h}  \, x_2 \,  \cos( k_{0,m'} x_3) + {\cal O}(\log h),\\ 
       H_2 =& \frac{2(-1)^{(m'-1)/2}\bi c(k_{0,m'}, y_3)}{h} \, x_1 \, \cos(k_{0,m'} x_3) + {\cal O}(\log h).
   \end{align}
\end{itemize}
\end{mytheorem}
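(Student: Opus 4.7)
The plan is to adapt the structure of the proof of Theorem~\ref{thm:enhance:te} to the TEM case, with two key differences: the dipole source excites the $m=0$ subsystem (rather than $m=\pm1$), and the enhancement arises through the TEM mode coefficient rather than a TE mode coefficient. First I would decompose the scattering problem into the even/odd subproblems via the $x_3$-reflection symmetry, placing the appropriate image of the dipole so that Problems (E) and (O) each receive an inhomogeneous source on the aperture $A^h$. Applying the vectorial mode matching from Section~4.1, the equations (\ref{eq:system1})--(\ref{eq:system3}) become inhomogeneous with right-hand sides involving $\langle \nu\times(\bH^{\rm inc}+\bH^{\rm ref})/2,\,\cdot\,\rangle$. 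A direct computation using $\bH^{\rm inc}=(\bi k)^{-1}\cl\bE^{\rm inc}$ and the axial symmetry of the dipole shows that $\nu\times(\bH^{\rm inc}+\bH^{\rm ref})/2|_{A^h}$ is purely radial with no $\theta$-dependence; by Lemma~\ref{lem:orth:theta}, only the $m=0$ subsystem is driven, and its source is dominated by the TEM-component projection onto ${\cal R}\nabla_2\log r$.

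Next, using Lemma~\ref{lem:inv:I-Bm} to invert $\bI-\bB_0^j$, I would reduce the inhomogeneous $m=0$ system (the analog of (\ref{eq:inh:sys})) to a single scalar equation for the TEM coefficient $d_0^j$,
\[
\bigl[D_0^j(k) - A_{00}^j(k) - \bR_0^j(k)(\bI-\bB_0^j(k))^{-1}\bC_0^j(k)\bigr]\,d_0^j = \tilde a_0^j + \bR_0^j(\bI-\bB_0^j)^{-1}\tilde\bb_0^j,
\]
where $\tilde a_0^j,\tilde\bb_0^j$ come from the projected source. The key quantitative point is that the normalization factor $-2\pi\log(1+h)\sim -2\pi h$, arising from the pairing $\langle \nabla_2\log r, {\cal R}\nabla_2\log r\rangle$ used to define $A^{TEM,TEM}$, converts the raw source integral $\int_1^{1+h}G_\theta\,dr = {\cal O}(h)$ into a normalized $\tilde a_0^j = {\cal O}(1)$, while the auxiliary term $\bR_0^j(\bI-\bB_0^j)^{-1}\tilde\bb_0^j$ remains of lower order by Lemma~\ref{lem:dm'm:h}.

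I would then evaluate the characteristic function at $k=\operatorname{Re}(k_{0,m'}^*)$. Since it vanishes at the complex resonance $k_{0,m'}^*$, a Taylor expansion about $k_{0,m'}^*$ combined with Theorems~\ref{thm:even:res}--\ref{thm:odd:res} (which give $\operatorname{Im}(k_{0,m'}^*)=-2k_{0,m'}\beta_1(k_{0,m'})h + {\cal O}(h^2\log^2 h)$) and the elementary derivatives $[D_0^e]'(k_{0,2m'})=(l/2)(-1)^{m'}$, $[D_0^o]'(k_{0,2m'+1})=-(l/2)(-1)^{m'}$, yields an LHS coefficient of the form $\pm\bi l\,k_{0,m'}\beta_1(k_{0,m'})\,h + {\cal O}(h^2\log h)$. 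Dividing gives $d_0^j = c(k_{0,m'},y_3)/h + {\cal O}(\log h)$, with the constant $c$ determined by the explicit value of $G_\theta(1,l/2)$ computed from $\partial_r[e^{\bi k R}/(4\pi R)]$ at the appropriate image distance. Substituting this into the field representations (\ref{eq:rep:bE})--(\ref{eq:rep:bH}), using that the TEM mode satisfies $\partial_{x_j}\log r = x_j/r^2 = x_j + {\cal O}(h)$ for $j=1,2$ on the annulus, and noting that $\|\bc_0^j\|_{\ell^2}={\cal O}(1)$ (so the non-resonant modes contribute only ${\cal O}(\log h)$), produces the four stated asymptotic expansions for $E_1,E_2,H_1,H_2$ in both the even-$m'$ and odd-$m'$ cases.

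The main technical obstacle is the explicit identification of the constant $c(k,y_3)$, which requires evaluating the radial derivative of the dipole potential at the aperture, summing the incident and image contributions with the correct relative sign, and tracking the cascade of normalizations (the $-2\pi\log(1+h)$ factor from the TEM duality pairing, the $(l/2)$ factor from $[D_0^j]'(k_{0,m'})$, and the $2k_{0,m'}\beta_1 h$ factor from the imaginary part of the resonance) to match the given closed form. A secondary technicality is verifying that the projections of the source onto the TE and TM $m=0$ basis functions yield $\tilde\bb_0^j$ in $\ell^2$ with norm ${\cal O}(h)$ or smaller, so that their contribution to the RHS is strictly subdominant and does not corrupt the leading-order constant.
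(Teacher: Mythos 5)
Your proposal is correct and follows essentially the same route as the paper's proof: even/odd decomposition, mode matching showing (via Lemma~\ref{lem:orth:theta}) that only the $m=0$ subsystem is driven, reduction to a scalar equation for the TEM coefficient $d_0^j$ with the $\log(1+h)$ normalization making the projected source $a_0={\cal O}(1)$, Taylor expansion of $D_0^j$ at the complex resonance using ${\rm Im}(k_{0,m'}^*)=-2k_{0,m'}\beta_1(k_{0,m'})h+{\cal O}(h^2\log^2 h)$ to extract $d_0^j={\cal O}(h^{-1})$, and substitution into the field representations with $\|\bc_0^j\|_{\ell^2}={\cal O}(1)$ controlling the remainder. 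The derivative values $[D_0^e]'$, $[D_0^o]'$ and the identification of the constant $c(k,y_3)$ you outline match the paper's computation of $a_0$ and of $D_0^j(k)-D_0^j(k_{0,m'}^*)$.
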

\begin{proof}
We have on the annular aperture $A^h$, 
\[
  \nu\times \frac{\bH^{\rm inc}+\bH^{\rm ref}}{2}|_{A^h} =F(r,y_3)\left[
  \begin{matrix}
  \nabla_2\log r\\
  0
  \end{matrix}
  \right],
\]
where
\[
F(r,y_3) = \frac{(k\sqrt{r^2+y_3^2}+\bi)r}{k(r^2+y_3^2)^{3/2}}e^{\bi k \sqrt{r^2+y_3^2}}.
\]
Following our mode matching procedure, we obtain the systems below analogous to (\ref{eq:inh:sys}):
\begin{align}
\label{eq:sys:TEM:m}
\left[
  \begin{array}{ll}
     D_m^{j} - A^{j}_{mm} & -\bR^{j}_{m}\\
    -\bC^{j}_m & \bI - \bB^{j}_{m}\\ 
  \end{array}
  \right]\left[
  \begin{array}{l}
    d_m^{j}\\
    \bc_m^{j}\\
  \end{array}
  \right] = c^{j}\left[
  \begin{array}{l}
    a_m\\
    \bb_m
  \end{array}
  \right],  \quad j=e,o, \quad m\in\mathbb{Z}.
\end{align}
In the above, the source term
\begin{equation} \label{eq-am}
a_m = \begin{cases}
  \frac{k\bi }{4\pi \log(1+h)}\langle F(r,y_3)\nabla_2\log r, {\cal R} \nabla_2 \log r \rangle, & m=0;\\
  \frac{k\bi }{2\lambda_{m0}^N}\langle F(r,y_3)\nabla_2\log r, {\cal R} \cl\overline{\psi_{m0}^N} \rangle, & m\neq 0,
\end{cases}
\end{equation}
and 
\[
\bb_m = \left[
  \begin{matrix}
   \{\frac{k\bi }{2s_{mn'}^N(\lambda_{mn'}^N)^{1/4}}\langle F(r,y_3)\nabla_2 \log r, {\cal R}\cl_2\overline{\psi_{mn'}^N} \rangle \}_{n'\in\mathbb{N}^*}\\ 
   \{\frac{\bi s_{mj'}^D}{2k(\lambda_{mj'}^D)^{3/4}}\langle F(r,y_3)\nabla_2 \log r, {\cal R}\nabla_2\overline{\psi_{mj'}^D} \rangle\}_{j'\in\mathbb{N}^*}\\ 
  \end{matrix}
  \right].
\]
We can derive from Lemma~\ref{lem:orth:theta} that 
$a_m=0$ and $\bb_m=0$ for $m\neq 0$. As such 
we need only focus on the case $m=0$. We further restrict to the case $j=e$ as the case $j=o$ can be dealt with in a similar manner. 

A direct calculation shows that
\begin{align*}
    a_0 
&=-\frac{\bi }{2\log(1+h)}\int_{1}^{1+h} \frac{(k\sqrt{r^2+y_3^2}+\bi)}{(r^2+y_3^2)^{3/2}}e^{\bi k \sqrt{r^2+y_3^2}}dr \\
& = -\frac{\bi }{2} \frac{(k\sqrt{1+y_3^2}+\bi)}{(1+y_3^2)^{3/2}}e^{\bi k \sqrt{1+y_3^2}} + {\cal O}(kh), \\
||\bb_0||_{\ell^2}&= {\cal O}(h),\\
 \bR^e_0(\bI-\bB^e_0)^{-1}\bb_0 &= {\cal O}(h).
\end{align*}

At $k={\rm Re}(k_{0,m'}^*)$ for an even integer $m'\in\mathbb{N}^*$, noting that $k-k_{0,2m'}^*=-{\rm Im}(k_{0,2m'}^*)\bi={\cal O}(h)$, we have 
    \begin{align*}
        D_0^e(k)-D_0^e(k_{0,m'}^*) =& -l/2\cos(k_{0,m'}^*l/2) {\rm Im}(k_{0,m'}^*)\bi + {\cal O}(h^2)\\
        =&(-1)^{m'/2}m'\pi\beta_1(k_{0,m'})h\bi  + {\cal O}(h^2\log h),\\
        A^e_{00}(k)-A^e_{00}(k_{0,m'}^*)=&{\cal O}(h^2\log h),\\
        \bR^e_0(k)(\bI-\bB^e_0(k))^{-1}\bC^e_0(k)=&\bR^e_0(k_{0,m'}^*)(\bI-\bB^e_0(k_{0,m'}^*))^{-1}\bC^e_0(k_{0,m'}^*)+{\cal O}(h^2\log h).
    \end{align*}
    Hence, from the equation 
    $$\left[D^e_0(k) - A^e_{00}(k)- \bR^e_0(k)(\bI-\bB^e_0(k))^{-1}\bC^e_0(k)  \right]d^e_0 = \, a_0 + \bR^e_0(\bI-\bB^e_0)^{-1}\bb_0,
    $$
 we can derive that
    \begin{align*}
    d_0^e =& -(-1)^{m'/2}h^{-1}\frac{e^{\bi k_{0,m'}\sqrt{1+y_3^2}}( k_{0,m'}\sqrt{1+y_3^2}+\bi)}{2m'\pi\beta_1(k_{0,m'})(1+y_3^2)^{3/2}}(1+{\cal O}(h\log h)),
    \end{align*}
    and by Lemma~\ref{lem:dm'm:h} (v), there holds
    \begin{align*}
    ||c_0^e||_{\ell^2}=||(\bI-\bB_0^e)^{-1}(\bb_0 + \bC_0^ed_0^e)||_{\ell^2} = {\cal O}(1).
    \end{align*}
    Therefore, using the field representations in (\ref{eq:rep:bE}) and (\ref{eq:rep:bH}), we obtain the desired asymptotic for the electromagnetic fields in the annular gap $G^h$.
\end{proof}
\begin{myremark}
We note that the electromagnetic field in the annular hole is amplified with the order ${\cal O}(h^{-1})$ at the resonant frequency $k={\rm Re}(k_{0,m'}^*)$ for some $m'\in\mathbb{N}^*$. Moreover, the waves oscillates along the $x_3$ direction but it varies linearly in the narrow annular cross section.
\end{myremark}

\section{Discussion and conclusion}

In this section, we discuss how the problem geometry and the topology of the subwavelength hole may affect the resonances and field enhancement for the scattering problem \eqref{eq:E}-\eqref{eq:smc}.

First, as pointed out at the beginning, for clarity the analysis is only presented for the inner radius of the annulus $a=1$. If $a\neq 1$, by the change of the scale, the roots $k$ of the characteristic equation (\ref{eq:single:m}) and the thickness $l$ are replaced by $ka$ and $l/a$, respectively. Thus the value of $a$ could significantly affect the resonant frequencies given in (\ref{eq:even:res}), (\ref{eq:even:nearm}), (\ref{eq:even:res:m=0}), (\ref{eq:odd:res}), and (\ref{eq:odd:res:m=0}). In practice, one can tune this parameter for applications in different frequency regimes. 
 Moreover, we note that $h$  is the relative width of the gap $G^h$. Thus one can increase the inner radius $a$ while keeping the absolute gap width $d=ah$ invariant. This will further increase the electromagnetic enhancement to the order ${\cal O}(h^{-1}) = d^{-1}{\cal O}(a)$.  

Note that we have assumed that the metal thickness $l={\cal O}(1)$ throughout the paper, and the prefactors in the error terms of the resonance formulae 
(\ref{eq:even:res}), (\ref{eq:even:nearm}), (\ref{eq:even:res:m=0}), (\ref{eq:odd:res}), and (\ref{eq:odd:res:m=0}) depends on $l$.  One natural question is how large $l$ is allowed so that our analysis still holds true. Compare the expressions in Theorems~\ref{thm:enhance:te} and \ref{thm-enhance-tem}. The leading terms of the fields due to TE modes do not change significantly as $l$ decreases or increases; however, the leading terms of the fields due to the TEM mode change significantly in terms of order $l^3{\cal O}(h^{-1})$ as $l$ increases, since $\beta_1(k_{0,m'}^*) = {\cal O}(l^{-3})$ as $l\to\infty$.  
We can carry out more delicate analysis to quantify the dependence of the resonances on $l$ more precisely and for $l$ not necessarily small. In fact, to ensure the existence of finite resonances for $m\neq 0$, it is sufficient to assume that $lh\log h\ll 1$. Let us revisit the characteristic equation (\ref{eq:tmp2}):
  \begin{equation*}
    s_{m0}^N(e^{\bi s_{m0}^N l/2}-e^{-\bi s_{m0}^N l/2}) = -\bi(e^{\bi s_{m0}^N l/2}+e^{-\bi s_{m0}^N l/2})\left[\Pi_m(k,h) + {\cal O}(h^2\log h)  \right].
  \end{equation*}
If $lh\log h\ll 1$, there holds 
 \[
 s_{m0}^N l \tan(s_{m0}^N l/2) = -l[\Pi_m(k,h)+{\cal O}(h^2\log h)] = {\cal O}(hl\log h)\ll 1.
\]
Thus $s_{m0}^N l\ll 1$ or $s_{m0}^N l/2 - m'\pi\ll 1$ for $m'\in\mathbb{N}^*$. Then following the parallel lines as the proof of Theorem~\ref{thm:even:res}, it can be shown that 
\[
{\rm Im}(k_{m,m'}^*) = {\cal O}(l^{-1}h\log h),\quad m'\in\mathbb{N}^*,
\]
where the prefactors no longer depend on $l$. 
However, the configuration when $l=\infty$ is more subtle, since the problem is not posed in an open medium anymore. One needs to define the corresponding scattering problem  properly and impose the radiation conditions carefully. The other extreme case is when the metal is infinitely thin with $l=0$. In such scenario, the hole no longer supports waveguide modes and a totally different approach needs to be adopted for analyzing the scattering problem. This is beyond the scope of this paper and will be investigated in a separate work.

We would also like to point out there are no resonances in the region ${\cal B}$ when the narrow annular hole is replaced by a tiny hole with a simply connected cross section, such as a tiny hollow hole with circular cross section.  Assume that the radius of the circle is given by $h\ll 1$. We can still apply the multiscale analysis framework in this paper for analyzing the resonances. However, the two eigenvalue problems (DEP) and (NEP) attain the eigenvalues of order ${\cal O}(1/h)$, though the corresponding eigenfunctions can still be used to construct the two function spaces $\cl_2 H^1(A^h)$ and $\nabla_2 H_0^1(A^h)$. In addition, the finite-dimensional space $\mathbb{H}_2(A^h)$ becomes $\{{\bf 0}\}$. Therefore, the mode matching procedure leads to a system analogous to (\ref{eq:eig:m:2}), with its first row and first column removed. This new system possesses only the zero solution for $k\in{\cal B}$ by Lemma~\ref{lem:inv:I-Bm}. In other words, there are no resonances in ${\cal B}$. 

Finally, we point out several directions along the line of this research. In this work, we focus on the resonances induced by the subwavelength annulus gap. Another type of resonance is related to surface plasmon, and the quantification of its effect on the overall resonant behavior of the structure is still open \cite{garmarebbkui10}. There are some preliminary studies of the plasmonic effect on the resonances in \cite{haftel2006} for the annular hole, but the understanding of the interactions between two types of resonances is far from complete. Another direction is to investigate the resonant scattering in more sophisticated structures, such as an array of annular holes, or the bull's eye structure, etc \cite{garmarebbkui10, yoo16}. The resonant phenomena become richer in those structures. In terms of applications, there are also several topics that need to be explored. For instance, the annular hole structures have been applied for detecting  biomolecular events in a label-free and highly sensitive manner from the shifts of resonant transmission peaks \cite{park15}. One fundamental question in such applications is the sensitivity analysis of resonance frequencies, where the goal is to quantify how the transmission peaks shift with respect to the refractive index change or the profile change of the biochemical samples.

\appendix
\section{Dirichlet eigenvalues and eigenfunctions}
In this section, we characterize the asymptotic behavior of the eigenpair
$\{\lambda_{mn}^D,\psi_{mn}^D\}_{m\in\mathbb{Z},n\in\mathbb{N}^*}$ of the Dirichlet eigenvalue problem (DEP)  for $0<h\ll1$.
\begin{mylemma}
  \label{lem:dir:eig}
  For $h\ll 1$ and $m\in\mathbb{N}$, the nonzero roots of 
  (\ref{eq:gov:roots}) admit the expansion
  \begin{equation}
    \label{eq:asy:betamn}
    \beta_{mn}^{D}(h)=\frac{n\pi}{h} + \frac{(4m^2-1)h}{8n\pi} + n^{-3}{\cal
      O}(h^2),\quad {m=0,1,2,\cdots,\quad n=1,2,\cdots,} 
  \end{equation}
  where the prefactor in the ${\cal O}$-notation depends on $m$ only.
\begin{proof}
  Let $C>0$ be a sufficiently large constant that is independent of $h$. Since
  $\beta$ and $\beta(1+h)$ are of the same order of magnitude when $|\beta|\geq
  C$, we apply the formulas in \cite[Sec. 10.21(x)]{nist10} to obtain the
  asymptotic formula (\ref{eq:asy:betamn}) in the region
  $(-\infty,-C]\cup[C,\infty)$. The rest is to show that there are no
  roots in $(-C,C)/\{0\}$.

  We distinguish two cases: $0<|\beta|\leq c_0$ or $|\beta|\in(c_0,C)$ for some
  sufficiently small constant $c_0\geq 0$. If $0<\beta\leq c_0$, then by the
  asymptotic behaviors of $J_m$ and $Y_m$ of small arguments \cite[10.7.3 \&
  10.7.4]{nist10}, we obtain for any fixed $h>0$,
  \begin{align*}
    F_{m}(\beta,h) \eqsim &(\beta/2)^{m}/(\Gamma(m+1))(-\pi^{-1})\Gamma(m)(\beta(1+h)/2)^{-|m|}\nonumber\\
    &-(\beta(1+h)/2)^{m}/(\Gamma(m+1))(-\pi^{-1})\Gamma(m)(\beta/2)^{-|m|}\\
    =&(-( m\pi )^{-1}) \left[  (1+h)^{-m}-(1+h)^{m}\right]>0,\quad \beta \ll 1.
  \end{align*}
  Now, suppose $|\beta|\in(c_0,C)$ so that $F$ becomes analytic at $h=0$.
  Taylor's expansion directly gives rise to
  \begin{align*}
    F_{m}(\beta,h) =& (Y_{m}(\beta)J_{m}'(\beta)-J_{m}(\beta)Y_{m}'(\beta))\beta h \\
    &+ (Y_{m}(\beta){J_{m}''}(\beta+\xi_h h)-J_{m}(\beta){Y_m''}(\beta+\xi_h h)) (\beta h)^2/2, 
  \end{align*}
  for some $\xi_h\in (0,1)$ depending on $h$. Since $Y_m(\beta)$ and $J_m(\beta)$
  are linearly independent over the interval $[c_0,C]$, the first term is in fact
  strictly nonzero, so that for $h\ll 1$, $F(h;\beta,m)\neq 0$ for any
  $|\beta|\in(c_0,C)$, which concludes the proof.
\end{proof}
\end{mylemma}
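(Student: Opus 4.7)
The plan is to split the analysis into a large-$\beta$ regime $[C, \infty)$ for a constant $C$ independent of $h$, where the expansion \eqref{eq:asy:betamn} is derived from Bessel function asymptotics, and the remaining bounded region $(0, C]$, where I will show that $F_m^D(\beta; h)$ is bounded away from zero and hence contains no roots.

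For the large-$\beta$ analysis, both arguments $\beta$ and $\beta(1+h)$ are simultaneously large and of the same order as $h \to 0^+$. I would invoke the large-argument expansion of cross-products of Bessel functions from \cite[Sec.~10.21(x)]{nist10}. At leading order, $J_m(z) \sim \sqrt{2/(\pi z)}\cos(z - m\pi/2 - \pi/4)$ and $Y_m(z) \sim \sqrt{2/(\pi z)}\sin(z - m\pi/2 - \pi/4)$; plugging these into $F_m^D$ and using the angle-subtraction identity collapses the cross product to a single $\sin(\beta h)$ factor, whose zeros are $\beta h = n\pi$. Carrying the asymptotic expansion one order further introduces the correction factor $(4m^2-1)/(8z)$ in each Bessel factor; solving $F_m^D = 0$ by a one-step implicit-function argument around the base point $\beta_* = n\pi/h$ then produces the $(4m^2-1)h/(8n\pi)$ correction term in \eqref{eq:asy:betamn}, along with an $n^{-3}O(h^2)$ remainder controlled by the third Bessel asymptotic term.

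To rule out roots in $(0, C]$, I would split it into a tiny subinterval $(0, c_0]$ and a compact subinterval $[c_0, C]$. On $(0, c_0]$ I use the small-argument asymptotics $J_m(z) \sim (z/2)^m/\Gamma(m+1)$ and $Y_m(z) \sim -\pi^{-1}\Gamma(m)(z/2)^{-m}$ for $m \geq 1$ (with the logarithmic variant for $m = 0$), which reduce $F_m^D$ to an explicit expression proportional to $(1+h)^{-m} - (1+h)^{m}$, strictly nonzero and bounded away from zero. On $[c_0, C]$, $F_m^D(\beta; \cdot)$ is real-analytic in $h$, and $F_m^D(\beta; 0) \equiv 0$; Taylor expanding to first order and using the Wronskian identity $J_m(\beta) Y_m'(\beta) - J_m'(\beta) Y_m(\beta) = 2/(\pi \beta)$, the factor of $\beta$ cancels and yields $\partial_h F_m^D(\beta; 0) = -2/\pi$, a nonzero constant independent of $\beta$. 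Compactness then implies $|F_m^D(\beta; h)| \geq h/\pi - O(h^2)$ throughout $[c_0, C]$, excluding any root for $h \ll 1$.

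The main obstacle will be the large-$\beta$ computation: while the leading behavior $\beta \sim n\pi/h$ is immediate, extracting the precise first-order correction $(4m^2-1)h/(8n\pi)$ and especially the uniform-in-$n$ remainder $n^{-3}O(h^2)$ requires using the Bessel asymptotic series to at least two terms, bounding the tail uniformly for $\beta \geq C$, and carefully counting the residual factors of $\beta^{-1} \sim h/(n\pi)$ that survive after the implicit-function inversion. The Wronskian-based argument on the compact region is essentially routine once set up, and the small-$\beta$ argument is a direct substitution.
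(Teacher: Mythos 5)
Your proposal is correct and follows essentially the same route as the paper: Bessel large-argument asymptotics from \cite[Sec.~10.21(x)]{nist10} for $\beta\geq C$, small-argument asymptotics reducing $F_m^D$ to $(1+h)^{-m}-(1+h)^m$ on $(0,c_0]$, and a first-order Taylor expansion in $h$ on the compact interval $[c_0,C]$. Your use of the explicit Wronskian identity to evaluate $\partial_h F_m^D(\beta;0)=-2/\pi$ is a slightly cleaner version of the paper's appeal to the linear independence of $J_m$ and $Y_m$, but it is the same argument.
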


The following lemma characterizes the asymptotic behavior of
$\psi_{mn}^{D}(r,\theta;h)$ for $h\ll 1$.
\begin{mylemma}
  For $h\ll 1$, $r\in[1,1+h]$, $(i,j)\in\mathbb{Z}\times\mathbb{N}^*$, we have
  \label{lem:psimnD}
  \begin{align}
    \label{eq:psimnD}
      \psi_{mn}^{D}(r,\theta;h) =& \frac{e^{\bi m\theta}}{\sqrt{\pi rh}}\Bigg\{\sin\left[ \frac{n\pi}{h}(r-1) \right] -\frac{\gamma_m^D(r)h}{n\pi}\cos\left[ \frac{n\pi}{h}(r-1) \right]+n^{-2}{\cal O}(h^2)\Bigg\},
  \end{align}
  where the function $\gamma^D_{m}:[1,1+h]\to\mathbb{R}$ is given by
  \[
    \gamma^D_{m}(x) = \frac{(4m^2-1)}{8r}(r-1).
  \]
and the prefactors in the ${\cal O}$-notations depend on $m$ only.

  \begin{proof}
    For $h\ll1$, Lemma~\ref{lem:dir:eig} implies that $\beta_{|m|n}^{D}\gg 1$, thus by \cite[10.17.3\&10.17.4]{nist10}, we have
    \begin{align*}
      &Y_{|m|}(\beta_{|m|n}^{D})J_{|m|}(\beta_{|m|n}^{D}r)-J_{|m|}(\beta_{|m|n}^{D})Y_{|m|}(\beta_{|m| n}^{D}r)\\
      =&\frac{2}{\pi\beta_{|m|n}^{D}\sqrt{r}}\Bigg\{\sin(\frac{n\pi}{h}(1-r))(1+ {\cal O}(n^{-2}h^2))+ \frac{\gamma_m^{D}(r) h}{n\pi }\cos(\frac{n\pi}{h}(1-r))(1+ {\cal O}(n^{-2}h^2))  \Bigg\}
    \end{align*}
    where the prefactor in the ${\cal O}$-notation depends only on $m$.
    Therefore, it can be verified that
    \[
      C_{mn}^{D}= \frac{2h^{1/2}}{\beta_{|m|n}^{D}\pi^{1/2}}\left[1+n^{-2}{\cal
          O}(h^2) \right],
    \]
    and hence (\ref{eq:psimnD}) follows.
  \end{proof}
\end{mylemma}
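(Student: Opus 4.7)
The plan is to exploit the fact that, by Lemma~\ref{lem:dir:eig}, the eigenvalue $\beta_{|m|n}^{D}\sim n\pi/h$ is uniformly large as $h\to 0^{+}$, so every Bessel function appearing in the definition of $\psi_{mn}^{D}$ may be replaced by its large-argument expansion from \cite[\S10.17]{nist10}. Writing $\nu=|m|$, $\beta=\beta_{|m|n}^{D}$, $\omega_{1}=\beta-\nu\pi/2-\pi/4$ and $\omega_{2}=\beta r-\nu\pi/2-\pi/4$, I would substitute
\begin{align*}
J_{\nu}(z)&=\sqrt{\tfrac{2}{\pi z}}\Bigl[\cos\omega-\tfrac{a_{1}(\nu)}{z}\sin\omega+O(z^{-2})\Bigr],\\
Y_{\nu}(z)&=\sqrt{\tfrac{2}{\pi z}}\Bigl[\sin\omega+\tfrac{a_{1}(\nu)}{z}\cos\omega+O(z^{-2})\Bigr],
\end{align*}
with $a_{1}(\nu)=(4\nu^{2}-1)/8$, into the numerator $Y_{\nu}(\beta)J_{\nu}(\beta r)-J_{\nu}(\beta)Y_{\nu}(\beta r)$.

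The leading terms combine via the identity $\sin\omega_{1}\cos\omega_{2}-\cos\omega_{1}\sin\omega_{2}=-\sin(\omega_{2}-\omega_{1})=-\sin(\beta(r-1))$, while the next-order terms collapse, by $\cos\omega_{1}\cos\omega_{2}+\sin\omega_{1}\sin\omega_{2}=\cos(\beta(r-1))$, to a single factor proportional to $\frac{a_{1}(\nu)(r-1)}{\beta r}\cos(\beta(r-1))$. One then reads off
\[
Y_{\nu}(\beta)J_{\nu}(\beta r)-J_{\nu}(\beta)Y_{\nu}(\beta r)=\frac{2}{\pi\beta\sqrt{r}}\Bigl\{\sin(\beta(1-r))+\tfrac{\gamma_{m}^{D}(r)h}{n\pi}\cos(\beta(1-r))+n^{-2}O(h^{2})\Bigr\},
\]
after replacing $\beta$ by the expansion $\beta=n\pi/h+(4m^{2}-1)h/(8n\pi)+n^{-3}O(h^{2})$ from~\eqref{eq:asy:betamn} inside the trigonometric arguments; the remainder term here collects both the $O(z^{-2})$ tail in the Bessel expansion and the perturbation of the arguments.

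Next I would compute the normalization constant $C_{mn}^{D}$. Inserting the asymptotic formula above into
\[
(C_{mn}^{D})^{2}=\int_{0}^{2\pi}\!\!\int_{1}^{1+h}\bigl|Y_{\nu}(\beta)J_{\nu}(\beta r)-J_{\nu}(\beta)Y_{\nu}(\beta r)\bigr|^{2}r\,dr\,d\theta,
\]
the $\sqrt{r}$ in the denominator cancels the Jacobian, and on an interval of length $h$ the average of $\sin^{2}(\frac{n\pi}{h}(1-r))$ equals $1/2$ up to a correction of order $h$; this yields $(C_{mn}^{D})^{2}=(2h^{1/2}/(\beta\pi^{1/2}))^{2}(1+n^{-2}O(h^{2}))$. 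Dividing the asymptotic numerator by $C_{mn}^{D}$ and simplifying produces exactly the claimed expansion~\eqref{eq:psimnD}.

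The main obstacle is bookkeeping of error terms: three independent small quantities—$h$, $n^{-1}$, and $1/\beta\sim h/n$—appear simultaneously, and the stated remainder $n^{-2}O(h^{2})$ must be preserved uniformly in both $n$ and $r\in[1,1+h]$. In particular, when the $O(h/n)$ correction in $\beta$ is pushed inside $\sin(\beta(r-1))$, one gets an error only of size $O(h^{2}/n)$, and recovering the advertised $n^{-2}O(h^{2})$ bound requires using the sharper tail of~\eqref{eq:asy:betamn} (where the first correction is itself $O(h/n)$ and the remainder is $n^{-3}O(h^{2})$), together with the fact that $(r-1)\le h$. A secondary technical point is that $\gamma_{m}^{D}$ depends on $r$, so before identifying it one must Taylor-expand $1-1/r$ about $r=1$ to show that the $(r-1)/r$ factor reproduces exactly the claimed $\gamma_{m}^{D}(r)$.
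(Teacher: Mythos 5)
Your proposal is correct and follows essentially the same route as the paper: expand $J_{|m|}$ and $Y_{|m|}$ at the large argument $\beta_{|m|n}^{D}\sim n\pi/h$ via DLMF 10.17.3--10.17.4, combine the cross terms by trigonometric identities to identify the leading $\sin(\beta(1-r))$ term and the $a_1(\nu)(r-1)/(\beta r)$ correction giving $\gamma_m^D$, then compute $C_{mn}^{D}=\frac{2h^{1/2}}{\beta_{|m|n}^{D}\pi^{1/2}}(1+n^{-2}{\cal O}(h^2))$ and divide. Your additional remarks on tracking the three small parameters $h$, $n^{-1}$, $\beta^{-1}$ and on using the sharper $n^{-3}{\cal O}(h^2)$ tail of \eqref{eq:asy:betamn} inside the trigonometric arguments correctly fill in the "it can be verified" steps the paper leaves implicit.
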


\section{Neumann eigenvalues and eigenfunctions}
In this section, we derive the asymptotic expansion of the eigenpair
$\{\lambda_{mn}^N,\psi_{mn}^N\}_{m\in\mathbb{Z},n\in\mathbb{N}}$ for the Neumann eigenvalue problem (NEP) when $0<h\ll1$.
\begin{mylemma}
\label{lem:neu:eig}
For $h\ll 1$, the nonzero roots to
(\ref{eq:gov:roots:n}) with sufficiently large magnitude attain the expansion
  \begin{equation}
    \label{eq:asy:betamn:N}
    \beta_{mn}^{N}=\frac{n\pi}{h} + \frac{(4m^2+3)h}{8n\pi(1+h)} + n^{-3}{\cal
      O}(h^3),\quad {m=0,1,2,\cdots,\quad n=1,2,\cdots,} 
  \end{equation}
  where the prefactor in the ${\cal O}$-notation depends on $m$. On
    the other hand, when $m\neq 0$, there exists a unique root close to $m$ satisfying
  \begin{equation}
    \label{eq:asy:betam2:N}
    \beta_{m0}^{N}=m -\frac{mh}{2} + {\cal O}(h^2),\quad {m=1,2,\cdots.} 
  \end{equation}
\begin{proof}
  When $m=0$,
  \begin{align*}
    F_0^{N}(\beta;h)=Y_{1}(\beta)J_{1}\left( \beta(1+h) \right) - J_{1}(\beta)Y_{1}\left( \beta(1+h) \right) = F_1^{D}(\beta,h),
  \end{align*}
  so that Lemma~\ref{lem:dir:eig} applies. In the following, we assume $m\neq
  0$.

  Let $C>0$ be a sufficiently large constant that is independent of $h$. Since
  $\beta$ and $\beta(1+h)$ are of the same order of magnitude when $\beta\geq
  C$, the asymptotic formula (\ref{eq:asy:betamn}) in the region $[C,\infty)$ can be derived using the formulas in \cite[Sec. 10.21(x)]{nist10}. We only need to show that there is only one root in $(0,C)$ and it is near $m$.

  We distinguish two cases: $0<\beta\leq c_0$ or $\beta\in(c_0,C)$ for some
  sufficiently small constant $c_0\geq 0$ independent of $h$. If $\beta\leq
  c_0$, by the power series representation of $J_m$ and $Y_m$ with small arguments \cite[10.2.2 \&
  10.8.1]{nist10}, we obtain for any fixed $h>0$,
  \begin{align*}
    F_m^{N}(\beta;h) \eqsim & \frac{m!2^m}{\pi\beta^{m+1}}\frac{\left( \beta(1+h)\right)^{m-1}}{2^m (m-1)!} - \frac{m!2^m}{\pi\beta^{m+1}(1+h)^{m+1}}\frac{\left( \beta\right)^{m-1}}{2^m (m-1)!}    \\
      \eqsim & \frac{m}{\pi\beta^2}\left[(1+h)^{m-1}-(1+h)^{-m-1}  \right]>0,\quad \beta \ll 1.
  \end{align*}
  Now if $\beta\in(c_0,C)$ so that $F_m^{N}$ is analytic at
  $h=0$. Taylor's expansion of $F_m^{N}$ at $h=0$ and Bessel's differential
  equation directly give rise to
  \begin{align*}
    F_m^{N}(\beta;h) =& (Y_m'(\beta)J_m{\dd}(\beta)-J_m'(\beta)Y_m{\dd}(\beta))\beta h + {\cal O}(h^2)\\
    =& -(1-m^2\beta^{-2})\left[Y_m'(\beta)J_m(\beta)-J_m'(\beta)Y_m(\beta)\right]\beta h+ {\cal O}(h^2).
  \end{align*}
  Since $Y_m(\beta)$ and $J_m(\beta)$ are linearly independent over the interval
  $[c_0,C]$, the first term is in fact strictly nonzero and is far greater than
  the second term if $(\beta - m)\gg h $ for $h\ll 1$. But the intermediate value
  theorem implies that there exists a root in $(m-c_0,m+c_0)$ satisfying
  $\beta-m={\cal O}(h)$. By a similar asymptotic analysis of
  $F_m^{N}(\beta;h)$ at $\beta=m$, the expansion (\ref{eq:asy:betam2:N}) follows.
\end{proof}
\end{mylemma}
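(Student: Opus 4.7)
The $m=0$ case is immediate: the identities $J_0'=-J_1$ and $Y_0'=-Y_1$ give $F_0^N(\beta;h)=F_1^D(\beta;h)$, so Lemma~\ref{lem:dir:eig} applied with order~$1$ delivers (\ref{eq:asy:betamn:N}) and certifies that there is no bounded root. Fix $m\neq 0$. My plan is to split the positive real axis into a large region $|\beta|\ge C$ with $C$ an $h$-independent constant (yielding the sequence $\beta_{mn}^N\sim n\pi/h$), a ``gap'' region $c_0\le |\beta|\le C$ (which will contain exactly the root near $\beta=m$), and a small region $0<|\beta|<c_0$, where a direct sign estimate using the power series of $J_m',Y_m'$ rules out roots.

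\textbf{Large roots.} On $|\beta|\ge C$ both $\beta$ and $\beta(1+h)$ are uniformly large, so I substitute the two-term Hankel expansions
\begin{align*}
J_m'(z)&=-\sqrt{\tfrac{2}{\pi z}}\bigl[\sin\omega+\tfrac{4m^2+3}{8z}\cos\omega\bigr]+{\cal O}(z^{-5/2}),\\
Y_m'(z)&=\sqrt{\tfrac{2}{\pi z}}\bigl[\cos\omega-\tfrac{4m^2+3}{8z}\sin\omega\bigr]+{\cal O}(z^{-5/2}),
\end{align*}
with $\omega=z-m\pi/2-\pi/4$, into $F_m^N(\beta;h)$. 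Setting $z_1=\beta$, $z_2=\beta(1+h)$ and using $\omega_2-\omega_1=\beta h$, the cross terms collapse to
\begin{equation*}
F_m^N(\beta;h)=-\tfrac{2}{\pi\beta\sqrt{1+h}}\Bigl[\sin(\beta h)-\tfrac{(4m^2+3)h}{8\beta(1+h)}\cos(\beta h)+{\cal O}(\beta^{-2})\Bigr].
\end{equation*}
Writing $\beta h=n\pi+\delta$, the equation $F_m^N=0$ reduces to $\delta=(4m^2+3)h/(8\beta(1+h))+{\cal O}(\beta^{-2})$; back-substituting the rough $\beta\sim n\pi/h$ and using the higher-order Hankel corrections (as in \cite[Sec.~10.21(x)]{nist10}, cited in the paper) produces (\ref{eq:asy:betamn:N}) with the stated $n^{-3}{\cal O}(h^3)$ remainder. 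Rouch\'e's theorem applied to the explicit leading factor gives existence and uniqueness of one root in each window $[n\pi/h-\eta,\,n\pi/h+\eta]$ for large $n$.

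\textbf{Near-$m$ root.} The starting point is $F_m^N(\beta;0)\equiv 0$ (two copies of the Wronskian), which forces a Taylor expansion in $h$:
\begin{equation*}
F_m^N(\beta;h)=\beta\bigl[Y_m'(\beta)J_m''(\beta)-J_m'(\beta)Y_m''(\beta)\bigr]h+{\cal O}(h^2).
\end{equation*}
Eliminating $u''$ via Bessel's equation $u''=-u'/z-(1-m^2/z^2)u$ and collapsing by the Wronskian $J_mY_m'-J_m'Y_m=2/(\pi z)$ gives $\partial_hF_m^N(\beta;0)=-\tfrac{2}{\pi}(1-m^2/\beta^2)$, which vanishes precisely at $\beta=m$. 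Hence $\beta=m$ is the unique candidate in $(c_0,C)$, and $(m,0)$ is a double zero of $F_m^N$. Writing $\beta=m+\delta$ with $\delta$ small,
\begin{equation*}
F_m^N(m+\delta,h)=-\tfrac{4\delta h}{\pi m}+\tfrac{h^2}{2}\partial_h^2F_m^N(m;0)+{\cal O}(\delta^2 h)+{\cal O}(h^3),
\end{equation*}
so $\delta={\cal O}(h)$. A second round of Bessel's equation applied to $J_m''',Y_m'''$ at $z=m$, combined with the Wronskian, reduces $\partial_h^2F_m^N(m;0)=m^2[Y_m'J_m'''-J_m'Y_m''']|_{z=m}$ to $-4/\pi$, yielding $\delta=-mh/2+{\cal O}(h^2)$, i.e.~(\ref{eq:asy:betam2:N}). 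Uniqueness on all of $(c_0,C)$ follows from the implicit function theorem applied to the rescaled function $G(\delta,h):=F_m^N(m+\delta,h)/h$ at $(0,0)$.

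\textbf{Main obstacle.} The single algebraically delicate step is the evaluation of $\partial_h^2 F_m^N(m;0)$: after differentiating Bessel's equation once more to express $J_m''',Y_m'''$ through $J_m,J_m',Y_m,Y_m'$ at $z=m$, one must verify the sharp Wronskian cancellation that produces the exact constant $-4/(\pi m^2)$ in the bracket, which is what pins down the coefficient $-1/2$ in $\beta_{m0}^N=m-mh/2+{\cal O}(h^2)$. All remaining steps---the asymptotic analysis for $|\beta|\ge C$, the small-argument sign estimate ruling out roots in $(0,c_0]$, and the Rouch\'e counting---are routine.
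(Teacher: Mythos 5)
Your proposal is correct and follows essentially the same route as the paper: the same $m=0$ reduction to the Dirichlet cross-product, the same three-region splitting of the $\beta$-axis, and the same Taylor-expansion-in-$h$ plus Bessel-equation/Wronskian reduction that isolates the bounded root at $\beta=m$. You additionally carry out explicitly the second-order step $\partial_h^2 F_m^N(m;0)=-4/\pi$ that the paper compresses into ``by a similar asymptotic analysis,'' and both that value and the resulting coefficient $-mh/2$ check out.
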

The following lemma characterizes the asymptotic behavior of
$\psi_{mn}^{N}(r,\theta;h)$ for $h\ll 1$.
\begin{mylemma}
  \label{lem:psimnN}
  Let $h\ll 1$, $r\in[1,1+h]$, $m\in\mathbb{Z}$ and $n\in\mathbb{N}$. If $n\geq 1$, then 
  \begin{align}
    \label{eq:psimnN:n1}
    \psi_{mn}^{N}(r,\theta;h) =&\frac{e^{\bi m\theta}}{\sqrt{\pi rh}}\Bigg\{\cos\left[ \frac{n\pi}{h}(r-1) \right] +\frac{\gamma^N_{m}(r) h}{n\pi}\sin\left[ \frac{n\pi}{h}(r-1) \right]+{\cal O}(n^{-2}h^2)\Bigg\},
  \end{align}
  where the function $\gamma^N_{m}:[1,1+h]\to\mathbb{R}$ is given by
  \[
    \gamma^N_{m}(r) = \frac{(4m^2+3)h}{8(1+h)} - \frac{(4m^2-1)}{8 r}. 
  \]
  If $n=0$,
  \begin{equation}
    \label{eq:psimnN:n0}
     \psi_{m0}^{N}(r,\theta;h) = \frac{e^{\bi m\theta}}{\sqrt{\pi h(h+2)}}  + {\cal O}(h^{3/2}).
  \end{equation}
In the above, the prefactors in the ${\cal O}$-notations depend on $m$ only.

  \begin{proof}
        For $n\ge1$, it follows from Lemma~\ref{lem:neu:eig} that $\beta_{|m|n}^{N}\gg
    1$. Thus by \cite[10.17:3,4,9,10]{nist10}, we obtain
    \begin{align*}
      &Y_{|m|}'(\beta_{|m|n}^{N})J_{|m|}(\beta_{|m|n}^{N}r)-J_{|m|}'(\beta_{|m|n}^{N})Y_{|m|}(\beta_{|m|n}^{N}r)\\
      =&\frac{2}{\pi\beta_{|m|n}^{N}}\sqrt{\frac{1}{r}}\left\{ \cos\left[ \frac{n\pi}{h}(r-1) \right](1+{\cal O}(n^{-2}h^2)) +\frac{\gamma_{m}(r;h)h}{n\pi}\sin\left[\frac{n\pi}{h}(r-1) \right](1+{\cal O}(n^{-2}h^2))\right\},
    \end{align*}
    where the prefactor in the ${\cal O}$-notation depends only on $m$. It
    can be verified that the normalization constant
    \begin{align*}
      C_{mn}^{N} 
      =&\frac{2\sqrt{\pi h} }{\pi\beta_{|m|n}^{N}}(1 + {\cal O}(n^{-2}h^2)),
    \end{align*}
    so that (\ref{eq:psimnN:n1}) follows. 
    If $n=0$, we have
    \begin{align*}
      &Y_{|m|}'(\beta_{|m|0}^{N})J_{|m|}(\beta_{|m|0}^{N}r)-J_{|m|}'(\beta_{|m|0}^{N})Y_{|m|}(\beta_{|m|0}^{N}r)\\
      =&\left[ Y_{|m|}'(\beta_{|m|0}^{N})J_{|m|}(\beta_{|m|0}^{N})-J_{|m|}'(\beta_{|m|0}^{N})Y_{|m|}(\beta_{|m|0}^{N})\right]\left[1 + {\cal O}(h^2)\right].
    \end{align*}
    Then
    \begin{align*}
      C_{m0}^{N} = \sqrt{\pi h(h+2)}\left[ Y_{|m|}'(\beta_{|m|0}^{N})J_{|m|}(\beta_{|m|0}^{N})-J_{|m|}'(\beta_{|m|0}^{N})Y_{|m|}(\beta_{|m|0}^{N})\right]\left[1 + {\cal O}(h^2)\right],
    \end{align*}
    and hence (\ref{eq:psimnN:n0}) follows.
  \end{proof}
\end{mylemma}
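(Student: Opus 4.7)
The strategy parallels the one used for Lemma~\ref{lem:psimnD} in Appendix A: substitute the asymptotic expansion of $\beta_{|m|n}^{N}$ into the explicit formula \eqref{eq:psimnN} and then expand the Bessel functions. The proof splits naturally into the two regimes $n\geq 1$ and $n=0$, since Lemma~\ref{lem:neu:eig} gives qualitatively different information in each case.

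For $n\geq 1$, Lemma~\ref{lem:neu:eig} tells us that $\beta_{|m|n}^{N}\sim n\pi/h\gg 1$, so we are in the large-argument regime where we may invoke the classical asymptotic expansions of $J_{|m|},Y_{|m|}$ and their derivatives \cite[10.17.3, 10.17.4, 10.17.9, 10.17.10]{nist10}. The plan is (i) plug in $\beta_{|m|n}^{N}=n\pi/h+(4m^2+3)h/(8n\pi(1+h))+n^{-3}\mathcal{O}(h^3)$ into the phases $\beta_{|m|n}^{N}r$ and $\beta_{|m|n}^{N}$ separately, writing the arguments of the trigonometric factors as $\frac{n\pi}{h}(r-1)$ plus a small correction of size $\mathcal{O}(h/n)$; (ii) expand the resulting $\cos$ and $\sin$ of the sum via angle addition, where the leading order gives the $\cos[\tfrac{n\pi}{h}(r-1)]$ contribution and the first correction generates the $\gamma_m^N(r)h/(n\pi)$ term; (iii) combine the $r^{-1/2}$ factors coming from the large-argument Bessel asymptotics to produce the prefactor $1/\sqrt{\pi r h}$ after normalization; and (iv) compute $C_{mn}^{N}$ by the same procedure applied to $\|\psi_{mn}^{N}\|_{L^{2}(R^h)}^{2}$, where the $\cos^{2}$ averages to $1/2$ on $[1,1+h]$, giving $C_{mn}^{N}=\frac{2\sqrt{\pi h}}{\pi\beta_{|m|n}^{N}}(1+\mathcal{O}(n^{-2}h^{2}))$. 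Tracking all sub-leading terms and checking that the error's prefactor depends only on $m$ (and not on $n$) is the most delicate bookkeeping step.

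For $n=0$ with $m\neq 0$, Lemma~\ref{lem:neu:eig} gives $\beta_{|m|0}^{N}=|m|-|m|h/2+\mathcal{O}(h^{2})$, which stays in a compact set away from the origin. The Bessel factors are therefore analytic, and since $r-1=\mathcal{O}(h)$ on the annular range, Taylor expansion around $r=1$ yields $Y_{|m|}'(\beta_{|m|0}^{N})J_{|m|}(\beta_{|m|0}^{N}r)-J_{|m|}'(\beta_{|m|0}^{N})Y_{|m|}(\beta_{|m|0}^{N}r)$ equal to its value at $r=1$ multiplied by $1+\mathcal{O}(h^{2})$, because the first-order derivative in $r$ vanishes thanks to Bessel's equation and the Neumann condition at $r=1$. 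The normalization integral over the annulus then produces the factor $\sqrt{\pi h(h+2)}$, and \eqref{eq:psimnN:n0} follows. The case $m=0$ is handled by the identity $F_{0}^{N}(\beta;h)=F_{1}^{D}(\beta,h)$ noted in the proof of Lemma~\ref{lem:neu:eig}, which reduces it to Lemma~\ref{lem:psimnD}.

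The main obstacle is step (ii) above: the correction inside the trigonometric argument contains both a piece proportional to $(r-1)\cdot\text{(correction to }\beta_{|m|n}^{N})$ and a piece proportional to the second-order Bessel correction $\propto 1/(\beta_{|m|n}^{N}r)$ from \cite[10.17.3-4]{nist10}, and the identification of their sum with the specific $\gamma_m^N(r)=(4m^2+3)h/[8(1+h)]-(4m^2-1)/(8r)$ requires carefully combining these two contributions and verifying that the residual error is uniform in $n$. Once this identification is made, the remaining steps are direct algebraic manipulation.
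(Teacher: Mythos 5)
Your proposal is correct and follows essentially the same route as the paper: large-argument Bessel asymptotics from \cite[10.17.3--4, 10.17.9--10]{nist10} combined with the root expansion of Lemma~\ref{lem:neu:eig} for $n\geq 1$, followed by the normalization computation $C_{mn}^{N}=\frac{2\sqrt{\pi h}}{\pi\beta_{|m|n}^{N}}(1+{\cal O}(n^{-2}h^{2}))$, and a Taylor expansion about $r=1$ (whose first-order term vanishes identically since $\frac{d}{dr}\big[Y_{|m|}'(\beta)J_{|m|}(\beta r)-J_{|m|}'(\beta)Y_{|m|}(\beta r)\big]\big|_{r=1}=\beta\,(Y_{|m|}'J_{|m|}'-J_{|m|}'Y_{|m|}')=0$) for $n=0$. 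The only slight imprecision is in your step (ii): the piece $(r-1)\times(\text{correction to }\beta_{|m|n}^{N})={\cal O}(h^{2}/n)$ is absorbed into the error, so $\gamma_{m}^{N}$ arises entirely from the second-order terms $\tfrac{4m^{2}+3}{8\beta}$ and $\tfrac{4m^{2}-1}{8\beta r}$ in the derivative and function expansions respectively, which does not affect the validity of your argument.
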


\section{Asymptotic analysis of matrix elements in \eqref{eq:eig:m:2}}
To prove Lemma~\ref{lem:dm'm:h}, we need the following two lemmas.
\begin{mylemma}
  \label{prop:S0}
  For any two functions $f,g\in C_{\rm comp}^{\infty}(\mathbb{R})$ with
  \[
\max\{||f||_{W_{\infty}^1(\mathbb{R})},||g||_{W_{\infty}^1(\mathbb{R})}\}\leq M,
\]
for some constant $M>0$, the following three INF matrices
  \begin{align*}
    &\{({\cal S}_0[( n' )^{-1/2}f(\cdot)\sin(n'\pi\cdot)],n^{-1/2}g(\cdot)\sin(n\pi\cdot))_{L^2(0,1)}\}_{n,n'=1}^{\infty},\\
    &\{({\cal S}_0[( n' )^{1/2}f(\cdot)\cos(n'\pi\cdot)],n^{-1/2}g(\cdot)\sin(n\pi\cdot))_{L^2(0,1)}\}_{n,n'=1}^{\infty},\\
    &\{({\cal S}_0[( n' )^{1/2}f(\cdot)\cos(n'\pi\cdot)],n^{1/2}g(\cdot)\cos(n\pi\cdot))_{L^2(0,1)}\}_{n,n'=1}^{\infty},
  \end{align*}
  are bounded operators mapping from $\ell^2$ to $\ell^2$, with norms
  depending only on $M$ but not on functions $f$ and $g$.
  \begin{proof}
    We only give the proof for the first matrix.
    For any $n\in\mathbb{N}$,
    \begin{align*}
      ((n')^{-1/2}\sin(n'\pi\cdot),\phi_n)_{L^2(0,1)} =&
                                                           \begin{cases}
                                                             \frac{1}{\sqrt{n'}}\frac{1-(-1)^{n'}}{n'\pi}& n = 0,\\
                                                             \frac{1}{2\sqrt{2n'}}\left[ \frac{1-(-1)^{(n'+n)}}{(n'+n)\pi} + \frac{1-(-1)^{(n'-n)}}{(n'-n)\pi} \right] & n\neq n',\\
                                                             \frac{1}{2\sqrt{2n'}} \frac{1-(-1)^{(n'+n)}}{2n'\pi} & n = n'.
                                                           \end{cases}
    \end{align*}
    Thus for any $\{a_{n'}\}_{n'=1}^{\infty}\in\ell^2$ and any $N,N'\in\mathbb{N}^*$, we have
    \begin{align*}
      &\left|\sum_{n'=1}^{N'}a_{n'}((n')^{-1/2}\sin(n'\pi\cdot),\phi_n)_{L^2(0,1)} \right|\\
      \leq &\left(\sum_{n'=1}^{\infty}|a_{n'}|^2  \right)^{1/2}\left(\sum_{n'=1}^{\infty} |((n')^{-1/2}\sin(n'\pi\cdot),\phi_n)_{L^2(0,1)}|^2\right)^{1/2} <\infty
    \end{align*}
    for $n\leq N$. We obtain
    \begin{align*}
      &\sum_{n=0}^{N}(1+n^2)^{-1/2}\left|\sum_{n'=1}^{N'}(a_{n'}(n')^{-1/2}\sin(n'\pi\cdot),\phi_n)_{L^2(0,1)}\right|^2\\
      \leq& \left(\sum_{n'=1}^{\infty}|a_{n'}|^2  \right)\sum_{n=0}^{N}\sum_{n'=1}^{N'}(1+n^2)^{-1/2}\left|((n')^{-1/2}\sin(n'\pi\cdot),\phi_n)_{L^2(0,1)}\right|^2\\
      \leq& \left(\sum_{n'=1}^{\infty}|a_{n'}|^2  \right)\Bigg[\sum_{n=0}^{\infty}(1+n^2)^{-1/2}\frac{C}{n^3} +\sum_{0\le n\le N, 1\le \delta n'\le N'}(1+n^2)^{-1/2}\frac{C}{\delta n'^2(\delta n'+n)} \\
    &+\sum_{1\le \delta n\le M', 1\le n'\le N'}(1+(\delta n+n')^2)^{-1/2}\frac{C}{\delta n^2n'}\Bigg]\leq C ||\{a_{n'}\}||_{\ell^2}^2,
    \end{align*}
    where $C>0$ denotes a generic and sufficiently large constant.
    The above implies that
    \[
      \phi(r'):=\sum_{n'=1}^{\infty}a_{n'}(n')^{-1/2}\sin(n'\pi r')\in
      \widetilde{H^{-1/2}}(0,1),
    \]
    with $||\phi||_{\widetilde{H^{-1/2}}(0,1)}\le C||\{a_n\}||_{\ell^2}$.
    Similarly, for any $\{b_n\}_{n=1}^{\infty}\in\ell^2$,
    \[
      \psi(r'):=\sum_{n=1}^{\infty}b_{n}(n)^{-1/2}\sin(n\pi r)\in \widetilde{H^{-1/2}}(0,1),
    \]
    with $||\psi||_{\widetilde{H^{-1/2}}(0,1)}\le C||\{b_n\}||_{\ell^2}$. Therefore,
    \begin{align*}
      &|\sum_{n=1}^{\infty}\sum_{n'=1}^{\infty}a_{n'}({\cal S}_0[( n' )^{-1/2}f(\cdot)\sin(n'\pi\cdot)],n^{-1/2}g(\cdot)\sin(n\pi\cdot))_{L^2(0,1)} b_{n}|\\
      =&|\langle {\cal S}_0 f\phi,g\psi \rangle_{(0,1)}| \leq C||f\phi||_{\widetilde{H}^{-1/2}(0,1)}||g\psi||_{\widetilde{H}^{-1/2}(0,1)} \leq C(M)||\{a_{n'}\}_{n'=1}^{\infty}||_{\ell^2}||\{b_{n}\}_{n=1}^{\infty}||_{\ell^2},
    \end{align*}
    where $C(M)$ denotes the dependence of the constant $C$ on $M$ \cite[Thm. 3.20]{mcl00}, 
    implying the boundedness of the first matrix mapping from $\ell^2$ to
    $\ell^2$.
  \end{proof}
\end{mylemma}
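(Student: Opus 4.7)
The plan is to factor each bilinear form through the single-layer operator ${\cal S}_0$, which is bounded from $\widetilde{H^{-1/2}}(0,1)$ to $H^{1/2}(0,1)$, and then exploit the $H^{1/2}$--$\widetilde{H^{-1/2}}$ duality. Given $\{a_{n'}\},\{b_n\}\in\ell^2$ and finite truncations $N,N'$, I will assemble trial functions $\phi_{N'},\psi_N$ from the trigonometric factors present in each matrix so that the corresponding finite bilinear sum equals $({\cal S}_0[f\phi_{N'}],g\psi_N)_{L^2(0,1)}$; bounding this by $C\|f\phi_{N'}\|_{\widetilde{H^{-1/2}}(0,1)}\|g\psi_N\|_{\widetilde{H^{-1/2}}(0,1)}$ uniformly in $N,N'$ then gives the desired $\ell^2\to\ell^2$ boundedness after passing to the limit.

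The first and key step is to verify that the trial functions lie in $\widetilde{H^{-1/2}}(0,1)$ with norm controlled by the $\ell^2$ norm of the underlying sequence. For matrix (i) I would take $\phi(r)=\sum_{n'}a_{n'}(n')^{-1/2}\sin(n'\pi r)$ and $\psi(r)=\sum_n b_n n^{-1/2}\sin(n\pi r)$; expanding against the cosine basis $\{\phi_n\}$ via product-to-sum identities produces Fourier coefficients decaying like $(n')^{-3/2}$ when $n=0$ and like $(n')^{-1/2}(n'\pm n)^{-1}$ when $n\ge 1$, and a direct Cauchy--Schwarz argument weighted by $(1+n^2)^{-1/2}$ delivers the required estimate. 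For matrices (ii) and (iii), the relevant side is replaced by $\tilde\phi(r)=\sum_{n'}a_{n'}(n')^{1/2}\cos(n'\pi r)$; since $\{\cos(n\pi r)\}_{n\ge 1}$ is orthogonal in $L^2(0,1)$, one computes $(\tilde\phi,\phi_n)_{L^2(0,1)}=a_n n^{1/2}/(2\sqrt{2})$ for $n\ge 1$ and $(\tilde\phi,\phi_0)_{L^2(0,1)}=0$, so
\[
\|\tilde\phi\|_{\widetilde{H^{-1/2}}(0,1)}^2=\sum_{n\ge1}\frac{n}{8(1+n^2)^{1/2}}|a_n|^2\le\tfrac{1}{8}\|\{a_n\}\|_{\ell^2}^2,
\]
with the factor $n^{1/2}$ in the Fourier coefficient being precisely absorbed by $(1+n^2)^{-1/4}$ in the norm. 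The same computation handles $\tilde\psi$ for matrix (iii).

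The remaining ingredients are standard: multiplication by $f$ (or $g$) is bounded on $\widetilde{H^{-1/2}}(0,1)$ with operator norm controlled by $\|f\|_{W^1_{\infty}(\mathbb{R})}\le M$, a Sobolev multiplier estimate (cf.\ \cite[Thm.~3.20]{mcl00}), and combining this with the boundedness of ${\cal S}_0$ and the duality pairing produces a constant depending only on $M$. The main obstacle is really just the negative-order Sobolev-norm estimate for the trial functions in cases (ii) and (iii), where the weight $(n')^{1/2}$ threatens to blow up the $L^2$ norm but is tamed by the $(1+n^2)^{-1/2}$ factor in $\widetilde{H^{-1/2}}$; once this norm bound is in place, the remainder of the argument reduces to the standard truncation procedure described above.
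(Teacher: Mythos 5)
Your proposal is correct and follows essentially the same route as the paper's proof: build trial functions from the trigonometric factors, verify they lie in $\widetilde{H^{-1/2}}(0,1)$ with norm controlled by the $\ell^2$ norm of the coefficients, and conclude via the boundedness of ${\cal S}_0$, the duality pairing, and the multiplier estimate of \cite[Thm.~3.20]{mcl00}. The only difference is that you also carry out the (easier, diagonal) norm computation for the cosine trial functions needed in cases (ii) and (iii), which the paper omits with ``we only give the proof for the first matrix.''
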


Define
\begin{align}
    F_m(r,r'):=\frac{1}{2\pi}\int_{0}^{2\pi} \frac{e^{\bi k\sqrt{r^2+r'^2-2rr'\cos\theta}}}{\sqrt{r^2+r'^2-2rr'\cos\theta}}e^{\bi m\theta}d\theta.
  \end{align}
\begin{mylemma}
  \label{prop:FourierG}
  For $k\in{\cal B}$ and $r\neq r'\in(1,1+h)$, we have
  \begin{align}
    F_m(r,r')=& \frac{1}{\sqrt{rr'}}\Bigg[\log(1-w^{-2})\left[\frac{(-1)}{4\pi} + (1-w^{-2})f_m(1-w^{-2},\lambda^2)  \right]\nonumber\\
    &+ g_m(1-w^{-2},\lambda^2)  \Bigg],
  \end{align}
  where $w=(r^2+r'^2)/(2rr')$, $\lambda=k\sqrt{2rr'}$, and $f_m(t_1,t_2)$ and
  $g_m(t_1,t_2)$ are analytic for $t_1\in (-1,1)$ and $|t_2|<C$ for some
  sufficiently large constant $C$.
  \begin{proof}
    By Eqs.~(45-48) in \cite{concoh10}, we have
    \begin{align*}
      F_m(r,r') =& \frac{1}{2}\left[\hat{\Lambda}_+^m(rr',\lambda,w) + \hat{\Lambda}_-^m(rr',\lambda,w)  \right],
    \end{align*}
    where 
    \begin{align*}
      \hat{\Lambda}_+^m(rr',\lambda,w) =& \frac{(-1)^m}{\sqrt{rr'}}\sum_{p=0}^{\infty}\left( \frac{\lambda^2\sqrt{w^2-1}}{4} \right)^{p}\frac{Q_{m-1/2}^p(w)}{p!\Gamma(p-m+1/2)\Gamma(p+m+1/2)},\\
      \hat{\Lambda}_-^m(rr',\lambda,w) =& \frac{(-1)^m}{\sqrt{rr'}}\sum_{p=0}^{\infty}\left( \frac{\lambda^2\sqrt{w^2-1}}{4} \right)^{p+m+1/2}\frac{Q_{m-1/2}^{p+m+1/2}(w)}{p!\Gamma(p+m+3/2)\Gamma(p+2m+1)},
    \end{align*}
    and $Q_v^\mu$ denotes the associated Legendre function \cite[\S 14.1]{nist10}.
    By Eqs.~(14.3.7), (15.8.10), and (15.8.12) in~\cite{nist10}, we obtain
    \begin{align*}
      \hat{\Lambda}_+^m(rr',\lambda,w) =&\frac{(-1)^m}{\sqrt{rr'}}\sum_{p=0}^{\infty}\left( \frac{\lambda^2(w^2-1)}{4} \right)^{p}\frac{\sqrt{\pi}e^{\bi p\pi} {}_2F_1(\frac{m+p+1/2}{2},\frac{m+p+3/2}{2};m+1;w^{-2})}{2^{m+1/2}p!w^{m+p+1/2}\Gamma(p-m+1/2)}\\
      =&\frac{(-1)^m}{\sqrt{rr'}}\sum_{p=0}^{\infty}\left( \frac{\lambda^2}{4w^2} \right)^{p}\frac{\sqrt{\pi}e^{\bi p\pi} {}_2F_1(\frac{m-p+1/2}{2},\frac{m-p+3/2}{2};m+1;w^{-2})}{2^{m+1/2}p!w^{m+p+1/2}\Gamma(p-m+1/2)}\\
      =&\frac{1}{\sqrt{rr'}}\left[\log(1-w^{-2})\frac{(-1)}{2\pi} + \log(1-w^{-2})(1-w^{-2})f^1_m(1-w^{-2},\lambda^2) + g^1_m(1-w^{-2},\lambda^2)  \right]
    \end{align*}
    for some analytic functions $f^1_m(t_1,t_2)$ and $g^1_m(t_1,t_2)$, where
    ${}_2F_1$ denotes the hypergeometric function \cite[\S 15.1]{nist10}. 
    Let $z=\frac{1}{2}\left( 1-\frac{1}{1-w^{-2}}\right)$. By Eqs.~(14.3.7), (15.8.19) and (15.8.8) in \cite{nist10}, we have
    \begin{align*}
      \hat{\Lambda}_-^m(rr',\lambda,w) =& \frac{1}{\sqrt{rr'}}\sum_{p=0}^{\infty}\left( \frac{\lambda^2(w^2-1)}{4} \right)^{p+m+1/2}\frac{\sqrt{\pi} e^{\bi (p+1/2)\pi}{}_2F_1(\frac{p+2m+1}{2},\frac{p+2m+2}{2};m+1;w^{-2})}{p!w^{p+2m+1}\Gamma(p+m+3/2)}\\
      =&\frac{1}{\sqrt{rr'}}\sum_{p=0}^{\infty}\left( \frac{\lambda^2(w^2-1)}{4} \right)^{p+m+1/2}(1-2z)^{p+2m+1}(1-z)^{-m}\sqrt{\pi} e^{\bi (p+1/2)\pi}\\
      &\frac{{}_2F_1(-p-m,p+m+1;m+1;z)}{p!w^{p+2m+1}\Gamma(p+m+3/2)},\\
      =&\frac{1}{\sqrt{rr'}}\sum_{p=0}^{\infty}\left( \frac{\lambda^2}{4} \right)^{p+m+1/2}(1-z)^{-m}\sqrt{\pi} e^{\bi (p+1/2)\pi}\frac{{}_2F_1(-p-m,p+m+1;m+1;z)}{p!(4z(z-1))^{p/2}\Gamma(p+m+3/2)}\\
      =&\frac{1}{\sqrt{rr'}}\left[\log(1-w^{-2})(1-w^{-2})f^2_m(w^{-2}-1,\lambda^2) + g^2_m(w^{-2}-1,\lambda^2)  \right],
    \end{align*}
    for some analytic functions $f^2_m(t_1,t_2)$ and $g^2_m(t_1,t_2)$. The proof is complete by
    taking $f_m=f^1_m+f^2_m$ and $g_m=g^1_m+g^2_m$.
  \end{proof}
\end{mylemma}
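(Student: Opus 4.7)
The plan is to isolate the singularity of $F_m(r,r')$ at $r=r'$ (equivalently at $w=1$, i.e.\ $1-w^{-2}=0$) by passing to a toroidal harmonic representation of the 3D Helmholtz kernel on coaxial circles, and then to use hypergeometric transformation identities to extract the $\log(1-w^{-2})$ factor from the remaining pieces.

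First, I would recognize that $\sqrt{r^2+r'^2-2rr'\cos\theta}$ is the Euclidean distance between two coaxial points in the plane $x_3=0$, so $F_m(r,r')$ is precisely the $m$-th Fourier coefficient in the relative angle of the free-space Helmholtz fundamental solution $e^{\bi k R}/(4\pi R)$ restricted to two concentric circles. Such Fourier coefficients admit a closed-form series expansion in terms of associated Legendre functions of the second kind of half-integer degree $Q^{\mu}_{m-1/2}(w)$ and powers of $\lambda=k\sqrt{2rr'}$. I would quote the expansion
\[
F_m(r,r')=\tfrac12\bigl[\hat\Lambda_+^m(rr',\lambda,w)+\hat\Lambda_-^m(rr',\lambda,w)\bigr]
\]
with the two series representations in terms of $Q^p_{m-1/2}(w)$ and $Q^{p+m+1/2}_{m-1/2}(w)$ from the literature on toroidal/coaxial expansions of Helmholtz kernels. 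This reduces the whole problem to analyzing the behavior of a small number of $Q^{\mu}_{m-1/2}(w)$ near $w=1$.

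Next, I would convert each $Q^{\mu}_{m-1/2}(w)$ to a Gauss hypergeometric function via $Q^{\mu}_{\nu}(w)=\mathrm{const}\cdot w^{-\nu-\mu-1}(w^2-1)^{\mu/2}\,{}_2F_1\!\left(\tfrac{\nu+\mu+1}{2},\tfrac{\nu+\mu+2}{2};\nu+\tfrac32;w^{-2}\right)$, so that the argument becomes $w^{-2}\in(0,1)$. The singular behavior as $w\to 1^+$ is then governed by the logarithmic branch of ${}_2F_1(a,b;a+b;z)$ at $z=1$, namely the standard connection formula expressing ${}_2F_1(a,b;c;z)$ in terms of ${}_2F_1(a,b;a+b-c+1;1-z)$ with coefficients that produce a $\log(1-z)$ term whenever $c=a+b$. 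For $\hat\Lambda_+^m$ this identity yields the asymptotic $\log(1-w^{-2})\cdot[-(2\pi)^{-1}+(1-w^{-2})f^1_m]+g^1_m$ with analytic $f^1_m,g^1_m$ in $(1-w^{-2},\lambda^2)$. For $\hat\Lambda_-^m$, after applying an Euler/Pfaff transformation (e.g.\ $z\mapsto \tfrac12(1-\tfrac{1}{1-w^{-2}})$) the $(w^2-1)^{m+1/2}$ prefactor combines with the resulting ${}_2F_1$ to produce only a logarithmically multiplied analytic piece plus an analytic remainder; the leading $-(4\pi)^{-1}\log(1-w^{-2})$ contribution comes entirely from $\hat\Lambda_+^m$.

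The last step is to add the two expansions, setting $f_m=f^1_m+f^2_m$ and $g_m=g^1_m+g^2_m$, and to verify joint analyticity of $f_m,g_m$ as functions of $(t_1,t_2)=(1-w^{-2},\lambda^2)$ in the domain $t_1\in(-1,1)$, $|t_2|<C$. Analyticity in $t_2$ is immediate from the power series in $\lambda^2$ that defines $\hat\Lambda_\pm^m$ (these series converge on all of $\mathbb{C}$, with uniform bounds on any compact subset of $\mathcal B$). Analyticity in $t_1$ reduces to analyticity of the ${}_2F_1$'s after the connection/Pfaff transformations, whose series in $1-z$ converge for $|1-z|<1$; the factor $\sqrt{w^2-1}$ that appears with odd powers in $\hat\Lambda_-^m$ is absorbed by combining with the overall $(w^2-1)^{m+1/2}$. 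The main obstacle I anticipate is bookkeeping across the two expansions to ensure that all square-root branches match and that the $\log$-singular coefficient on the combined sum is exactly $-1/(4\pi)$ and not, say, $-1/(2\pi)$ or $-1/\pi$; this is a check that the contribution from $\hat\Lambda_-^m$ to the coefficient of $\log(1-w^{-2})$ at $t_1=0$ vanishes, which follows from the $(1-w^{-2})^{m+1/2}$-type vanishing at the branch point but requires careful use of the connection formulas for the specific parameters $a=p+2m+1,\,b=p+2m+2,\,c=m+1$.
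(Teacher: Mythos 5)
Your proposal follows essentially the same route as the paper: the Conway--Cohl coaxial expansion of the Helmholtz kernel into the two series $\hat\Lambda_\pm^m$ of associated Legendre functions $Q^\mu_{m-1/2}(w)$, conversion to ${}_2F_1$ with argument $w^{-2}$, and the logarithmic connection formulas at $z=1$ (with the quadratic substitution $z=\tfrac12(1-\tfrac{1}{1-w^{-2}})$ for $\hat\Lambda_-^m$) to isolate the $-\tfrac{1}{4\pi}\log(1-w^{-2})$ singularity. Your observation that the leading log coefficient comes entirely from $\hat\Lambda_+^m$, with the $\hat\Lambda_-^m$ contribution vanishing at the branch point, is exactly the bookkeeping the paper carries out.
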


\noindent{\bf Proof of Lemma~\ref{lem:dm'm:h}:}
(i). It follows from \eqref{eq:LkSk} that
  \begin{equation*}
      B^{TE,TE}_{n'n;m} = 2\frac{(1 + {\cal O}[(\lambda_{mn'}^N)^{-1}])}{(\lambda^N_{mn'}\lambda_{mn}^N)^{3/4}}\left[k^2\langle {\cal S}_k \nabla_2\psi_{mn}^N,\nabla_2\bar{\psi_{mn}^N} \rangle_{A^h} - \lambda_{mn}^N\lambda_{mn'}^N \langle {\cal S}_k\psi_{mn}^N,\bar{\psi_{mn}^N} \rangle_{A^h} \right].
  \end{equation*}
  By Lemmas~\ref{lem:psimnN} and \ref{prop:FourierG}, we obtain
    \begin{align*}
&[\lambda_{mn}^N\lambda_{mn'}^N]^{1/4} \langle {\cal S}_k\psi_{mn}^N,\bar{\psi_{mn}^N} \rangle_{A^h}\\ 
      =&(nn')^{1/2}\pi(1+{\cal O}(h^2))\int_{0}^{1}dr\int_{0}^{1}F_m(1+hr,1+hr')\sqrt{1+hr}\sqrt{1+hr'}\\ 
                          &\left[\cos(n'\pi r') +\frac{\gamma_{m}(1+hr';h) h}{n'\pi}\sin(n'\pi r')+{\cal O}((n')^{-2}h^2)\right]\\
                          &\left[\cos(n\pi r) +\frac{\gamma_{m}(1+hr;h) h}{n\pi}\sin(n\pi r)+{\cal O}(n^{-2}h^2)\right]dr',
    \end{align*}
    where the prefactors in the ${\cal O}$-notations do not
    depend on $n,n'$. Using Lemma  \ref{prop:FourierG}, it follows that
    \[
      F_m(1+hr,1+hr') = -\frac{\log [h|r-r'|]}{2\pi\sqrt{(1+hr)(1+hr')}} +
      2h^2\log [h|r-r'|]\tilde{f}_m(1+hr,1+hr')  + \tilde{g}_m(1+hr,1+hr'),
    \]
    for some analytic functions $\tilde{f}_m$ and $\tilde{g}_m$. It is
    straightforward to verify using Lemma~\ref{prop:S0} that
    \[
[\lambda_{mn}^N\lambda_{mn'}^N]^{1/4} \langle {\cal
  S}_k\psi_{mn}^N,\bar{\psi_{mn}^N} \rangle_{A^h} = ({\cal S}_0[(n'\pi)^{1/2}\phi_{n'}],(n\pi)^{1/2}\phi_n)_{L^2(0,1)} + {\cal O}(h)\epsilon_{n'n;m}^{TE,TE;1},
    \]
    where the INF matrix $[\epsilon_{n'n;m}^{TE,TE;1}]$ is uniformly bounded for $h\ll1$. One similarly verifies that 
    \[
[\lambda_{mn}^N\lambda_{mn'}^N]^{-3/4} k^2\langle {\cal
  S}_k\nabla_2\psi_{mn}^N,\nabla_2\bar{\psi_{mn}^N} \rangle_{A^h} = {\cal O}(h)\epsilon_{n'n;m}^{TE,TE;2},
    \]
    where the INF matrix $[\epsilon_{n'n;m}^{TE,TE;2}]$ is uniformly bounded $h\ll1$, and (\ref{eq:n'n:TETE}) follows immediately. \\

    \medskip
    \noindent(ii-vi). The proof follows from similar arguments as in (i). We omit the details
    here. \\

    \medskip
    
    \noindent(vii). According to \eqref{eq:LkSk}, 
    \begin{align*}
      A_{mm} =& \, 2\cos(s_{m0}^Nl/2)\lambda_{m0}^N \langle S_k\psi_{m0}^N,\overline{\psi_{m0}^N} \rangle_{A^h}-\frac{2k^2\cos(s_{m0}^Nl/2)}{\lambda_{m0}}\langle S_k\nabla_2\psi_{m0}^N,\nabla_2\overline{\psi_{m0}^N} \rangle_{A^h}.
    \end{align*}
    Similar to the derivations in (i), Lemma~\ref{prop:FourierG} implies that
    \begin{align}
      \langle S_k\psi_{m0}^N,\overline{\psi_{m0}^N} \rangle_{A^h}=& \,      h\int_{0}^{1}\int_{0}^{1}F_m(1+hr,1+hr')\frac{(1+hr)(1+hr')}{2+h}(1 +{\cal O}(h^2))drdr' \nonumber \\
      =&-h\frac{\log h}{4\pi} + \alpha_m(k)h + \bi \beta_m(k)h + {\cal O}(h^2\log h) \label{eq:S_psi0_psi}
    \end{align}
    for some constants $\alpha_m(k)$ and $\beta_m(k)$, both of which are
    analytic in ${\cal B}$. By (50) in \cite{concoh10} and (14.8.9) in \cite{nist10},
\begin{align*}
  F_m(1+hr,1+hr') =& \frac{1}{\pi}\int_{0}^{\pi}\frac{(e^{\bi k\sqrt{h^2(r-r')^2 + 4(1+hr)(1+hr')\sin^2(\theta/2)}}-1)\cos(m\theta)}{\sqrt{h^2(r-r')^2 + 4(1+hr)(1+hr')\sin^2(\theta/2)}}d\theta \\
                   &+ \frac{Q_{m-1/2}\left[ [(1+hr)^2+(1+hr')^2]/(2(1+hr)(1+hr')) \right]}{2\pi\sqrt{(1+hr)(1+hr')}} \\
  =&\frac{1}{\pi}\int_{0}^{\pi}\frac{(e^{\bi k\sin(\theta/2)}-1)\cos(m\theta)}{\sin(\theta/2)}d\theta -\frac{1}{2\pi}\log [h|r-r'|] \\
  &+ \frac{1}{2\pi}[\log 2 -\gamma - \psi(m+1/2)] + o(1).
\end{align*}
Plugging the above into \eqref{eq:S_psi0_psi} and comparing both sides lead to (\ref{eq:alpham:k}) and (\ref{eq:betam:k}); the second
equality in \eqref{eq:betam:k} follows from \cite[Eq. (10.9.2)]{nist10}. Similarly one can show that
\begin{align*}
\langle S_k\nabla_2\psi_{m0}^N,\nabla_2\overline{\psi_{m0}^N} \rangle_{A^h} = m^2[-h\frac{\log h}{4\pi} + \tilde{\alpha}_m(k)h + \bi \tilde{\beta}_m(k)h] + {\cal O}(h^2\log h).
\end{align*}
Eq.~\eqref{eq:Amm:m=0} for $m=0$ can be verified similarly.

\bibliographystyle{plain}
\bibliography{wt}
\end{document}